\keywords{register automata, two-variable logic, decidability}
\newcommand{\wh}{\widehat}
\newcommand{\sz}[1]{}
\newcommand\muparrow{\mathord{\scriptstyle \uparrow}\hspace{0.7mm}}
\newcommand\mdownarrow{\mathord{\scriptstyle \downarrow}\hspace{0.7mm}}
\newcommand\msearrow{\mathord{\scriptstyle \searrow}\hspace{0.7mm}}
\newcommand\mswarrow{\mathord{\scriptstyle \swarrow}\hspace{0.7mm}}
\newcommand{\rw}[1]{r_{#1}}
\newcommand{\rsup}{r_{\sup}}
\newcommand{\rinf}{r_{\inf}}
\newcommand{\whA}{\wh{\cal A}}
\newcommand{\types}{\textrm{Types}}
\newcommand{\type}{\ensuremath{\textsc{tp}}}
\let\olduparrow\uparrow
\renewcommand{\uparrow}{\olduparrow \hspace{-2pt}}
\let\oldsearrow\searrow
\renewcommand{\searrow}{\oldsearrow \hspace{-2pt}}
\let\oldswarrow\swarrow
\renewcommand{\swarrow}{\oldswarrow \hspace{-3pt}}
\renewcommand{\subset}{\subseteq}
\theoremstyle{plain}
\newenvironment{proofof}[1]{\begin{proof}[Proof of~#1]}{\end{proof}}
\newtheorem{theorem}[thm]{Theorem}
\newtheorem{corollary}[thm]{Corollary}
\newtheorem{example}[thm]{Example}
\newtheorem{lemma}[thm]{Lemma}
\newtheorem{proposition}[thm]{Proposition}
\newcommand{\fo}{\ensuremath{\mathrm{FO}}\xspace}
\newcommand{\fotwo}{\ensuremath{\mathrm{FO}^2}\xspace}
\newcommand{\mso}{\ensuremath{\mathrm{MSO}}\xspace}
\newcommand{\emsotwo}{\ensuremath{\mathrm{EMSO}^2}\xspace}
\newcommand{\esotwo}{\ensuremath{\mathrm{ESO}^2}\xspace}
\newcommand{\R}{\mathbb{R}}
\newcommand{\Q}{\mathbb{Q}}
\newcommand{\N}{\mathbb{N}}
\newcommand{\Rbar}{\overline{\R}}
\newcommand{\df}{\stackrel{\mathit{df}}{=}}
\newcommand\set[1]{\ensuremath{\{#1\}}}
\newcommand{\setof}[2]{\set{#1\mid#2}}
\newcommand{\from}{\colon}
\newcommand{\func}[3]{\mathop{\mathchoice{{#1}\quad\colon\quad{#2}\quad\longrightarrow\quad {#3}}{{#1}\colon{#2}\to{#3}}{script}{sscript}
  }}
\newcommand{\data}{\mathbf D}
\newcommand {\frakA} {\ensuremath{\mathfrak{A}}}
\newcommand {\calA}      {{\cal A}\xspace}
\newcommand {\calB}      {{\cal B}\xspace}
\newcommand {\calC}      {{\cal C}\xspace}
\newcommand {\calK}      {{\cal K}\xspace}
\newcommand {\calP}      {{\cal P}\xspace}
\newcommand {\calR}      {{\cal R}\xspace}
\newcommand {\calS}      {{\cal S}\xspace}
\newcommand {\calT}      {{\cal T}\xspace}
\newcommand {\calL}      {{\cal L}\xspace}
\newcommand{\aut}[1]{\cal{#1}}
\newcommand{\cal}[1]{\mathcal{#1}}
\tikzstyle{mnode}=[
\tikzstyle{blueNode}=[
\tikzstyle{redNode}=[
\tikzstyle{greenNode}=[
\tikzstyle{mnodeinvisible}=[
\tikzstyle{invisible}=[
\tikzstyle{invisiblel}=[
\tikzstyle{invisibleEdge}=[
\tikzstyle{nameNode}=[
\tikzstyle{namingNode}=[
\tikzstyle{mEdge}=[
\tikzstyle{dDashedEdge}=[
\tikzstyle{dEdge}=[
\tikzstyle{dhEdge}=[
\tikzstyle{uEdge}=[
\tikzstyle{blueEdge}=[
\tikzstyle{redEdge}=[
\tikzstyle{greenEdge}=[
\tikzstyle{uhEdge}=[
\tikzstyle{cEdge}=[
\tikzstyle{dotsEdge}=[
\tikzstyle{snakeEdge}=[
\tikzstyle{snakeEdgea}=[
\tikzstyle{class rectangle}=[
\tikzstyle{mline}=[
\tikzstyle{mainclass rectangle}=[
\newcommand{\mnodedrawcolor}{black!80}
\newcommand{\mnodefillcolor}{black!40}
\tikzstyle{background rectangle}=[
\begin{document}

\title[Register Automata with Extrema Constraints]{Register Automata with
  Extrema Constraints,\texorpdfstring{\\}{ }and an Application to Two-Variable Logic}

\titlecomment{The short version of this article appeared in the conference proceedings of LICS 2020.}

\author{Szymon Toru\'nczyk}[a]	\address{University of Warsaw}	\email{szymtor@mimuw.edu.pl}

\thanks{%
The work of S.~Toru\'nczyk was supported by Poland's National Science Centre grant 2018/30/E/ST6/00042.
The work of T.~Zeume was supported by the Deutsche Forschungsgemeinschaft
(DFG, German Research Foundation), grant 448468041.}

\author{Thomas Zeume}[b]	\address{Ruhr University Bochum}	\email{thomas.zeume@rub.de}

\begin{abstract}
	We introduce a model of register automata
over infinite trees with extrema constraints. Such an automaton  can store elements of a linearly ordered domain in its registers, and
can compare those values to the suprema and infima of register values in subtrees. We show that the emptiness problem for these automata is decidable. 

As an application, we prove decidability of the countable satisfiability problem for two-variable logic in the presence of a tree order, a linear order, and arbitrary atoms that are MSO definable from the tree order. As a consequence, the satisfiability problem for two-variable logic with arbitrary predicates, two of them interpreted by linear orders, is decidable.  \end{abstract}

\maketitle

\section{Introduction}\label{section:introduction}
Automata for words and trees find applications in diverse areas such as logic, verification, and database theory (see, e.g., \cite{Thomas97, Baier08, Neven02}). Applications to logic include proofs of decidability of the satisfiability problem for various logics, and this is the theme of this paper.
Many variations of automata for specific applications have been introduced, among them automata over infinite words or trees, with output, timed automata, or automata working on words and trees whose positions are annotated by data from an infinite domain. In this article we study a variant of the latter family of automata called \emph{register automata}~\mbox{\cite{KaminskiF94, NevenSV04}}.

In its basic form, a register automaton extends a finite state automaton by registers which can store data values from an infinite domain $\data$. The inputs are \emph{data words}, i.e., words labeled by pairs consisting of a label from a finite alphabet $\Sigma$, and a data value from $\data$. When reading a data word, a register automaton can store values from~$\data$ in its registers. Its state depends on the previous state, the label at the current position as well as the relationship of the stored register values to the data value at the current position. Here, depending on the automaton model at hand, register values can be tested for equality, compared with respect to some linear order on $\data$, or others. 

In this article we study a variant of register automata for infinite \emph{data trees}, where the data values form a complete dense total order.
 In addition to the ability of comparing data values according to the linear order of $\data$, our automaton model allows to compare register values to the suprema and infima over values of registers in a subtree. 
 
We show that the emptiness problem for this automaton model can be solved algorithmically. 

\begin{theorem}\label{theorem:emptiness}
  The emptiness problem for tree register automata with suprema and infima constraints is decidable. 
\end{theorem}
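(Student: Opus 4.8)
The plan is to reduce emptiness of a tree register automaton with extrema constraints to emptiness of an ordinary parity automaton over infinite trees, by abstracting the concrete data values into a finite alphabet of \emph{order profiles}. Fix such an automaton with $k$ registers over the domain $\data$, which we may take to be a complete dense linear order such as $[0,1]$; completeness guarantees that every supremum and infimum referred to by the constraints exists in $\data$. A profile at a node records: the control state, the quantifier-free type of the $k$ register values (the linear preorder they induce, encoding which registers are equal and their relative order), and, for each register $i$, each direction $d \in \{\sup,\inf\}$, and each relevant subtree --- the one rooted at the node and those rooted at its children --- a flag saying how the value of register $i$ at the node compares ($<$, $=$, or $>$) to the value of $d$ taken over the values of register $i$ across that subtree, together with a bit recording whether that extremum is \emph{attained} somewhere in the subtree or only approached in the limit. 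There are finitely many profiles, so a profile-labelled tree is an ordinary labelled tree.

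First I would isolate the \emph{local} consistency conditions on a profile-labelled tree: that the control state and register type at a node arise from a legal transition of the automaton, given the extrema flags it sees; that the extrema flags at a node are correctly computed from those at its children together with the node's own register values (for instance, the $\sup$ over the subtree at a node is the maximum of the node's value and the $\sup$s claimed at the children, and the attainment bit propagates accordingly); and that the automaton's acceptance condition holds on every branch. This is a conjunction of an $\omega$-regular branch condition with finitely many purely local constraints, hence recognizable by a parity tree automaton $\calB$.

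The heart of the argument is a \emph{realizability} theorem: a profile-labelled tree accepted by $\calB$ comes from an actual data tree with an actual accepting run --- i.e.\ one can assign concrete values in $\data$ to every register at every node so that all claimed comparisons, all attainment bits, and all extrema genuinely hold --- if and only if it additionally satisfies an infinitary condition along every infinite branch. Constructing the assignment is where density and completeness of $\data$ are used: along a branch one must reconcile a possibly infinite chain of nested $\sup$/$\inf$ constraints squeezing a register value towards a limit; density provides room to insert fresh values strictly between any two already chosen, and completeness provides the limit, but one must check that this limit is consistent with everything claimed further down the branch (strictly below, equal to, or strictly above a given register value; attained or not). I expect the main obstacle to be pinning down exactly which finite information about a branch is enough to guarantee that this squeezing succeeds --- that is, expressing the branch condition as an $\omega$-regular (equivalently, deterministic parity) condition, perhaps via a Ramsey-type stabilization argument on the sequence of profiles along the branch, and then proving the realizability equivalence in both directions.

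Once the branch condition is shown $\omega$-regular it can be merged into the acceptance condition of $\calB$, producing a single parity tree automaton whose language is nonempty exactly when the original register automaton is nonempty; emptiness of parity tree automata is decidable, which gives Theorem~\ref{theorem:emptiness}. As a preliminary simplification to keep the bookkeeping tractable I would exploit the symmetry between $\inf$ and $\sup$: an $\inf$-constraint turns into a $\sup$-constraint under the order-reversing map on $\data$, so it suffices to build the profile machinery for $\sup$-constraints and then close under this symmetry.
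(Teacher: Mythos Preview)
Your high-level strategy --- abstract to a finite alphabet and reduce to parity tree automata --- matches the paper's endpoint, but you are trying to do in one jump what the paper does in two, and your proposal is missing the technical device that makes the first jump go through.

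The paper's route: first, from the TRASI $\mathcal{A}$ build a tree register automaton $\widehat{\mathcal{A}}$ (still with concrete data values, but \emph{no} extrema constraints) that is nonempty iff $\mathcal{A}$ is; second, abstract $\widehat{\mathcal{A}}$ to a parity automaton over the finite alphabet of order types of the register tuples at a node \emph{together with those at its two children}. Your sketch is close to the second step but does not supply the first.

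The gap is concrete. Your profile records, at each node, how the node's registers compare to one another and to the subtree extrema; you then hope realizability is an $\omega$-regular condition along each branch. It is not, for this profile. Take the paper's own example: $r_{\sup}(v)=5$ at every node, $r(v)=4-1/d$ at depth $d$, and a second register with $s(w)=4.5$ at a single node $w$, the transition relation enforcing $r(w)<s(w)<r_{\sup}(w)$. Every local profile check passes, and every branch looks unobjectionable --- yet the tree is unrealizable, because $s(w)$ sits strictly between the true limit $4$ of the $r$-values and the claimed extremum $5$. The obstruction is a single off-branch value lying in the gap between the approached limit and the claimed supremum; no per-branch condition on your profiles sees it.

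The paper's fix is a \emph{certificate}: a partition of the tree into finite paths, cyclically labelled by pairs (register, sign), each path ending at a witness node whose register value is ``close'' to the claimed extremum --- where ``close'' means the interval between the witness value and the extremum contains no register value of the path's start node or its parent. Three lemmas do the work: every genuine run has a certificate; any locally- and extrema-consistent pre-run with a certificate can be massaged into a genuine run with the same states (via a limit of monotone bijections, each shrinking one gap below $1/n$); and a tree register automaton can verify a guessed certificate (store the target value in one extra register, check the gap condition by local inequalities, and check path-finiteness by a parity condition). This explicit witness structure woven into the tree is what replaces the ``Ramsey-type stabilization'' you were hoping for.

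Two smaller points. Your profile omits the order type of the registers at a node \emph{relative to those at its children}; even the extrema-free second step needs that (Section~4.3), not just the per-node preorder. And the TRASI model allows each register an associated regular language $L$, so that $\sup r$ ranges only over $L$-descendants; handling this requires a more elaborate certificate (Section~4.2) that tracks, via parallel ``tapes'', all runs of the language automaton along each branch --- a complication your proposal does not mention.
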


As an application of the above result, we consider the satisfiability problem for variants of two-variable logic. In two-variable first-order logic (short: \fotwo) formulas may use only two variables $x$ and $y$ which can be reused. The extension by existential second-order quantifiers is denoted by \esotwo, or \emsotwo if only monadic such quantifiers are allowed. Two-variable first-order logic is reasonably expressive and enjoys a decidable satisfiability problem \cite{Mortimer75, Scott1962}. However, an easy application of a two-pebble Ehrenfeucht-Fra\"{\i}sse game yields that $\fotwo$ cannot express transitivity of a binary relation. For this reason, $\fotwo$ has been studied on structures where some special relations are required to be transitive.

A particular interest has been in deciding the satisfiability problem for such extensions. Recently a decision procedure for the finite satisfiability problem for $\esotwo$ with one transitive relation and for $\esotwo$ with one partial order have been obtained \cite{Pratt-Hartmann18}. Previously $\esotwo$ with two equivalence relations \cite{KieronskiO12, KieronskiT09} and $\esotwo$ with two ``forest'' relations have been shown to be decidable \cite{CharatonikW13}. While it is known that $\emsotwo$ with three equivalence relations is undecidable, this problem is still open for three ``forest'' relations. For $\esotwo$ with two linear orders, only a decision procedure for the finite satisfiability problem was known \cite{SchwentickZ12, ZeumeH16}. The satisfiability problem and the finite satisfiability problem for $\emsotwo$ is undecidable for three linear orders~\cite{Kieronski11}.

The question whether two-variable logic with two linear orders is decidable for general (not necessarily finite) structures has been left open in \cite{SchwentickZ12, ZeumeH16}, and is settled here affirmatively. Beyond settling the question itself, we believe that the techniques developed here might also be interesting in their own rights and applied to other problems, much like in the case of finite satisfiability, where the used techniques were later exploited in work by Dartois, Filiot, and Lhote on transducers \cite{DartoisFL18} and in recent work by Lutz, Jung, and Zeume in relation to description logics \cite{JungLZ20}.

In fact, we prove a more general result. A partial order $(\data, <)$ is a \emph{tree order} (also called a \emph{semi-linear order}) if the set $\{y \mid y < x\}$ is totally ordered by $<$ for each $x \in  \data$, and any two elements $x,y$ have some lower bound.

\begin{theorem}\label{theorem:satisfiability}
  Countable satisfiability of $\;\esotwo$ with one tree order, one linear order, and access to MSO-defined atoms over the tree order, is decidable. 
\end{theorem}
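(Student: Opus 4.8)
The plan is to reduce countable satisfiability to the emptiness problem of Theorem~\ref{theorem:emptiness}: given an $\esotwo$ sentence $\Phi$ over the vocabulary consisting of a tree order $<$, a linear order $\prec$, the $\mso$-defined atoms over $<$, and the existentially quantified relations, I would construct a tree register automaton $\calB$ with suprema and infima constraints such that $L(\calB)\neq\emptyset$ if and only if $\Phi$ has a countable model. First comes preprocessing. Existential second-order quantifiers over monadic relations are absorbed into a finite labelling alphabet $\Sigma$; and since in a two-variable formula a relation symbol is only ever applied to tuples over $x$ and $y$, every remaining existentially quantified relation may be assumed binary. Putting the first-order part in Scott normal form, the problem becomes: is there a countable structure $\frakA$ carrying a tree order $<$, a linear order $\prec$, the correct $\mso$-atoms over $<$, a $\Sigma$-labelling, and finitely many auxiliary binary relations $R_1,\dots,R_m$, with $\frakA \models \forall x\forall y\,\alpha \wedge \bigwedge_{i=1}^n \forall x\exists y\,\beta_i$ and $\alpha,\beta_i$ quantifier-free? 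As usual for two-variable logic, such a structure is described, for the purposes of $\Phi$, by the $1$-type of each element and the $2$-type of each ordered pair, where a $2$-type records the two $1$-types, the $<$-relation between the elements (ancestor, descendant, incomparable, or equal), their $\prec$-relation, the $\mso$-atoms they induce, and the bits of $R_1,\dots,R_m$.

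Next I would fix an encoding of countable models as infinite data trees over a complete dense domain $\data$. A countable linear order embeds into $(\Q,<)$, so the linear order $\prec$ is represented by assigning each element a data value in $\Q\subseteq\data$ that realizes $\prec$; the tree order is represented by the tree itself. The sibling structure of the tree is not part of the vocabulary and is invisible to $\mso$ over $<$, so we are free to impose one, and an arbitrary countable tree order can be normalized into a shape the automaton reads (infinite descending ancestor chains and dense branching being accommodated by inserting auxiliary skeleton nodes). The $\mso$-atoms over $<$ are $\mso$-definable by hypothesis, hence computed by a finite tree automaton on the skeleton -- this is the sole use of the $\mso$-definability assumption -- and each node is additionally labelled by its guessed $1$-type. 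The automaton $\calB$ begins by checking that the input is a well-formed such encoding (in particular that distinct elements carry distinct data values).

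The heart of the matter is to make $\calB$ accept exactly the encodings of models of the Scott sentence, and this is where the extrema constraints are used. For the universal conjunct $\forall x\forall y\,\alpha$, whether a given parameter tuple (two $1$-types, a $<$-relation, a $\prec$-relation, a tuple of $\mso$-atoms) admits \emph{some} choice of the $R_j$-bits making $\alpha$ true is a static question about the sentence; $\calB$ must therefore forbid the finitely many ``bad'' parameter tuples from being realized by any pair, which is a forbidden-pattern condition -- e.g.\ ``below a $\tau$-node there is no $\tau'$-node with a bad $\prec$-position and bad $\mso$-atoms'' -- detectable once $\calB$ tracks, for each of the finitely many combinations of a $1$-type $\tau$ with the $\mso$-information relevant at a node, the infimum and supremum of the data values of matching nodes in the subtree (a subtree extremum), in the complement of the subtree, and along the proper ancestor chain. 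For the existential conjuncts $\forall x\exists y\,\beta_i$: at an element $v$, for each $i$, $\calB$ must certify a witness $w$ of a suitable $1$-type in a suitable $<$-position relative to $v$ and a suitable $\prec$-position, such that the $2$-type of $(v,w)$ can be completed -- by a choice of $R_j$-bits and consistently with the forced $\mso$-atoms -- to be simultaneously $\beta_i$-good and $\alpha$-good. Given the tracked quantities above, the existence of such a $w$ is a local numeric comparison at $v$: the subtree extrema come from the suprema/infima constraints, while the complement and ancestor extrema are accumulated top-down along branches by combining the value inherited from the parent with the subtree extrema of the off-path siblings and the parent's own data value.

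The step I expect to be the main obstacle is the \emph{realizability} direction: showing that these purely local, type-level conditions, which is all $\calB$ can enforce, actually guarantee a global countable model. This amounts to building an honest structure from a locally consistent labelled data tree, the delicate part being to fix all of the auxiliary binary relations $R_j$ coherently on all pairs at once while reconciling the witness demands of the various $\forall x\exists y\,\beta_i$ conjuncts -- most awkwardly when two elements each require the other as a witness. I would adapt the classical finite-model arguments for two-variable logic: distinguish $1$-types realized only boundedly often (``kings'') from those realized infinitely often (``pawns''), route pawn-to-pawn witnesses through the abundant supply of copies so as to avoid mutual-witnessing conflicts, and treat the bounded king part by hand; the novelty is carrying this out on an infinite tree while simultaneously respecting the tree order, the data order, and the $\mso$-atoms. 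Together with the comparatively routine converse -- decoding an accepting run of $\calB$ back into a model -- this shows that $\Phi$ is countably satisfiable if and only if $L(\calB)\neq\emptyset$, and Theorem~\ref{theorem:emptiness} finishes the proof.
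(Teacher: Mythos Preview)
Your overall strategy---encode countable models as infinite data trees and reduce to TRASI emptiness---is the paper's strategy too, and the way you propose to check universal and existential constraints via tracked suprema/infima is essentially Lemma~\ref{lemma:automaton}. Two differences in organisation are worth noting. First, the paper does not try to normalise an arbitrary countable tree order by ad hoc skeleton insertion; instead it proves (Lemma~\ref{lemma:interpretation}) that a \emph{universal} tree order is first-order interpretable in the complete binary tree, and then relativises the given formula through that interpretation, thereby reducing once and for all to the case where $<_1$ is the ancestor order of the full binary tree. Second, the $\mso$-atoms are handled not by ``running a tree automaton'' but by annotating every node with enough rank-$q$ $\mso$-type information about the three surrounding pieces of the tree and using compositionality (Lemmas~\ref{lemma:composition} and~\ref{lemma:compositionB}); this is what lets the binary $\mso(<_1)$-types be captured by the regular path languages attached to registers. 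A further technical point you glossed over: a TRASI cannot verify that all data values are distinct. The paper only checks a ``weakly diverse'' condition and argues separately that acceptance of some tree entails acceptance of one with distinct values.

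The more substantive divergence concerns the existentially quantified \emph{binary} relations. The paper's precise restatement (Theorem~\ref{theorem:maintheorem}) is for $\emsotwo$, not $\esotwo$: it simply never handles auxiliary binary relations in the tree-order setting. The $\esotwo\to\emsotwo$ reduction (Section~\ref{section:reduction:esotwo:emsotwo}) is carried out only for \emph{two linear orders}, via a cloning argument (Lemma~\ref{lemma:cloningOfTwoLinearOrders}) that inserts a copy of a non-King element immediately next to it in both orders. Your kings/pawns plan is the right shape, but the step ``route pawn-to-pawn witnesses through the abundant supply of copies'' presupposes exactly such a cloning lemma for a tree order together with $\mso(<_1)$-atoms, and that is not obvious: inserting a fresh node into a tree order can alter the $\mso(<_1)$-types of pairs involving it (and of other pairs, if the insertion changes the ambient tree), so Condition~(i) in the paper's definition of cloning---no new binary types---need not hold. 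This is a genuine gap in your proposal; the paper avoids it by restricting the tree-order result to the monadic fragment.
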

See Theorem~\ref{theorem:maintheorem} for a more precise restatement.

This theorem can alternatively be viewed from the perspective of $\esotwo$ on data trees where all nodes are annotated by \emph{distinct}, linearly ordered data values. It then states that $\esotwo$ with access to the tree structure via MSO-definable atoms and with the ability to compare data values with respect to the linear order on data values is decidable over such trees. This should be compared with the decidability of $\emsotwo$ on data trees with possibly non-distinct data values, access to the tree structure via the children and sibling relation, as well as the ability to test whether two data values are equal \cite{BojanczykMSS09}.

An immediate consequence of Theorem \ref{theorem:satisfiability} is the decidability of satisfiability of $\esotwo$ with two linear orders, because a linear order can be axiomatised from a tree order easily.  
\begin{corollary}\label{cor:satisfiability}
  Satisfiability of $\;\esotwo$ with two linear orders is decidable. 
\end{corollary}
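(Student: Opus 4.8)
The plan is to reduce satisfiability of $\esotwo$ with two linear orders to the countable satisfiability problem settled by Theorem~\ref{theorem:satisfiability}, using the observation that every linear order is in particular a tree order. Given an $\esotwo$ sentence $\varphi$ over a signature containing two distinguished binary symbols $<_1,<_2$ required to be linear orders (and possibly further, unconstrained relation symbols), I would form
\[
  \varphi' \;=\; \varphi \;\wedge\; \forall x\,\forall y\,\bigl(x=y \,\vee\, x<_1 y \,\vee\, y<_1 x\bigr),
\]
and read it in the setting of Theorem~\ref{theorem:satisfiability} by declaring $<_1$ to be the tree order and $<_2$ the linear order. The atom $x<_1 y$ is MSO-definable over the tree order $<_1$ --- it is the tree-order atom itself --- so $\varphi'$ is a legitimate instance, and the additional relation symbols fall under the ``arbitrary predicates'' permitted there (equivalently, they can be viewed as existentially second-order quantified). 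The extra conjunct forces the tree order to be total, and a total tree order is precisely a linear order; hence a structure is a model of $\varphi'$ in the sense of Theorem~\ref{theorem:satisfiability} if and only if it is a model of $\varphi$ in which $<_1$ and $<_2$ are linear orders.

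It remains to pass from countable satisfiability to plain satisfiability. For this I would invoke the downward L\"owenheim--Skolem theorem: writing $\varphi' = \exists R_1\cdots\exists R_k\,\psi$ with $\psi\in\fotwo$, a model of $\varphi'$ yields, after fixing witnessing relations, a first-order model of $\psi$ over the expanded signature, which has a countable elementary substructure satisfying $\psi$; its reduct is a countable model of $\varphi'$. Thus satisfiability and countable satisfiability coincide for $\esotwo$ sentences, and by Theorem~\ref{theorem:satisfiability} both are decidable for $\varphi'$. Since $\varphi'$ is satisfiable exactly when $\varphi$ is satisfiable over two linear orders, the corollary follows.

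I do not expect any step to present a real difficulty: the mathematical content of the corollary is entirely contained in Theorem~\ref{theorem:satisfiability}. The only points deserving a line of justification are that ``access to MSO-defined atoms over the tree order'' subsumes plain use of the tree-order relation, so that $\varphi$ survives the renaming, and the L\"owenheim--Skolem step, which is routine but necessary because Theorem~\ref{theorem:satisfiability} is stated only for countable structures.
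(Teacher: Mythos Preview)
Your proposal is correct and matches the paper's own derivation: the paper also obtains the corollary directly from Theorem~\ref{theorem:satisfiability} by axiomatising in $\fotwo$ that the tree order $<_1$ is total, hence linear. Your explicit L\"owenheim--Skolem step, passing from general to countable satisfiability, is exactly what the paper spells out a little later (in deriving Corollary~\ref{corollary:emsotwo:twolinearorders}); at the point where Corollary~\ref{cor:satisfiability} is stated the paper leaves this implicit.
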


In Section \ref{section:FO2toRA} these results are stated more formally, and other consequences of Theorem \ref{theorem:satisfiability} are discussed. We now briefly discuss our proof method.

 Theorem~\ref{theorem:satisfiability} is proved by
 a reduction to the emptiness problem for our variant of register automata. To explain this reduction, let us consider the case of finite structures first. The finite satisfiability problem for two-variable logic with two linear orders,  $\fotwo(<_1, <_2)$, can be reduced to an emptiness test for register automata on words in two steps: (1) exhibit a correspondence between structures with two linear orders and input data words for register automata, and (2) verify the conditions imposed by a $\fotwo(<_1, <_2)$-formula with a register automaton. For (1), finite structures with two linear orders $<_1$ and $<_2$ can be identified with finite data words with \emph{distinct} data values from the rationals. Here, the unary types of single elements of the structure are represented as labels of the data word, the order $<_1$ corresponds to the linear order of positions in the word, and $<_2$ to the usual order of the rationals on data values, i.e., if positions $x$ and $y$ have data values $p$ and $q$, respectively,  then $x <_2 y$ if and only if $p < q$ (where $<$ is the usual linear order of the rationals). We refer to Figure~\ref{figure:structureanddatawords} for an illustration of this correspondence.

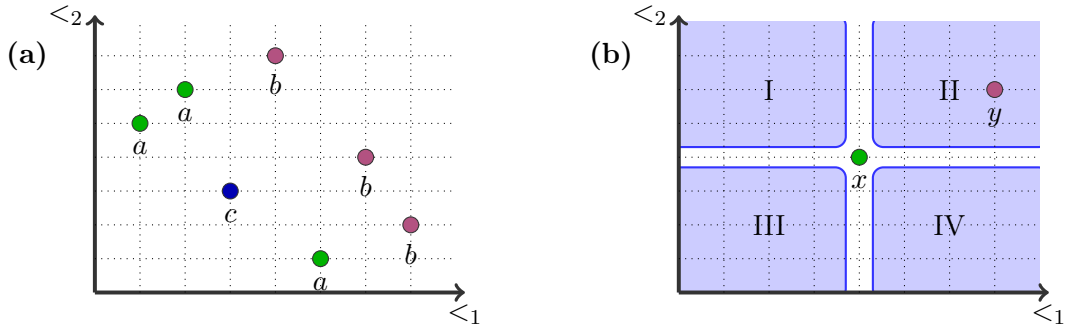
\begin{figure}[t] 
\centering 
    \begin{subfigure}[t]{0.5\textwidth}
      \begin{tikzpicture}[
        xscale=0.6,
        yscale=0.45
      ]
        \node (tmp) at (-1.5, 7) {\textbf{(a)}};
        
\draw [->, line width=1.5pt, black!80] (0,0) -- (8.2,0) node[black, below] {$<_1$};
        \draw [->, line width=1.5pt, black!80] (0,0) -- (0,8.2) node[black, left, sloped] {$<_2$};

\foreach \x in {1,2,...,7}
        \draw[dotted, black] (\x,0) -- (\x,8) ;

        \foreach \y in {1,2,...,7}
        \draw[dotted, black] (0,\y) -- (8,\y);

\node (tmp) at (1, 5)[mnode, label={below:$a$}, fill=black!30!green] {};
        \node (tmp) at (2, 6)[mnode, label={below:$a$}, fill=black!30!green] {};
        \node (tmp) at (3, 3)[mnode, label={below:$c$}, fill=black!30!blue] {};
        \node (tmp) at (4, 7)[mnode, label={below:$b$}, fill=black!30!magenta] {};
        \node (tmp) at (5, 1)[mnode, label={below:$a$}, fill=black!30!green] {};
        \node (tmp) at (6, 4)[mnode, label={below:$b$}, fill=black!30!magenta] {};
        \node (tmp) at (7, 2)[mnode, label={below:$b$}, fill=black!30!magenta] {};
        
\end{tikzpicture}
    \end{subfigure}
    \begin{subfigure}[t]{0.45\textwidth}
      \begin{tikzpicture}[
        xscale=0.6,
        yscale=0.45
      ]
        \node (tmp) at (-1.5, 7) {\textbf{(b)}};
\draw [->, line width=1.5pt, black!80] (0,0) -- (8.2,0) node[black, below] {$<_1$};
        \draw [->, line width=1.5pt, black!80] (0,0) -- (0,8.2) node[black, left, sloped] {$<_2$};

\foreach \x in {1,2,...,7}
        \draw[dotted, black] (\x,0) -- (\x,8) ;

        \foreach \y in {1,2,...,7}
        \draw[dotted, black] (0,\y) -- (8,\y);

\node (tmp) at (4, 4)[mnode, label={below:$x$}, fill=black!30!green] {};
        \node (tmp) at (7, 6)[mnode, label={below:$y$}, fill=black!30!magenta] {};
        \node (tmp) at (2, 2) {III};
        \node (tmp) at (2, 6) {I};
        \node (tmp) at (6, 2) {IV};
        \node (tmp) at (6, 6) {II};

        \begin{pgfonlayer}{background}
          \begin{scope}
            \clip(0,0) rectangle (4,4);
            \fill[draw=blue!80, fill=blue!20, thick, rounded corners] (-0.2,-0.2) rectangle (3.7,3.7);
          \end{scope}
          \begin{scope}
            \clip(8,8) rectangle (4,4);
            \fill[draw=blue!80, fill=blue!20, thick, rounded corners] (8.2,8.2) rectangle (4.3,4.3);
          \end{scope}
          \begin{scope}
            \clip(8,0) rectangle (4,4);
            \fill[draw=blue!80, fill=blue!20, thick, rounded corners] (8.2,-0.2) rectangle (4.3,3.7);
          \end{scope}
          \begin{scope}
            \clip(0,8) rectangle (4.0,4.0);
            \fill[draw=blue!80, fill=blue!20, thick, rounded corners] (-0.2,8.2) rectangle (3.7,4.3);
          \end{scope}

        \end{pgfonlayer}
      \end{tikzpicture}
    \end{subfigure}
        \caption{\textbf{(a)} A structure with two linear orders represented as a point set in the two-dimensional plane. The structure satisfies the existential constraint $\forall x \exists y (a(x) \rightarrow (b(y) \wedge x <_1 y \wedge x <_2 y))$, and the universal constraint $\forall x \forall y \neg(a(x) \wedge b(y) \wedge y <_1 x \wedge y <_2 x)$. A linearly ordered data word corresponding to the structure is $(a, 5)(a, 6)(c, 3)(b, 7)(a, 1)(b, 4)(b, 2)$. \textbf{(b)} In structures with two linear orders, constraints on an element $x$ can be imposed in directions (I)--(IV). For instance, direction (II) corresponds to $x <_1 y \wedge x <_2 y$.}\label{figure:structureanddatawords}

\end{figure}

Instead of directly verifying conditions imposed by an $\fotwo(<_1, <_2)$-formula, it is convenient to first convert such formulas  into a set of \emph{existential} and \emph{universal constraints}~\cite{SchwentickZ12}. An existential constraint enforces that for each element $x$ of unary type $\sigma$ there is an element $y$ of unary type $\tau$, such that $y$ is in a specified direction from $x$ with respect to $<_1$ and $<_2$, for instance in direction $x <_1 y$ and $x <_2 y$. A universal constraint can forbid patterns, that is, it can state that it is not the case that $x$ and $y$ are elements with unary types $\sigma$ and $\tau$, respectively, and $y$ is in a specified direction of $x$ (see Figure \ref{figure:structureanddatawords}).

Such constraints can be easily verified by a register automaton. To this end, the automaton has registers $r^\sigma_{\max, \leftarrow}$, $r^\sigma_{\min, \leftarrow}$, $r^\sigma_{\max, \rightarrow}$, and $r^\sigma_{\min, \rightarrow}$, for each label $\sigma$, intended to store the maximal and minimal data value of $\sigma$-labeled positions to the left and right of the current position, respectively. The content of these registers can be guessed and verified by the automaton. Then, for determining whether an existential constraint such as the one above is satisfied, the automaton verifies that, for each $\sigma$-labeled position $x$, the register $r^\tau_{\max,\rightarrow}$ stores a value larger than the data value at $x$. Similarly universal constraints can be checked. Technically, the register automaton also has to ensure that all data values are distinct. While this is not possible in general, it can verify a weaker condition that guarantees that if some data word is accepted then so is one with distinct data values.

The above rough sketch can be used to obtain a new proof of decidability of  finite satisfiability problem of $\fotwo$ with two linear orders \cite{SchwentickZ12}. This automata-based approach generalizes well to various other results in this vein, by 
considering various domains $\data$ or various shapes of the input structures. The present paper is an illustration of the power of this approach.

To solve the general satisfiability of $\fotwo(<_1, <_2)$ for linear orders $<_1$ and $<_2$, register automata need to be generalized in two directions. First, to allow infinite domains, we pass from finite data words to infinite objects with data, such as $\omega$-words or infinite trees.
Considering $\omega$-words allows to study the case when $<_1$ is isomorphic to the naturals with their order, which does not cover all infinite orders. To encompass arbitrary countable linear orders, we  move to infinite trees, as any countable order is isomorphic to a subset of the infinite complete binary tree with the left-to-right order on its nodes. More generally, any countable tree order can be encoded in the complete binary tree in a certain sense, so moving to infinite trees allows us to consider arbitrary countable tree orders $<_1$.

For the moment, let us focus on the case when $<_1$ is considered to be isomorphic to the naturals with their order, in which case we consider $\omega$-words with data. Now, existential constraints coming from the $\fotwo$ formula can enforce supremum-like conditions, e.g., they can require that (1) every $\sigma$-labeled element $x$ has a $\tau$-labeled element $y$ with $x <_1 y$ and $x <_2 y$ and vice versa, that is, every $\tau$-labeled element $x$ has a $\sigma$-labeled element $y$ with $x <_1 y$ and $x <_2 y$, and that (2) there is a $\rho$-labeled position that bounds from above all these $\tau$ and $\sigma$-labeled positions with respect to $<_2$. This is why we need 
to consider register automata over $\omega$-words with data, which
can access at each position the infimum and supremum of the values stored in a specified register in all future positions.

Finally, to solve the case of arbitrary countable orders $<_1$, we move from $\omega$-words to infinite trees, as discussed above. Now, the infimum and supremum needs to be taken over all nodes of the tree which are descendants of the current node. 
     This leads us to the study of tree register automata with suprema and infima constraints for infinite, binary trees with data. These are introduced in Section~\ref{section:registerAutomata}.

     For  the variants of the satisfiability problem briefly mentioned above -- whether $<_1$ is  a finite order, or isomorphic to the order of the naturals, or a countable order, or a tree order -- 
     the reduction of the satisfiability problem to the emptiness problem for the corresponding variant of automata always follows essentially the simple idea described above. This is described in Section~\ref{section:FO2toRA}. 
     
     On the other hand, 
      deciding the emptiness problem for the corresponding models of register automata  discussed above becomes more involved as the models are generalized. The overall idea of  deciding emptiness tree register automata with extrema constraints is to (1) reduce to the emptiness problem for tree register automata without extrema constraints and subsequently (2) reduce the emptiness problem for such automata to the emptiness problem of parity automata over infinite trees. This is described in Section~\ref{section:decidability}. Decidability then follows from the fact that the emptiness problem for parity automata is  decidable~\cite{Rabin72}. 
      
This article is the long version of \cite{TorunczykZ20}. It in particular provides full proofs for all results.

\section{Preliminaries}\label{section:preliminaries}
In this section we recapitulate basic notions and fix some of our notations. 

A \emph{tree} $t$ is a prefix-closed subset of $\{0, 1\}^*$, and each element $v \in t$ is called a node. The \emph{ancestor order} of a tree $t$, denoted $<_\text{anc}$, is the strict prefix order on $t$.

We write $\mswarrow v$ and $\msearrow v$ for the left child $v0$ and the right child $v1$ of $v$; and denote the parent of $v$ by~$\muparrow v$. The subtree of $t$ rooted at $v$ is written as $t_v$. Assigning a label from a set $\Sigma$ to every node of $t$ yields a \emph{$\Sigma$-labeled tree}.

A \emph{(strict) partial order $<$} over a domain $\data$ is a transitive, antisymmetric, and antireflexive relation, that is, 
if $a, b, c \in \data$ then $a  <  b$ and $b  <  c$ implies $a < c$; and  $a  <  b$ and $b  <  a$ are never both satisfied at the same time.
We use standard notions of \emph{upper} and \emph{lower bounds}, and of \emph{suprema} and \emph{infima} of subsets of a partially ordered set. 
A partial order is a \emph{linear order} if $a <  b $ or  $b  <  a$ for all $a, b \in \data$ with $a \neq b$. A partial order $(\data, <)$ is a \emph{tree order} if the set $\{y \mid y < x\}$ is totally ordered by $<$ for each $x \in  \data$,
and additionally, any two elements have some lower bound.
For instance, all linear orders are tree orders, and  the ancestor orders of trees are tree orders. 
In general, a tree order may not be isomorphic to the ancestor order of a tree. As examples consider a dense linear order, or an infinitely branching tree,
or a combination of the two, where a dense linear order branches into infinitely many copies at each rational.

In a tree order (or tree) if $u,v$ are two nodes and $u<v$ then we say that $u$ is an \emph{ancestor} of $v$, that $u$ is \emph{smaller} than $v$, and that $v$ is \emph{larger} than $u$.

\section{Tree Register Automata with Suprema and Infima Constraints}\label{section:registerAutomata}
\newcommand{\val}{\textsc{val}}
\renewcommand{\root}{\textsc{root}}

Register automata are finite state automata equipped with a set of registers that can store values from an infinite data domain. Here we introduce a variant of register automata for infinite trees and ordered domains. In the next section we will prove that their emptiness problem is decidable.

Our register automaton model is equipped with a mechanism for accessing infima and suprema of subtrees. Therefore it uses values from the domain $\Rbar \df \R \cup \set{-\infty, \infty}$ linearly ordered in the natural way. The only feature of $\Rbar$  which will matter is that it is a dense linear order and contains all infima and suprema. Instead of $\Rbar$ we could equally well consider the real interval $[0,1]$. We therefore fix the \emph{ordered domain} $\data\df \langle \Rbar, <\rangle$ which is a complete dense linear order with endpoints (treated as constants).

We consider a model of nondeterministic tree automata (with non-deterministic guessing of data values) which process infinite binary trees whose nodes are labeled by a label from a finite alphabet $\Sigma$ and tuples of data values from $\data$, representing input values. Such an automaton has a finite set of registers storing values. It nondeterministically 
assigns to each node of the input tree a state from a finite state space 
and a valuation of the registers in the domain~$\data$
(in particular, it allows for ``guessing'' of data values).
At a node $v$, the transition relation has access to the automaton state as well as the $\Sigma$-label of $v$ and its children $\mswarrow v$,~$\msearrow v$, and it is capable of comparing the input numbers and the numbers stored in registers at those nodes using the linear order. Furthermore it can compare any of those register values to  the infimum or supremum of data values in a given register at nodes in the subtree rooted at $v$ reachable from $v$ by a path whose labels satisfy a given regular property, e.g. of the form \emph{supremum of values of register $r$ 
at all descendants of $v$ reachable by a path labeled by $a^*ba^*$}. 
The transition relation is described by a propositional formula, whose atomic formulas correspond to label and state tests, as well as register comparisons for the current node and its children,
and the suprema and infima of register values in the current subtree.

More formally, a \emph{tree register automaton with suprema and infima constraints} (short: \emph{TRASI}) consists of the following components:
\begin{itemize}
  \item a finite \emph{input alphabet} $\Sigma$;  
	\item a finite set of \emph{states} $Q$;
	\item a finite set of \emph{root states} $F\subset Q$;
	 \item a finite set of \emph{input registers}~$I$;
	\item a finite set of \emph{registers} $R$ containing the input registers~$I$;
	\item a function $\calL$ mapping each register $r \in R$ to an associated regular language over $\Sigma$ (specified by a nondeterministic finite state automaton);
	\item a nondeterministic \emph{transition relation} $\delta$ which is given by a propositional formula with atomic formulas of the form 
          \begin{itemize}
            \item $\sigma$, $\mswarrow \sigma, \msearrow \sigma$ for $\sigma \in \Sigma$ and $q$, $\mswarrow q, \msearrow q$ for $q \in Q$, used for testing labels and states of the current node and its children nodes;
            \item $s < t$ or $s=t$, used for comparing register values or suprema/infima:            
             $s$ and $t$ range over the registers of the current node, the registers of its children nodes, or are domain constants
            or suprema respective infima terms of the form $\sup r$ or $\inf r$, for $r\in R$  (that is, $s, t \in \setof{r, \mswarrow r, \msearrow r, \sup r, \inf r}{  r \in R}$).            
          \end{itemize}

\item a regular acceptance condition, given by a \emph{parity function} $\func \Omega Q \N$ assigning a \emph{rank} to each state.
\end{itemize}
If a TRASI does not use  suprema and infima terms ($\sup r$ or $\inf r$ for $r\in R$) in its transition relation then it is called a \emph{tree register automaton}.

We next define inputs and runs for a tree register automaton $\calA$ with suprema and infima constraints.
An \emph{input tree} for $\calA$ is a complete infinite binary tree, whose vertices are labelled by elements from~$\Sigma \times \data^I$. The labelling by $\data^I$ will form the register assignment for the input registers. 

For a regular language $L\subset \Sigma^*$ an input tree $t$ and two of its nodes $v$ and $w$, we say that  $w$ is an \emph{$L$-descendant} of  $v$ if $w$ is a descendant of $v$ (possibly $v$ itself) and the labels along the path from $v$ to $w$ in $t$ form a word which belongs to $L$. The terms $\sup r$ and $\inf r$, evaluated at a node $v$, will denote the supremum/infimum of all values of register $r$ at nodes $w$ which are $L$-descendants of $v$, where $L$ is the language associated to register $r$.
For technical reasons, we provide a piecemeal definition of a run of a TRASI, in which the values of the registers $\sup r$ and $\inf r$ are stored in auxiliary registers defined below, and then we require that the values of those registers are as expected.

Let $R_{\sup}$ and $R_{\inf}$ be two copies of the set $R$,
where $R_{\sup}=\set{\rsup\mid r\in R}$ and $R_{\inf}=\set{\rinf\mid r\in R}$.
Denote $\cal R=R\cup R_{\sup}\cup R_{\inf}$. Abusing language, elements of $R_{\sup}$ will be called \emph{suprema registers}, and elements of $R_{\inf}$ will be called \emph{infima registers}.

A \emph{pre-run} $\rho$ over an input tree $t$ annotates each node of $t$ by an element from $Q \times \data^{\cal R}$, so that registers from $I  \subseteq \cal R$ get a value according to the input tree. 
The \emph{state} at a node $v$ in a pre-run $\rho$ is the $Q$-component of $\rho(v)$.
The labelling by $\data^{\cal R}$ will form the register assignment for the registers in $\cal R$. We write $r_\rho(v)$ for the value of register $r$ in pre-run $\rho$, or $r(v)$ if $\rho$ is clear from the context.

A pre-run $\rho$ is 
 \emph{locally consistent} if  each node $v$ with left and right children $\mswarrow v$ and $\msearrow v$ satisfies $\delta$ with respect to $\rho$. Here, the satisfaction of the atomic formulas is defined as follows:
\begin{itemize}
  \item $\sigma$, $\mswarrow \sigma$, and $\msearrow \sigma$ are satisfied in 
  $v$ if $v, \mswarrow v$, and $\msearrow v$ are labelled by $\sigma \in \Sigma$, respectively; similarly for $q$, $\mswarrow q$ and $\msearrow q$, for $q \in Q$; 

  \item given a node $v$, a term of the form $r \in R$ evaluates to the value $r(v)$. A term of the form $\mswarrow r$, for $r\in R$, evaluates to the value $r(\mswarrow v)$, and likewise for $\msearrow r$. Finally, a term of the form $\sup r$ evaluates to $\rsup(v)$, and a term of the form $\inf r$ evaluates to $\rinf(v)$.

  \item if $s,t$ are two terms as above, then the atomic formula $s < t$ is satisfied in the node $v$ if the value of the term $s$ is smaller than the value of the term $t$.
  The definition for $s=t$ is analogous.
\end{itemize}
Satisfaction for boolean operations is defined as usual.

A  \emph{run} is a locally consistent pre-run which additionally satisfies the following consistency requirement:
for every node~$v$, $\rsup(v)$ is the supremum 
of the values $r(w)$, where $w$ ranges over all $L$-descendants of $v$ where $L$ is the language associated to $r$,  
whereas $\rinf(v)$ is the infimum of all such values.
If no $L$-descendants exist, then the supremum is $-\infty$ and infimum is $+\infty$ by definition.
A pre-run $\rho$ is \emph{accepting} if the state at the root belongs to $F$, and 
every branch of $\rho$ satisfies the parity condition, i.e., 
on every (infinite, rooted) branch of~$\rho$, 
	if the states at the nodes along this branch are
	$q_1,q_2,\ldots,$ then 
	$\limsup_{n\rightarrow\infty}(\Omega(q_n))$  is even.

A TRASI \emph{accepts} an input tree $t$ if it has an accepting run over that tree.
 The \emph{emptiness problem} for TRASI is the problem of deciding whether  a given TRASI accepts some input tree. In the next section we prove that this problem is decidable.

Note that
	instead of equipping the automaton model with a parity acceptance condition, we could
	equip it with an MSO acceptance condition, i.e.~an MSO formula $\phi$
	using two binary predicates: $s_1$ and $s_2$, standing for left and right successor (in the binary tree), and for each $q\in Q$ a unary predicate $\lambda_q$, holding at nodes labeled with $q$.
	This would not change the expressive power of the tree automata, since parity tree automata have the same expressive power as MSO over infinite binary trees~\cite{Rabin72,Thomas97}.

\begin{example}
\begin{enumerate}[(a)]
 \item  Consider the language of trees with one data value per node such that the values on each infinite path form a strictly increasing sequence of numbers. A TRASI for this language can work as follows. The data value is stored in an input register $r$. The automaton has two states $q_\top$ and $q_\bot$, where $q_\top$ represents the situation where the sequence of the numbers from the root to the current node is increasing; $q_\top$ is the only root state. A path is accepted if $q_\top$ occurs infinitely often, i.e. the parity function $\Omega$ is defined as $\Omega(q_\top)=2$ and $\Omega(q_\bot)=1$.

  The transition relation is defined by 
    \[\big(\mswarrow q_\top \leftrightarrow (q_\top \land r < \mswarrow r )\big) \land \big(\msearrow q_\top \leftrightarrow (q_\top \land r < \msearrow r)\big)\]
  where the first part checks that the state is propagated properly to the left child, and the second part does the same for the right child.

  \item Consider the language of trees over $\Sigma = \{a, b, c\}$ with one data value per node where the data value of each $a$-labeled node equals the supremum of the $b$-labeled nodes below. A TRASI for this language can work as follows. The data value is stored in an input register $r$ with associated language $\Sigma^*b$. The automaton uses two states  $q_\top$ and $q_\bot$, where $q_\top$ represents the situation where all $b$-nodes seen on a path so far adhere to the condition. A path is accepted if $q_\top$ occurs infinitely often.
  
	The transition relation is defined by \sz{check}
    \[\big(\mswarrow q_\top \leftrightarrow (q_\top \land (\neg a\vee (r = \sup r)))\big) \land \big(\msearrow q_\top \leftrightarrow (q_\top \land (\neg a \vee(r = \sup r)))\big).\]
  \end{enumerate}
\end{example}

\section{Deciding Emptiness}\label{section:decidability}        
In this section we prove the following:

\begin{theorem}\label{thm:decidable}
  The emptiness problem for tree register automata with suprema and infima constraints is decidable. 
\end{theorem}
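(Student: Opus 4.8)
The argument proceeds in two reduction steps, as announced in the introduction: first eliminate the suprema/infima constraints, reducing emptiness of a TRASI to emptiness of a plain tree register automaton; then reduce emptiness of a tree register automaton to emptiness of a parity tree automaton over a finite alphabet, which is decidable by Rabin's theorem~\cite{Rabin72}. The first step is the heart of the matter; the second is comparatively standard and amounts to abstracting away the concrete data values by their \emph{order types}.

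\textbf{Step 1: eliminating extrema constraints.} Given a TRASI $\calA$, the plan is to build a tree register automaton $\calA'$ (without $\sup r$, $\inf r$ terms) that is nonempty iff $\calA$ is. The idea is that the automaton should \emph{guess}, at each node $v$ and for each register $r$ with associated language $L$, the values $\rsup(v)$ and $\rinf(v)$, storing them in fresh ordinary registers; it then must verify that these guesses are globally correct. The ``$\le$''-direction of correctness (that every $L$-descendant $w$ has $r(w)$ between the guessed infimum and supremum, and that the guessed values propagate consistently to children according to the finite automaton recognising $L$) is a \emph{local, safety-style} condition that an ordinary tree register automaton can enforce directly, since it only compares register values at a node and its children. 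The genuinely hard part is the \emph{tightness}: the guessed $\rsup(v)$ really is the supremum, i.e. there is no uniform gap below it, and symmetrically for $\rinf(v)$. This is not a safety condition — it requires that along suitable branches the actual register values \emph{approach} the guessed extremum — so it must be folded into the parity acceptance condition. Concretely, I expect the construction to add, for each register $r$, book-keeping states tracking a ``current target'' rational threshold strictly below the guessed $\rsup(v)$, and to demand (via an $\liminf$/$\limsup$-flavoured parity condition on appropriate branches) that this threshold is infinitely often beaten by an $L$-descendant's value of $r$; a symmetric mechanism handles infima. One must be careful that a supremum can be attained (then it is a maximum and a single witness suffices) or not attained (then an infinite approximating sequence is needed along some branch), and that the witnesses may be scattered across the tree rather than lying on one branch — so the ``target threshold'' bookkeeping has to be allowed to branch and the parity condition imposed on the right set of paths. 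Handling the constants $\pm\infty$, the empty-$L$-descendant convention, and the interaction of several registers simultaneously is routine once this core mechanism is in place.

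\textbf{Step 2: eliminating data values.} Given a tree register automaton $\calB$ with register set $R$, the plan is the classical abstraction: since $\calB$ only ever compares register values at a node and its two children via a dense linear order, and since the domain $\data$ is a complete dense linear order with endpoints, the behaviour of $\calB$ on an input tree depends only on the \emph{order type} of the multiset of values $\{r(v), r(\mswarrow v), r(\msearrow v) : r \in R\}$ at each node — equivalently, on how the ordering of the $|R|$ register values at a child refines/relates to the ordering at the parent. One defines a parity tree automaton $\calC$ over the finite alphabet consisting of $\Sigma$ together with such local order-type data, whose transitions simulate $\delta$ symbolically. Soundness (an accepting run of $\calC$ yields, by a greedy left-to-right insertion of rationals respecting the guessed order constraints, an accepting run of $\calB$ with concrete values) uses density and completeness of $\data$; completeness (an accepting run of $\calB$ projects to one of $\calC$) is immediate. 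Then $L(\calB)\neq\emptyset$ iff $L(\calC)\neq\emptyset$, and the latter is decidable by Rabin's theorem. Combining the two steps proves the theorem.

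\textbf{Main obstacle.} The crux is Step 1, and within it the correct treatment of \emph{non-attained} suprema/infima: one must encode ``the guessed value is the least upper bound'' as an $\omega$-regular property of the run, arranging the target-threshold bookkeeping and the parity ranks so that a run is accepting precisely when, for each node $v$ and register $r$, arbitrarily good $L$-descendant witnesses exist — while simultaneously not breaking the pre-existing parity condition of $\calA$ and not over-constraining runs where the supremum is attained by a single node. Getting the set of branches on which the approximation obligation is imposed exactly right (it should follow the nondeterministically-chosen paths that witness $L$-descendants, which can fork) is the delicate point.
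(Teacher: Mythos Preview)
Your Step~2 is correct and essentially matches the paper's Section~\ref{section:tra-decidability}: abstract each local triple of register valuations to its order type, obtaining a parity tree automaton over a finite alphabet, and invoke Rabin.

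Your Step~1 has the right skeleton (guess $\rsup,\rinf$ as fresh registers, check local propagation, then certify tightness) but the proposed tightness mechanism has a genuine gap. You want to track a ``current target rational threshold'' below the guessed supremum and demand it be beaten infinitely often. But which threshold? There are infinitely many, so it cannot live in the finite state; if it lives in a register, then being beaten infinitely often does not force the thresholds to approach the supremum (they could be $1,2,3,\dots$ while the guessed supremum is $100$). More fundamentally: a tree register automaton is invariant under monotone bijections of $\data$, while the property ``this register value equals the actual supremum of those register values'' is \emph{not} invariant under monotone bijections. So no register automaton can verify literal tightness, and any scheme that tries to encode ``arbitrarily close'' directly is doomed.

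The paper's resolution is to abandon literal tightness and instead verify a weaker, bijection-invariant condition that nonetheless \emph{implies the existence} of a genuine run. Concretely, one partitions the tree into finite ``witness paths'', cyclically labeled by pairs (register, $\pm$); the \emph{gap condition} requires only that the interval between a witness path's target value and its associated extremum contain no value of the form $s(v)$ or $s(\muparrow v)$ for any $s\in\cal R$ at the path's start~$v$. This is checkable by a register automaton with one extra register (storing the target value) plus a parity condition (finiteness of the paths). The crucial Lemma~\ref{lem:tight} then shows that from any pre-run with such a certificate one can \emph{manufacture} an actual run, by iteratively applying monotone bijections to subtrees that shrink each gap below $1/n$ while fixing all register values at the boundary node and its parent; the gap condition is exactly what makes such bijections exist. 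This ``fix-up by monotone bijection'' argument is the idea your proposal is missing.
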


Given a TRASI $\aut A$, we construct (in Sections \ref{section:trasi-to-traA} and \ref{section:trasi-to-traB}) a tree register automaton $\whA$ (without extrema constraints)  such that $\aut A$ has an accepting run if and only if $\whA$ has an accepting run. As emptiness of register automata over infinite data trees is decidable (see Section \ref{section:tra-decidability}), this will yield decidability of the emptiness problem for TRASI.

\subsection{Warm-Up: From TRASI with Trivial Path Languages to Tree Register Automata}\label{section:trasi-to-traA}
For the sake of readability, we first prove how $\whA$ can be constructed when all languages associated with registers in the TRASI $\aut A$ are equal to $\Sigma^*$, that is, if suprema and infima of data values are taken over all nodes of a subtree. The proof for general regular languages is deferred to the following section. In this section, we henceforth assume that all languages associated with registers in a TRASI are equal to $\Sigma^*$.

We would like to construct $\whA$ so that it accepts exactly those pre-runs of $\aut A$ which are actual runs. This requirement, however, is too strong. 
The reason is that applying an arbitrary monotone bijection of $\data$ to an accepting run of a register automaton always yields an accepting run, 
whereas the same property fails for accepting runs of TRASI's, since such a bijection might not preserve suprema and infima. 

For this reason $\whA$ checks a weaker condition. Fix an input tree $t$, a register $r$ and a node $v$; recall that $t_v$ is the set of descendants of $v$ in $t$.  If the supremum of the values of $r$ in $t_v$ is equal to $c$, then either the supremum is attained at some node $w$ of $t_v$, or there must be an infinite path starting from $v$ such that the supremum of $r$ has value $c$ for all nodes along this path.  Then one can find a sequence of nodes $v_1, v_2, \ldots$ on the path that have nodes $w_1, w_2, \ldots$ below them (but not necessarily on the path) whose $r$-values tend to~$c$~(cf.~Fig.~\ref{fig:inf-path}). \begin{figure}
  \includegraphics[scale=0.8,page=1]{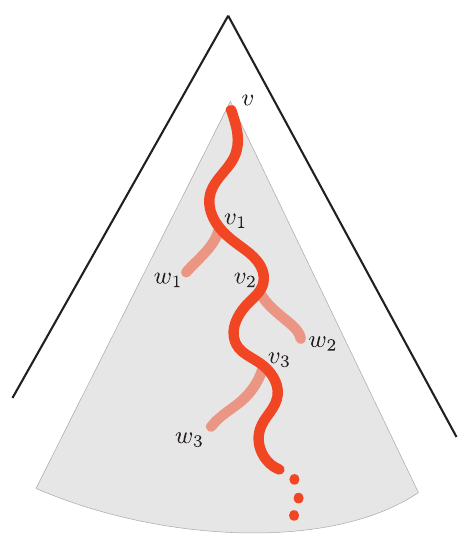}
  \caption{An infinite path and finite paths departing from it,
  forming a witness of a supremum. The path is constructed so that 
   $\sup\set{r(w)\mid w\in t_{v_n}}$ is the same for all $n\ge 0$, 
  and equal to $\sup\set{r(w_n)\mid n\ge 1}$.} 
  \label{fig:inf-path}
\end{figure} Such nodes $w_1, w_2, \ldots$ are called \emph{witness nodes} in the following. The intention of $\whA$ is to check for the existence of these witness nodes. As the automaton cannot check that their data values actually tend to the supremum $c$, it chooses the witnesses $w_1, w_2, \ldots$ so that at least they eventually exceed any values smaller than $c$ which are stored at any point along the path in some register.
This is formalized below.

First, we impose a  normalization assumption on the TRASI $\aut A$.
Namely, say that a TRASI $\aut A$ is \emph{normalized} if the transition relation does not allow directly comparing values in the left and right child of a given node $v$, i.e., it does not contain atoms of the form 
$\msearrow r<\mswarrow s$ or $\mswarrow r<\msearrow s$ or $\mswarrow r=\msearrow s$, for $r,s\in R$.
It is easy to see that any TRASI can be converted into a normalized TRASI which accepts the same input trees (possibly at the cost of introducing new registers).

Let $\rho$ be a pre-run of $\aut A$. Say that $\rho$ is \emph{extrema-consistent} if for every node $v$ and $r\in R$,
\begin{align*}
  \rsup(v)&=\max(r(v),\rsup(\mswarrow v), \rsup(\msearrow v)),\\
  \rinf(v)&=\min(r(v),\rinf(\mswarrow v), \rinf(\msearrow v)).
\end{align*}
Clearly, every run is extrema-consistent, and
 a tree register automaton 
(without extrema constraints) can  easily verify that a given pre-run is extrema-consistent and locally consistent. Because of this,
\begin{quote}\itshape from now on, we assume 
  that every considered TRASI is normalized, and that all pre-runs are  locally consistent and extrema-consistent. 
\end{quote}

The existence of a pre-run as described above does not yet guarantee the existence of a run, since the extrema might not be approached by actual values of registers. For example, it might be the case that $\rsup(v)=5$ for all nodes $v$, whereas $r(v)=4$ for all nodes~$v$. This case can be easily detected by a register automaton, which checks that whenever $r(v)<\rsup(v)$, then there is some descendant $v'$ of $v$ with $r(v)<r(v')<\rsup(v)$. 
Now suppose that  $\rsup(v)=5$ and $r(v)=4-1/d$ for all nodes at depth $d$, which is still not an actual run. Insofar as the register $r$ is concerned, this pre-run can be modified into a run by simply replacing $\rsup(v)$ by $4$. However, suppose there is another register $s$ such that $s(w)=4.5$ for some $w$ and that the inequalities $r(w)<s(w)<\rsup(w)$ are enforced by the acceptance condition. Then we cannot easily fix this pre-run to obtain a run, since $s(w)$ separates the values $r(v)$ from the value $\rsup(v)$. To make sure that such a situation does not occur, we introduce witness families which ascertain that the values $r(v)$ exceed all values $s(w)$ which are encountered along a branch.

Intuitively, a witness family partitions the nodes of a tree into finite paths such that each such path provides a witness for an extremum of some register. To this end, paths of such a partition are labeled by $\rw+$ or $\rw-$, indicating whether they are the supremum or infimum witness for a register $r$.  This is formalized as follows.

We introduce two symbols $\rw+$ and~$\rw-$ for each register $r\in R$. Let $R^{\pm}=\setof{\rw+,\rw-}{r\in R}$ be the collection of all such symbols, and fix an arbitrary total order on $R^{\pm}$. Further fix a partition $\cal P$ of the set of nodes of the tree underlying a pre-run $\rho$ into finite paths (cf. Fig.~\ref{fig:partition}). Say that a path $\pi$ in $\cal P$ is the \emph{father} of a path $\pi'$ if the father of the smallest node (wrt. the ancestor order) in $\pi'$ belongs to $\pi$. 
\begin{figure}
  \includegraphics[scale=1,page=2]{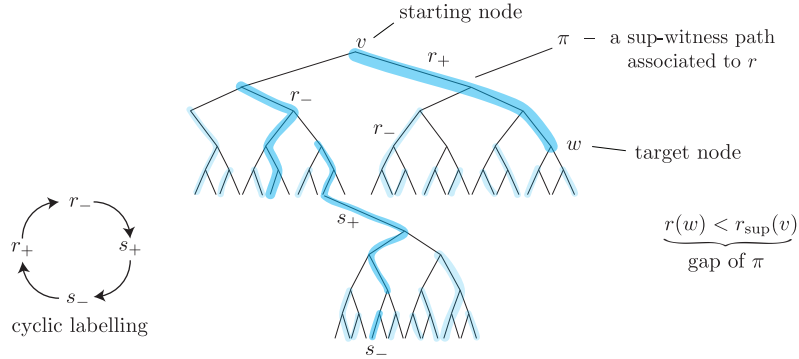}
  \caption{A partition of the tree into witness paths, a labelling of the paths with $R^{\pm}$ and the gap of a witness path.} 
  \label{fig:partition}
\end{figure}

Let $\lambda\from \cal P\to R^{\pm}$ be a labeling of the paths in $\cal P$. A path $\pi\in \cal P$ is called a \emph{$\sup$-witness path}  
associated to $r$ if $\lambda(\pi)=\rw+$ and it is called a \emph{$\inf$-witness path} associated to $r$ if $\lambda(\pi)=\rw-$.
For a witness path $\pi$ associated to $r$,
its \emph{target value} is $r(w)$, where $w$ is the largest node in $\pi$ wrt. the ancestor order. A witness path \emph{starts} at its smallest node wrt. the ancestor order. 
If $\pi$ is a witness path associated to $r$ which starts at $v$, then the \emph{associated extremum} is $\rsup(v)$ if $\pi$ is a $\sup$-witness path, and $\rinf(v)$ if $\pi$ is an $\inf$-witness path. Note that if $\lambda(\pi)=\rw+$ then the target value of $\pi$ is at most equal to the associated extremum of $\pi$, since $\rho$ is extrema-consistent (if $\lambda(\pi)=\rw-$, then the inequality is reversed). 

The \emph{gap} of a witness path with target value $c$ and associated extremum $d$ is the interval with endpoints $c$ (inclusively) and $d$ (exclusively).
It is empty precisely when the target value attains the extremum.
Intuitively, the length of the gap is a measure of the quality 
of the witness path, shorter gaps being better witnesses.

Our goal is to state properties of a partition $\calP$ of a pre-run and a labeling $\lambda\from \cal P\to R^{\pm}$ that guarantee the existence of a run (i.e., a pre-run that is consistent with respect to the extrema). Such a combination of a partition and a labeling will be a certificate for the existence of a run. For the case where all associated languages are $\Sigma^*$, periodically witnessing extrema for each register with close enough target values suffices as a certificate. 

A mapping
$\lambda\from \cal P\to R^\pm$ to labels from an ordered set $R^\pm$ is a \emph{cyclic labeling} if:
\begin{itemize}
  \item the label $\lambda(\pi)$ of the path $\pi$ containing the root 
  is the smallest label in $R^\pm$, and
  \item if a path $\pi$ is the father of $\pi'$ then the label $\lambda(\pi')$ is the successor of the label $\lambda(\pi)$, in cyclic order, according to the fixed order on $R^\pm$.
\end{itemize}
Clearly, there is exactly one cyclic labeling of $\cal P$ with labels from $R^\pm$.

Henceforth (in this section), when considering a partition $\cal P$ of a partial run $\rho$, we assume that each path in $\cal P$ is labeled according to the cyclic labeling with labels from $R^{\pm}$ assuming an arbitrary order on $R^{\pm}$. 

We now define when a partition $\cal P$ is a certificate. Intuitively, we require that the target values of the witness paths are sufficiently close to the associated extrema, i.e., each gap is sufficiently small. Formally, the partition $\cal P$  is a \emph{certificate} if each witness path $\pi$
satisfies the following \emph{gap condition}:
\begin{quote}\itshape
the gap of $\pi$ does not contain any value $c\in \data$ such that
 $c=s(v)$ or $c=s(\uparrow v)$ for some $s\in\cal R$, where $v$ denotes the starting node of $\pi$.
\end{quote}
Below we prove the following three properties of certificates:
\begin{enumerate}  
  \item every run has a certificate,
  \item if some accepting pre-run of $\aut A$ has a certificate then $\aut A$ has some accepting run,
 \item a tree register automaton can verify if a given pre-run has a certificate.
\end{enumerate}
These properties together easily yield Theorem~\ref{thm:decidable} for the case when all associated languages are trivial.

\medskip

To show that every run of $\calA$ has a certificate, we greedily add paths 
to the family $\cal P$ without violating the gap condition, as shown below.

\begin{lemma}\label{lem:cor}Every run of $\calA$ has a certificate.
\end{lemma}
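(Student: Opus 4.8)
The goal is to show that every run $\rho$ of $\calA$ admits a certificate, i.e.\ a partition $\cal P$ of the nodes of the underlying tree into finite paths such that the cyclic labeling $\lambda\from\cal P\to R^{\pm}$ makes every witness path satisfy the gap condition. The plan is to build $\cal P$ top-down and greedily, processing paths in the order dictated by the cyclic labeling, and to exploit the fact that in an actual run the extrema $\rsup(v)$ and $\rinf(v)$ are genuine suprema and infima of register values in the subtree $t_v$ — so values approaching the extremum really do occur below.

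First I would fix the starting node $v$ of the next path to be added and the label $\ell\in R^{\pm}$ that the cyclic order assigns to it; say $\ell=\rw+$ so we must produce a $\sup$-witness for $r$ starting at $v$, with associated extremum $d=\rsup(v)$. The gap condition forbids the gap $[c,d)$ (where $c$ is the target value) from containing any of the finitely many ``bad'' values $s(v)$ or $s(\uparrow v)$ for $s\in\cal R$; let $b<d$ be the largest such bad value that is $<d$ (if none, any node in $t_v$ works, even $v$ itself, since then the gap is automatically good). Because $\rho$ is a run, $d=\rsup(v)=\sup\{r(w)\mid w\in t_v\}$, so there is a node $w\in t_v$ with $r(w)>b$; choose such a $w$ and let the new path $\pi$ run from $v$ down to $w$ along the tree path connecting them. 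Then $\pi$'s target value $c=r(w)$ satisfies $b<c\le d$, hence the gap $[c,d)\subseteq(b,d)$ contains no bad value — the gap condition holds. The case $\ell=\rw-$ is symmetric, using $\rinf(v)=\inf\{r(w)\mid w\in t_v\}$. When the extremum is attained in $t_v$ (or when no $L$-descendant exists, so the extremum is $\pm\infty$), we simply take $w$ to realize it, making the gap empty.

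Carrying this out for the root (labeled with the smallest element of $R^{\pm}$, as required by the cyclic labeling) and then repeatedly for the smallest node of each subtree hanging off an already-constructed path gives a partition of \emph{all} nodes: every node is eventually the start of, or lies on, some witness path, because each step extends the construction strictly deeper and every node below a constructed path becomes the root of a fresh recursive call. The induced labeling is cyclic by construction. Hence $\cal P$ is a certificate. The main point to be careful about — the only place where runhood (as opposed to mere extrema-consistency) is used — is precisely the step ``there exists $w\in t_v$ with $r(w)>b$'': this is exactly the statement that $\rsup(v)$ is a true supremum of actual register values, which fails for general extrema-consistent pre-runs but holds for runs. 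Everything else (that the chosen paths are finite, disjoint, cover all nodes, and carry the cyclic labeling) is a routine bookkeeping argument.
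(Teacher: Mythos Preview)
Your proposal is correct and follows essentially the same greedy top-down construction as the paper: pick the minimal uncovered node $v$, read off the label $\ell\in R^{\pm}$ dictated by the cyclic labeling, and use the fact that $\rho$ is a genuine run (so $\rsup(v)$ is an actual supremum of $r$-values in $t_v$) to find a descendant $w$ whose $r$-value clears all finitely many ``bad'' values $s(v),s(\muparrow v)$ below the extremum. You in fact spell out more explicitly than the paper does why such a $w$ exists (isolating the largest bad value $b<d$ and invoking the definition of supremum), whereas the paper simply asserts that a suitable $w$ can be picked.
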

\begin{proof}
  Fix a run $\rho$. 
  We construct the family $\cal P$ in stages. Initially, $\cal P$ is empty, and each stage proceeds as follows.
Pick a minimal node $v$ such that $v\notin \bigcup \cal P$.
If the node $v$ is the root then let $l\in R^{\pm}$ be the smallest label in $R^{\pm}$. Otherwise, the parent of $v$ already belongs to some path $\pi\in\cal P$ which has an assigned label $l'$, and let $l\in R^{\pm}$ be the successor of $l'$ in cyclic order, according to the fixed order on $R^{\pm}$. Assume $l=\rw+$ for some $r\in R$. The  case when $l=\rw-$ is treated symmetrically. 
 Pick any descendant $w$ of $v$ such that
the interval with endpoint $r(w)$ (inclusively) and $\rsup(v)$ (exclusively)
does not contain any value $c\in \data$ such that $c=s(v)$ or $c=s(\uparrow v)$ for some $s\in\cal R$.
Add the path joining $v$ with $w$ to the family $\cal P$ and associate the label $l$ with it.
Proceed to the next stage.

By construction, in the limit we obtain a certificate.
\end{proof}

The key point of certificates is that
the existence of a certificate implies the existence of a run.

\begin{lemma}\label{lem:tight}For every (locally consistent, extrema-consistent) pre-run $\rho$ 
  which has a certificate  there is a run $\rho'$ whose states agree with the states of $\rho$.
\end{lemma}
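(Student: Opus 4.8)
The plan is to construct the run $\rho'$ from $\rho$ by keeping all states, all non-extrema register values (the $R$-registers), and correcting only the values of the suprema and infima registers $R_{\sup}$, $R_{\inf}$ so that they actually equal the suprema and infima of the $R$-values in the subtree. Concretely, for each node $v$ and each $r \in R$, I would \emph{define} $\rsup_{\rho'}(v) \df \sup\{r(w) \mid w \in t_v\}$ and $\rinf_{\rho'}(v) \df \inf\{r(w) \mid w \in t_v\}$ (recall all languages are $\Sigma^*$ here), leaving $r(v)$ unchanged. This immediately makes $\rho'$ a genuine run \emph{provided it is still locally consistent}, i.e.\ provided the transition formula $\delta$ is still satisfied at every node. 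Since $\delta$'s atoms are built from label/state tests (unchanged), from $R$-register comparisons at $v$ and its children (unchanged), and from comparisons involving $\sup r$, $\inf r$ terms, the only thing to check is that every atomic comparison in $\delta$ that mentions a $\sup$/$\inf$ term has the same truth value under $\rho'$ as under $\rho$.

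The key step, then, is the following claim: for every node $v$, every $r \in R$, and every term $s$ among the register values available at $v$ (that is, $s(v)$ or $s(\uparrow v)$ for $s \in \calR$ — here I use normalization, which guarantees no cross-comparison between the two children, and the fact that $\delta$ at a node $v$ can only involve $v$ itself, its children, and — after normalization — effectively the data reachable from $v$), the comparison of $s$ with $\rsup(v)$ has the same outcome in $\rho$ and $\rho'$, and similarly for $\rinf(v)$. Since $\rho$ is extrema-consistent, $\rsup_\rho(v) \ge \rsup_{\rho'}(v)$ always (the pre-run value dominates the true supremum), and likewise $\rinf_\rho(v) \le \rinf_{\rho'}(v)$. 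So the two values can only differ when the gap $[\rsup_{\rho'}(v), \rsup_\rho(v))$ — wait, more precisely when $\rsup_{\rho'}(v) < \rsup_\rho(v)$ — is nonempty. The partition $\calP$ covers $v$ by some witness path $\pi$, and I would argue: the witness path through $v$ (or rather, the relevant witness paths starting at or above $v$) pins down $\rsup_\rho(v)$ as the associated extremum, its target value $c = r(w)$ is an actual $R$-value in $t_v$ hence $c \le \rsup_{\rho'}(v)$, and the gap condition says no value of the form $s(v)$ or $s(\uparrow v)$ lies in $[c, \rsup_\rho(v))$. Therefore any such $s$-value lies either below $c \le \rsup_{\rho'}(v) \le \rsup_\rho(v)$ or at/above $\rsup_\rho(v) \ge \rsup_{\rho'}(v)$, so it compares the same way to both. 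The cyclic labeling is what ensures that \emph{every} register $r$ gets witnessed infinitely often down every branch, which is needed so that the gap condition applies for the correct register at nodes arbitrarily far down; at a node $v$ not starting a witness path for $r$, one descends to the start of the relevant $\rw+$-path and uses monotonicity ($\rsup_\rho$ is non-increasing along descendants when values only get removed — actually one uses extrema-consistency to propagate) to transfer the bound.

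The main obstacle I anticipate is bookkeeping the quantifiers in that claim correctly: I must handle the comparison atoms involving $\sup r$ where the \emph{other} side is itself $\sup s$ or $\inf s$ (not just a plain register), and I must make sure that when $v$ does not start a witness path associated to $r$, the gap-condition bound can still be invoked — this is where I expect to need a small sublemma saying that along any branch and for any $r$, there are nodes arbitrarily close to (and above any fixed depth on) the branch that start $\rw+$-witness paths whose starting nodes $v'$ satisfy $\rsup_\rho(v') = \rsup_\rho(v)$ for the relevant earlier $v$ on the branch, so that the target value of that path gives arbitrarily good lower approximations to $\rsup_{\rho'}(v)$. Combined with extrema-consistency giving $\rsup_\rho(v) = \max(r(v), \rsup_\rho(\mswarrow v), \rsup_\rho(\msearrow v))$, an easy induction/limit argument then shows $\rsup_{\rho'}(v) = \rsup_\rho(v)$ outright whenever the sup is attained or approached along the partition, and otherwise the gap condition forces all relevant comparisons to agree — which is all that is needed. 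Finally I would note that the parity/acceptance condition and the root-state condition are untouched since states are preserved, so $\rho'$ is accepting iff $\rho$ is, completing the proof.
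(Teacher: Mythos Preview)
Your approach — keep the $R$-register values fixed and redefine only the $\rsup/\rinf$ registers to be the true suprema/infima — is genuinely different from the paper's, but it does not work.

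The gap condition for a witness path starting at $v'$ only constrains the $\cal R$-values at $v'$ and $\uparrow v'$; it says nothing about values at ancestors of $v'$ further up. Here is a concrete obstruction. Take two registers $r,s$ and a transition that includes the atom $s<\sup r$. Put $r(v)=4-2^{-d(v)}$ (so the actual supremum of $r$ over any subtree is $4$), $\rsup_\rho(v)=5$ everywhere, $s(\text{root})=4.5$, and $s(v)=0$ elsewhere; fill in the remaining extrema values so that extrema-consistency holds. Then $\rho$ is locally consistent ($4.5<5$ at the root, $0<5$ elsewhere). Order $R^\pm$ so that $\rw+$ is the \emph{last} label; then the first $\rw+$-witness path along any branch starts at a node $v'$ of depth $\ge 3$, and with target $w$ one step deeper its gap $[r(w),5)\approx[3.97,5)$ avoids all $\cal R$-values at $v'$ and $\uparrow v'$ (these are either $\le 3.9375$, or $0$, or $5$). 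So this pre-run has a certificate. But in your $\rho'$ the true supremum is $4$, and at the root the atom $s<\sup r$ becomes $4.5<4$: local consistency is lost.

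The flaw in your sketched sublemma is the claim that target values of $\rw+$-witness paths give ``arbitrarily good lower approximations'' to the supremum. They do not: target values are actual $r$-values (here all below $4$), while the claimed supremum is $5$. What the certificate guarantees is only that each gap is \emph{free of the finitely many $\cal R$-values at the path's own starting node and its parent} — not that the gap is short, and not that it avoids values at nodes higher up. The paper exploits precisely this: it builds $\rho'$ as the limit of a sequence of pre-runs, applying at each node $v$ a monotone bijection to all register values in $t_v$ that fixes the $\cal R$-values at $v$ and $\uparrow v$ and shrinks the gap of the witness path starting at $v$ to length $<1/\text{depth}(v)$. In the limit the \emph{modified} $R$-values really approach the claimed $\rsup$-values, so the extrema registers become correct without being altered. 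In short, the paper moves the $R$-values toward the claimed extrema, whereas you move the extrema toward the $R$-values — and only the former is compatible with the certificate's purely local gap condition.
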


  \begin{proof}
  To prove the lemma, fix a pre-run $\rho$ and some certificate~$\cal P$.
  We construct a run $\rho'$ which has the same states as $\rho$.
  The run $\rho'$ is obtained by successively processing all nodes $v$, starting from the root, and shifting the register values in their subtrees, without changing local relationships. Towards this goal, let $v_0, v_1, \ldots$ be an enumeration of the tree nodes in which every  node appears after all its ancestors (in particular, $v_0$ is the root). We construct a sequence  ${\rho}_{v_0}, {\rho}_{v_1}, \ldots$ of pre-runs such that each run has the same states as ${\rho}$, and such that the sequence converges to a run ${\rho'}$. Here, convergence means that for every node $v$, the sequence  ${\rho}_{v_0}(v), {\rho}_{v_1}(v), \ldots$ of labels in $Q\times \calR$ assigned to $v$ 
  is ultimately equal to  ${\rho'}(v)$.
  
  Define ${\rho}_{v_0} \df {\rho}$. We describe, for $v \df v_i$ and $u \df v_{i-1}$, how ${\rho}_{v}$ is constructed, assuming that ${\rho}_u$ has already been constructed. The register values of the subtree $t_{v}$ will be shifted, and the assignment will not change on nodes from~\mbox{$t \setminus t_{v}$}. This guarantees that the sequence ${\rho}_{v_0}, {\rho}_{v_1}, \ldots$ indeed converges to some pre-run ${\rho'}$.

The pre-run ${\rho}_v$ is defined as follows. Outside of the subtree $t_v$, it agrees with the pre-run~${\rho}_u$. Inside the subtree $t_v$, an arbitrary monotone bijection $f\from \data\to\data$ with the following properties is applied to all register values of nodes in $t_v$ in the pre-run~${\rho}_u$:
    \begin{enumerate}
      \item $f(c)=c$ for all $c\in\data$ such that 
       $c=s(v)$ or $c=s(\uparrow v)$, for some $s\in\cal R$, and        
      \item if $c\in\data$ is the target value of a witness path $\pi$ starting at $v$ with associated extremum $d$, then
     $|f(c)-f(d)|<\frac 1 n$, where $n$ is the depth of the node $v$ 
(note that $c=f(c)$ by the first item).
    \end{enumerate}
    Intuitively, the bijection $f$ shrinks the gap of $\pi$ so that its length is smaller than~$\frac 1 n$. As the gap of $\pi$ did not contain any values of the form $s(v)$ or $s(\uparrow v)$ by definition of a certificate,
    a monotone bijection $f$ with the above properties exists. 
This ends the description of $\rho_v$.

Note that after applying the bijection $f$ to the values inside $t_v$,
    the gap of $\pi$ still does not contain any values of the form 
    $s(v)$ or $s(\uparrow v)$. Hence, $\cal P$ is a certificate for $\rho_v$.
Therefore, the above construction preserves the following invariant:
\begin{itemize}
  \item $\rho_v$ is a (locally consistent, extrema-consistent) pre-run with the same states as $\rho$,
  \item $\rho_v$ has certificate $\cal P$.
\end{itemize}
It follows that the sequence of pre-runs ${\rho}_{v_0}, {\rho}_{v_1}, \ldots$ converges to a pre-run $\rho'$ with the above properties.
It remains to show that $\rho'$ is a run, i.e., 
that for every node $v$ and register~$r$,
\begin{align}
  \rsup(v)&=\sup\setof{r(v')}{v'\textit{ is a descendant of }v}\label{ineq:sup}\\
  \rinf(v)&=\inf\setof{r(v')}{v'\textit{ is a descendant of }v}\label{ineq:inf}.
\end{align}
We prove~\eqref{ineq:sup} while~\eqref{ineq:inf} is proved analogously. 
The inequality $\ge$ in~\eqref{ineq:sup} follows from  extrema-consistency of $\rho'$. 

Towards proving $\leq$, pick an inclusion-maximal branch $\pi$ starting at $v$, such that 
$\rsup(v)=\rsup(v')$ for all $v'\in \pi$.
If the branch $\pi$ is finite then $\rsup(v)=r(w)$ where $w$ is the largest node in $\pi$ (wrt. the ancestor order) and $\rsup(\swarrow w),\rsup(\searrow w)$ are both smaller than $r(w)$. By extrema-consistency of $\rho'$ it follows that the right-hand side in~\eqref{ineq:sup} is equal to $r(w)$, which proves~\eqref{ineq:sup} as $\rsup(v)=r(w)$. 

Suppose now that $\pi$ is infinite. 
To prove~\eqref{ineq:sup}, it suffices to exhibit a sequence $w_1,w_2,\ldots$ of descendants of $v$ such that 
\begin{align}\label{eq:lim}
\rsup(v)=\lim_{n\rightarrow+\infty} r(w_n).
\end{align}

Construct $\sup$-witness paths $\pi_0,\pi_1,\ldots$ associated to $r$, as follows. Assuming we have constructed $\pi_0,\pi_1,\ldots,\pi_{n-1}$, the path $\pi_n$ is any $\sup$-witness path associated to $r$ such that its  starting node $u_n$ is in $\pi$, is larger than $u_{n-1}$, and does not belong to $\pi_0\cup\ldots\cup \pi_{n-1}$. 
It is easy to see that such a node $u_n$ exists, as $\cal P$ is a partition into finite paths, and the labeling is cyclic.

Then the nodes $v=u_0,u_1,\ldots$ all lie on the path $\pi$. In particular, $\rsup(v)=\rsup(u_n)$ and $u_n$ has depth at least $n$, for all $n$. 
Fix $n$ and let $w_n$ be the largest node in $\pi_n$ wrt. the ancestor order.
By  the second item in the definition of the bijection $f$ obtained when defining the pre-run $\rho_{u_n}$, the following holds for that pre-run:
\begin{align}\label{eq:lim'}
|r(w_n)-\rsup(v)|=|r(w_n)-\rsup(u_n)|<\frac 1 n.
\end{align}
We observe that the same inequality holds for the limit pre-run~$\rho'$.
This follows from the first item in the definition of the bijection $f$ obtained when defining the pre-runs $\rho_{w}$ for all $w$ which are descendants of $u_n$.

Since the inequality~\eqref{eq:lim'} holds for the pre-run $\rho'$ and for all $n$, this proves~\eqref{eq:lim}, yielding~\eqref{ineq:sup}. Hence $\rho'$ is  a run.
   \end{proof}

   It remains to show that the existence of certificates can be decided by tree register automata. This is proved by encoding the data  
   of a certificate for a pre-run $\rho$ using a finite labeling (for marking starting points and labels of witness paths) and one register (storing at a node $v$ the target value of the witness path containing $v$) and verifying that they form a valid certificate:
   a parity condition for checking that the paths are finite, and 
    inequality constraints on the target values of the witness paths to verify that they satisfy the gap condition of a certificate.
   
   \begin{lemma}\label{lem:dec-para}
     There is a register automaton $\aut B$ which accepts a pre-run $\rho$ of  $\aut A$ if and only if it has a certificate. Moreover, $\aut B$ can be  constructed in polynomial time, given $\cal R=R\cup R_{\sup}\cup R_{\inf}$.
   \end{lemma}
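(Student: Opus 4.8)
The plan is to construct $\aut B$ as a tree register automaton that, given a pre-run $\rho$ of $\aut A$ on its input, guesses and verifies a certificate. The input alphabet of $\aut B$ is the alphabet of pre-runs of $\aut A$, i.e.\ tuples over $\Sigma\times Q$ together with the $\cal R$-indexed register values; note $\aut B$ does not re-verify local or extrema consistency here (that was already folded into the working assumptions), it only checks the certificate data. The state space of $\aut B$ records a constant amount of information per node: whether the current node \emph{starts} a witness path, and if so with which label from $R^{\pm}$; whether it is the \emph{last} node of its witness path; and, for propagating the cyclic labeling, the label $\lambda(\pi)$ of the path $\pi$ currently being traversed. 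To store the data of the certificate, $\aut B$ uses exactly one extra register, call it $r_{\mathrm{tgt}}$, whose intended value at a node $v$ is the target value of the witness path containing $v$ — that is, $r(w)$ where $r$ is the register associated to that path's label and $w$ is its last node.

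The verification then breaks into three local/regular checks. \textbf{(1) Partition into paths.} The start/last-node markers and the label propagation must describe a genuine partition of the tree into finite downward paths: at a node that is \emph{not} the last node of its path, exactly one of the two children must continue the path (inherit the non-start status and the same traversal label) and the other child must be a fresh start node; at a node that \emph{is} the last node, both children are start nodes. The root is a start node. The cyclic labeling constraint — the root path has the smallest label, and a child path's label is the cyclic successor of its father path's label — is a purely local constraint on the guessed labels, since ``father of $\pi'$'' is the path of the parent of $\pi'$'s start node. \textbf{(2) Finiteness of paths} is enforced by a parity (Büchi) condition: put rank $2$ on states marking a last node and rank $1$ elsewhere, so that along every branch a last-node state occurs infinitely often, forbidding an infinite path that never terminates. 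Combined with (1) this guarantees $\cal P$ really is a partition into finite paths. \textbf{(3) The register $r_{\mathrm{tgt}}$ and the gap condition.} First, $r_{\mathrm{tgt}}$ must be constant along each witness path: at a non-last node, $\mswarrow r_{\mathrm{tgt}}=r_{\mathrm{tgt}}$ on the continuing child; and at a last node carrying label $r^{+}$ or $r^{-}$, we check $r_{\mathrm{tgt}}=r$ at that node. Second, at each start node $v$ of a witness path with label $r^{+}$ (the $r^{-}$ case is symmetric), the gap is the interval with endpoints $r_{\mathrm{tgt}}(v)$ inclusive and $\rsup(v)$ exclusive, and the gap condition says no value of the form $s(v)$ or $s(\uparrow v)$ for $s\in\cal R$ lies in it; this is a Boolean combination of atomic comparisons among $r_{\mathrm{tgt}}$, $\rsup$ (or $\rinf$), and the $\cal R$-registers at $v$ and $\mswarrow v$-direction — so it can be expressed in $\delta$, using the normalization/parent-access conventions already in force. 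Care is needed to phrase ``$c$ is in the half-open interval with endpoint $r_{\mathrm{tgt}}(v)$ inclusive and $d$ exclusive'' correctly regardless of the orientation of the interval, but this is a finite case distinction on the label.

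I would then argue correctness in both directions. If $\rho$ has a certificate $\cal P$ with its (unique) cyclic labeling and target-value register, then these data satisfy all of (1)–(3) by definition, so $\aut B$ accepts; conversely, any accepting run of $\aut B$ exhibits start markers, last-node markers, labels and an $r_{\mathrm{tgt}}$-register that, by (1) and (2), assemble into a partition of the tree into finite downward paths with the cyclic labeling, and by (3) each such path satisfies the gap condition with the correct target value and associated extremum, hence $\cal P$ is a certificate for $\rho$. Finally, the size of $\aut B$ is polynomial: the state space is a product of a constant-size flag set with $R^{\pm}$ (of size $2|R|$), it uses $|\cal R|+1$ registers, and $\delta$ is a polynomial-size propositional formula over the relevant atoms, all computable in polynomial time from $\cal R$. \textbf{The main obstacle} I anticipate is purely bookkeeping: getting the local transition formula for the partition-and-labeling consistent across the two children (exactly one continues, the label propagates, the cyclic-successor relation holds at fresh starts, and the last-node/start-node interplay) together with the orientation-robust phrasing of the gap condition — none of it is conceptually deep, but the conjunction of cases must be written out carefully, and one must double-check that all the needed atoms ($r_{\mathrm{tgt}}$, $\sup r$, $\inf r$, parent values $s(\uparrow v)$) are among those permitted in $\delta$ under the standing conventions. $\qed$
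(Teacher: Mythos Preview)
Your approach is essentially the same as the paper's: represent the partition by start-node markers, propagate the cyclic labeling locally, carry one extra register for the target value, enforce its constancy along each path and equality with the relevant $r$-value at the path's endpoint, and check the gap condition at each start node. The paper's proof sketch does exactly this, representing $\cal P$ by the set $X=\{\min P\mid P\in\cal P\}$ and using a single register (called $t$ there) for the target value.

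There is, however, one concrete slip in your finiteness check. You put rank~$2$ on \emph{last-node} states and require them infinitely often on every branch. This is too strong. Consider a partition where every witness path has length~$2$ and always continues to the \emph{left} child, so the right child always starts a fresh path. Then the rightmost branch consists entirely of start nodes and never visits a last node, yet the partition is a legitimate partition into finite paths (and may well satisfy the gap condition). Your automaton would reject this certificate. The correct condition --- and the one the paper uses --- is that \emph{start} nodes (elements of $X$) occur infinitely often on every branch: if all paths are finite then along any branch the current path eventually terminates or is left, making the next branch node a start node; conversely, an infinite path yields a branch with only one start node after entering it. With rank~$2$ on start nodes rather than last nodes, your construction is correct.

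A minor second remark: the transition relation at a node $v$ has access to the registers of $v$ and of $\mswarrow v,\msearrow v$, but not of $\muparrow v$. There is no ``parent-access convention'' in the model. The gap test involving values $s(\muparrow v)$ must therefore be performed from the parent's perspective: at each node $u$, for each child $v$ that is a start node, compare $r_{\mathrm{tgt}}(v)$ and the relevant extremum register at $v$ against all $s(u)$ and $s(v)$. This is routine (and the paper's sketch is equally terse on this point), but your phrasing suggests a facility that is not there.
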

   \begin{proof}[Proof sketch]
     The automaton $\aut B$ guesses a certificate for $\rho$, as follows.
   A partition $\cal P$ of all vertices into connected sets can be represented as the set $X=\set{\min P\mid P\in\cal P}$ of least elements of the sets in $\cal P$. Then $\cal P$ is a partition into finite paths if and only if every branch contains infinitely many elements of $X$ and every node has at least one child in $X$.
   
   The automaton $\aut B$ guesses the partition $\cal P$,  represented by the set $X$ as described above, and verifies that it is a partition into finite paths.
   Additionally, each node is nondeterministically labeled by a label in $R^{\pm}$ and the automaton verifies that nodes in the same part of $\cal P$ have equal labels, and that this yields a cyclic labeling of $\cal P$. 
   \newcommand{\wit}{t}
      The automaton has a single register $\wit$.
      At each node, it nondeterministically selects a value 
      for this register and 
      verifies that the value $\wit(v)$ is the same for all nodes $v$ in the same part of $\cal P$, and that $\wit(v)=r(v)$ if $v$ is the largest node in its part (wrt. the ancestor order)
      and is labeled 
      $\rw+$ or~$\rw-$.
      Hence, $\wit(v)$  represents the target value of the witness path associated to $r$ and containing $v$.
      
      It remains to verify that the guessed partitions form a certificate for $\rho$. To this end, the automaton 
      verifies, for each node $v$ and register $r$, that 
   the interval with endpoint $\wit(v)$ (inclusively) and $\rsup(v)$ (exclusively)
   does not contain any value $c\in \data$ such that
    $c=s(v)$ or $c=s(\uparrow v)$ for some $s\in\cal R$.
   
   If all the above is confirmed, the automaton accepts. By construction, $\aut B$ accepts $\rho$ if and only if $\rho$ has a certificate.
   \end{proof}

The special case of Theorem~\ref{thm:decidable} where all associated languages are $\Sigma^*$ now follows easily via a reduction from the emptiness problem for TRASI to the emptiness problem for tree register automata. That the latter problem is decidable is shown in Section \ref{section:tra-decidability}.

Fix a TRASI $\aut A$. We construct a tree register automaton $\whA$ such that $\aut A$ has an accepting run if and only if $\whA$ has an accepting run. First, construct the tree register automaton $\aut B$ as given by  Lemma~\ref{lem:dec-para}. Then, construct an automaton $\whA$ which is the composition of $\aut A$ and $\aut B$:
given a labeled input tree $t$, the automaton $\whA$  nondeterministically selects  a pre-run $\rho$ of $\aut A$, tests that it is locally consistent, extremum-consistent, satisfies the parity condition, and uses $\aut B$ to verify that $\rho$ has a certificate. If all those tests are passed, then $\whA$ accepts. Note that $\whA$ is a tree register automaton (without extrema constraints).

Then $\whA$ has an accepting run if and only if $\aut A$ has an accepting run. 
In one direction, suppose $\aut A$ has an accepting run $\rho$ on some input tree $t$. Then $\whA$ also accepts $t$, as witnessed by the pre-run $\rho$, which is accepted by $\aut B$ by Lemma~\ref{lem:cor}.
Conversely, suppose that $\whA$ has an accepting run.
This means that there is a pre-run $\rho$ of $\aut A$ which is extrema-consistent, satisfies the parity condition, and has a certificate. By Lemma~\ref{lem:tight}, there is a run $\rho'$ of $\aut A$ whose states agree with the states of $\rho$. In particular, $\rho'$ satisfies the parity condition, so is an accepting run of $\aut A$. 

This completes the reduction, and proves Theorem~\ref{thm:decidable} in the special case where the languages associated to the registers are trivial.

\subsection{From TRASI to Tree Register Automata}\label{section:trasi-to-traB}
In this section, we lift the construction from the previous section to TRASIs where languages associated with registers can be arbitrary regular languages. 

Suppose that $\aut A$ is a TRASI. As before, we intend to construct a tree register automaton $\whA$ that  checks conditions on a pre-run that ensure that there is an actual run of $\aut A$. Again, we assume that the TRASI $\aut A$ is normalized.

We first adapt extrema-consistency for the case of arbitrary associated regular languages. Consider a register $r$ with associated language $L$. If the supremum of $r$ at a $\sigma$-node $v$ is $c$ then either (1) $\varepsilon \in L$ and $r$ has value $c$ at node $v$, or (2) $c$ is the supremum of the $r$-values of $L'$-descendants of the left child of $v$, where $L' \df \{x \mid \sigma x \in L\}$, or (3) likewise for the right child of $v$. It will therefore be convenient to assume that there are registers with such associated languages $L'$.

For a regular language $L$ there are only finitely many languages $u^{-1}L \df \{x \mid ux \in L\}$ for $u \in \Sigma^*$. We say that $\aut A$ is \emph{language-closed} if whenever a register $r$ has $L$ as associated language, then there is also a register with associated language $u^{-1}L$ for all $u \in \Sigma^*$. If $r$ is a register with associated language $L$, then we denote by $ur$ the register with associated language $u^{-1}L$.

From now on, we assume without loss of generality that $\aut A$ is language-closed.

Let $\rho$ be a pre-run of $\aut A$. Say that $\rho$ is \emph{regular extrema-consistent} if for every node $v$ labeled with $\sigma \in \Sigma$ and $r\in R$ with associated language $L$,
\begin{align*}
  \rsup(v)&=\max(d_{\text{sup}},\sigma\rsup(\mswarrow v), \sigma\rsup(\msearrow v))\\
  \rinf(v)&=\min(d_{\text{inf}},\sigma\rinf(\mswarrow v), \sigma\rinf(\msearrow v))
\end{align*}
where $d_{\text{sup}} \df  r(v)$ if $\varepsilon \in L$ and $d_{\text{sup}} = - \infty$ otherwise, and similarly for $d_{\text{inf}}$.

Clearly, every run of $\aut A$ is regular extrema-consistent, and a tree register automaton 
(without extrema constraints) can  easily verify that a given pre-run is extrema-consistent and locally consistent. Because of this,
\begin{quote}\itshape from now on, we assume 
  that every considered TRASI is normalized, and that all pre-runs are  locally consistent and regular extrema-consistent. 
\end{quote}

Next we adapt certificates. We introduce witness families that ensure that for each node $v$ and each register $r$ with associated language $L$, the supremum of $r$ at $v$ is approached by values of $r$ of $L$-descendants of $v$. Essentially, again, this will be certified by a path starting from $v$ on which there is a sequence of nodes $v_1, v_2, \ldots$ with nodes $w_1, w_2, \ldots$ below them (but not necessarily on the path) such that (1) there is an $L$-path from $v$ to all $w_i$ and (2) the $r$-values of the $w_i$ tend to $c$. The paths from $v_i$ to $w_i$ will be paths in the witness family. The existence of such witness families will be verifiable by tree register automata (without suprema constraints). 

A complication arises from using arbitrary regular languages. Suppose that~\mbox{$\calB = (Q, \delta, s, F)$} is a deterministic finite state automaton for $L$.  If a path from $v$ to $v_i$ brings $\calB$ from its initial state to state $q$, then the path from $v_i$ to $w_i$ has to be labeled by a word in the language $L(B_q)$ of $B_q \df (Q, \delta, q, F)$ to ensure that the path from $v$ to $w_i$ is in $L$. Furthermore, the extrema of all nodes $v$ along a path $\pi$ have to be witnessed as, intuitively, the automaton $\calB$ is ``started'' at each node $v \in \pi$.

As in the previous section, our goal is to state properties of a partition $\calP$ of a pre-run of our TRASI $\calA$ and a labeling $\lambda\from \cal P\to R^{\pm}$ that guarantee the existence of a run of $\calA$ on the same input tree (i.e., a pre-run that is consistent with respect to the suprema). For the case where all associated languages may be arbitrary regular languages, the labeling is slightly more involved and depends on the associated languages. 

For ensuring witnesses for a register $r$ with associated language $L$ for all nodes, we consider trees that encode all runs of an automaton for $L$ along paths of an input tree. 
Formally, let $\calB = (Q, \delta, s, F)$ be a deterministic finite state automaton and $<$ an arbitrary order on $Q$. 
Let $|Q|=m$; the 
elements of $\set{1,\ldots,m}$ will be called \emph{tapes}.
The \emph{run tree} $\widehat{t}$ of a tree $t$, defined below, associates to each tape and node $v$ a state in $Q$. The idea is that tapes encode runs of $\calB$ along the path. However, whenever two tapes $i < j$ enter the same state, only the smaller tape $i$ continues this run; the other tape starts a new run from an ``unused'' state. 
See Fig.~\ref{fig:tapes} for the construction of the run tree in the case when
$t$ is a word.
This construction originates from \cite[Section~2]{DBLP:conf/dlt/Bojanczyk09}, where it is applied to words.
The construction for trees is performed analogously, by carrying out the same construction on each branch.

\begin{figure}
  \includegraphics[scale=0.8,page=3]{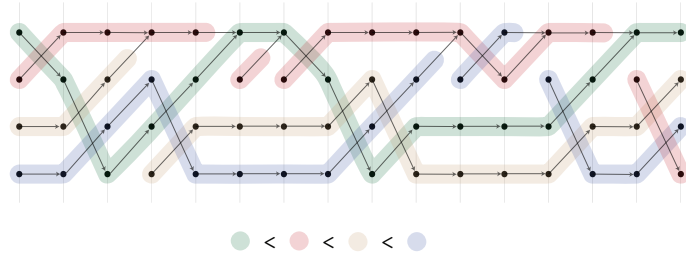}
  \caption{Construction of tapes along an input word.
  The automaton has $4$ states and is deterministic. 
  The dots are pairs consisting of a position in the word and a state of the automaton.
  The arrows denote the transitions of the automaton along the input letters. The colors correspond to the tapes.
  Every color occurs exactly once at each position, and when following an edge, the color can either remain unchanged or decrease.
  } 
  \label{fig:tapes}
\end{figure}

Formally the \emph{run tree}  $\widehat{t}$  of $\calB$ for a tree $t$ associates a tuple $\bar q\in Q^{m}$ to each node $v$ of $t$ as follows. In such a tuple $\bar q$, no two coordinates carry the same state; in particular all states from $Q$ appear in each such tuple.  
The root is annotated by $(q_1, \ldots, q_m)$ with $q_i < q_{i+1}$. If a node is labeled by $\sigma \in \Sigma$ and annotated by $\bar q = (q_1, \ldots, q_m) \in Q^m$ then its children are annotated by the permutation $\bar p = (p_1, \ldots, p_m)$ obtained according to the above intuition as follows. Let $\bar p' = (p'_1, \ldots, p'_m)$ be such that $p'_i = \delta(q_i, \sigma)$. Then $\bar p$ is constructed from  $\bar p'$ by only keeping the smallest copies of identical states and replacing the other copies by states not used in $\bar p'$ in increasing order. Formally the components of $\bar p$ are defined inductively for $i = 1, \ldots, m$:
\begin{itemize}
 \item if $p'_i$ is not equal to $p_1, \ldots, p_{i-1}$ then $p_i \df p'_i$,
 \item else $p_i$ is defined to be the smallest state not occuring in $\bar p'$ and $p_1, \ldots, p_{i-1}$.
\end{itemize}

The \emph{run tree of a register} $r$ is the run tree of a deterministic finite state automaton for the language associated with $r$.

We now state properties of a partition $\calP$ of a pre-run of our TRASI $\calA$ and a labeling $\lambda\from \cal P\to R^{\pm}$ that guarantee the existence of a run on the same input tree.

Intuitively, a regular labeling $\lambda\from \cal P\to R^{\pm}$ ensures that all ``parallel'' runs of an automaton for a language associated to some register are witnessed periodically.  Formally, let 
\[K^{\pm} \df \{(r^+, i), (r^-, i) \mid  r\in R, i\in\set{1,\ldots,m}\}\]
and fix an arbitrary total order on $K^{\pm}$.

A labeling $\lambda\from \cal P\to R^{\pm}$ is \emph{regular} if it is obtained from a cyclic labeling $\kappa\from \calP \to K^{\pm}$ as follows. Consider a path $\pi$ with $\kappa(\pi) = (r_+, i)$ and smallest node $v$. Suppose that the run tree of $r$ annotates $v$ with $(q_1, \ldots, q_m)$, then $\lambda(\pi)$ is $t_+$ where $t$ is the register with associated language $L(\calB_{q_i})$. Analogously for $\kappa(\pi) = (r_-, i)$. Intuitively, the path $\pi$ is a witness for all ancestors of $v$ for which the automaton for the language associated with $r$ reaches state $q_i$ at node $v$.

Henceforth (in this section), when considering a partition $\cal P$ of a partial run $\rho$, we assume that each path in $\cal P$ is labeled according to the (unique) regular labeling with labels from~$R^{\pm}$.

We now define when a partition $\cal P$ is a regular certificate. The partition $\cal P$  is a \emph{regular certificate} if each witness path $\pi$ satisfies the following \emph{gap condition}:
\begin{quote}\itshape
the gap of $\pi$ does not contain any value $c\in \data$ such that
 $c=s(v)$ or $c=s(\uparrow v)$ for some $s\in\cal R$, where $v$ denotes the starting node of $\pi$.
\end{quote}
As previously, we prove the following three properties of regular certificates:
\begin{enumerate}  
  \item every run has a regular certificate,
  \item if some accepting pre-run of $\aut A$ has a regular certificate then $\aut A$ has some accepting run,
 \item a tree register automaton can verify if a given pre-run has a regular certificate.
\end{enumerate}
These properties together easily yield Theorem~\ref{thm:decidable}. Their proofs are adaptions of the proofs Lemmata \ref{lem:cor},  \ref{lem:tight} and  \ref{lem:dec-para} from the previous section. For the sake of completeness, we repeat the essential proof steps.

\medskip

\begin{lemma}\label{lem:corG}
	Every run of $\calA$ has a regular certificate.
\end{lemma}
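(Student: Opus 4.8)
The plan is to adapt the greedy construction from the proof of Lemma~\ref{lem:cor} to the setting with arbitrary regular associated languages, where the main new ingredient is that the labels now come from $K^{\pm}$ (keeping track of which tape of the relevant run tree we are witnessing) rather than from $R^{\pm}$. Fix a run $\rho$ of $\calA$. We build the partition $\cal P$ in stages, maintaining at each stage a finite family of finite witness paths together with a \emph{cyclic} labeling $\kappa$ of these paths by $K^{\pm}$ (which in turn induces the regular labeling $\lambda$ by $R^{\pm}$, via the run trees of the relevant registers, exactly as defined above). Initially $\cal P$ is empty.

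\textbf{The stage step.} Pick a minimal node $v$ with $v\notin\bigcup\cal P$. If $v$ is the root, let $k\in K^{\pm}$ be the smallest label; otherwise the parent of $v$ lies on a path $\pi\in\cal P$ with label $\kappa(\pi)=k'$, and we let $k$ be the cyclic successor of $k'$ in $K^{\pm}$. Say $k=(r^+,i)$, the case $k=(r^-,i)$ being symmetric. Let $\calB=(Q,\delta,s,F)$ be the chosen deterministic automaton for the language $L$ associated with $r$, and let $(q_1,\ldots,q_m)$ be the tuple the run tree of $r$ assigns to $v$; write $q\df q_i$ and let $t$ be the register with associated language $L(\calB_q)$, so $t=ur$ for any $u$ labeling a path from the root that reaches state $q$ in $\calB$. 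We now need to pick a finite path starting at $v$ whose last node $w$ is an $L(\calB_q)$-descendant of $v$ and satisfies the gap condition for the label $t^+$: the interval with endpoints $t(w)$ (inclusively) and $t_{\sup}(v)$ (exclusively) must avoid every value of the form $s'(v)$ or $s'(\uparrow v)$ with $s'\in\cal R$. Since $\rho$ is a run, $t_{\sup}(v)=\sup\{\,t(w')\mid w'\text{ is an }L(\calB_q)\text{-descendant of }v\,\}$ (note that $\calA$ is language-closed, so such a register $t$ exists and its values are genuine suprema in $\rho$); as there are only finitely many forbidden values, some $L(\calB_q)$-descendant $w$ of $v$ has $t(w)$ large enough to avoid all of them (if the supremum is attained, $w$ realises it; otherwise $w$ is chosen with $t(w)$ strictly above the largest forbidden value below $t_{\sup}(v)$). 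Add the path from $v$ to $w$ to $\cal P$ with label $k$, and note that by the definition of the regular labeling $\lambda(\pi)=t^+$, so the gap condition for $\pi$ is exactly what we just arranged. Proceed to the next stage.

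\textbf{Limit and correctness.} Each node is eventually covered (the minimal uncovered node strictly increases in the well-order on nodes), each path added is finite, and the labeling is cyclic by construction, hence its induced $R^{\pm}$-labeling is regular. Every witness path satisfies its gap condition by the choice of $w$ at the stage that created it. Therefore the limiting family $\cal P$ is a regular certificate for $\rho$.

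\textbf{Main obstacle.} The only genuinely new point relative to Lemma~\ref{lem:cor} is verifying that the descendant $w$ we need actually exists with the correct \emph{labelled} path: we must find an $L(\calB_q)$-descendant of $v$ realising (or approaching) $t_{\sup}(v)$, and this is precisely guaranteed because $\rho$ is a run and $\calA$ is language-closed, so $t=ur$ has $t_{\sup}(v)$ equal to the supremum of $r$-values along $L(\calB_q)$-paths from $v$. Once this is in place, the bookkeeping of the run tree (which tape carries which run of $\calB$, and the translation $\kappa\mapsto\lambda$) is routine and follows the definitions given above verbatim.
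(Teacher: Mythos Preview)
Your proof is correct and follows essentially the same greedy construction as the paper's own proof: pick a minimal uncovered node $v$, determine the cyclic $K^{\pm}$-label $k=(r^+,i)$, read off the register $t$ from tape $i$ of the run tree of $r$ at $v$, and extend by a finite path to a descendant $w$ whose $t$-value makes the gap avoid the finitely many forbidden register values at $v$ and $\muparrow v$. You are in fact slightly more careful than the paper in one place: you explicitly require $w$ to be an $L(\calB_q)$-descendant of $v$ and justify its existence via $t_{\sup}(v)=\sup\{t(w')\mid w'\text{ is an }L(\calB_q)\text{-descendant of }v\}$, which is exactly what being a run (together with language-closure) gives you; the paper simply writes ``pick any descendant $w$'' and leaves this point implicit.
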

\begin{proof}
  Fix a run $\rho$. 
  We construct the family $\cal P$ in stages. Initially, $\cal P$ is empty, and each stage proceeds as follows.
Pick a minimal node $v$ such that $v\notin \bigcup \cal P$.
If the node $v$ is the root then let $k\in K^{\pm}$ be the smallest label in $K^{\pm}$. Otherwise, the parent of $v$ already belongs to some path $\pi\in\cal P$ which has an assigned label $k'$, and let $k\in K^{\pm}$ be the successor of $k'$ in cyclic order, according to the fixed order on $K^{\pm}$. Assume $k=(\rw+, i)$ for some $r\in R$. The  case when $k=(\rw-, i)$ is treated symmetrically. Suppose that the run tree of $r$ annotates $v$ with $(q_1, \ldots, q_m)$, then $l$ is the label $t_+$ where $t$ is the register with associated language $L(\calB_{q_i})$. 
 Pick any descendant $w$ of $v$ such that
the interval with endpoint $t(w)$ (inclusively) and $\rsup(v)$ (exclusively)
does not contain any value $c\in \data$ such that $c=s(v)$ or $c=s(\uparrow v)$ for some $s\in\cal R$.
Add the path joining $v$ with $w$ to the family $\cal P$ and associate the label $l$ with it.
Proceed to the next stage.

By construction, in the limit we obtain a regular certificate.
\end{proof}

\begin{lemma}\label{lem:tightG}For every (locally consistent, regular extrema-consistent) pre-run $\rho$ 
  which has a regular certificate  there is a run $\rho'$ whose states agree with the states of $\rho$.
\end{lemma}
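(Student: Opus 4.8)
The plan is to follow the structure of the proof of Lemma~\ref{lem:tight}, adapting it to track the extra bookkeeping forced by regular associated languages. As before, fix a pre-run $\rho$ together with a regular certificate $\cal P$ (and its regular labeling $\lambda$, obtained from the underlying cyclic labeling $\kappa\from\calP\to K^{\pm}$). I would enumerate the tree nodes as $v_0,v_1,\ldots$ so that every node appears after all its ancestors, and inductively build pre-runs $\rho_{v_0},\rho_{v_1},\ldots$ that all have the same states as $\rho$ and all retain $\cal P$ as a regular certificate, each obtained from the previous one by applying inside the subtree $t_{v}$ an arbitrary monotone bijection $f\from\data\to\data$ fixing all values of the form $s(v)$ and $s(\uparrow v)$ for $s\in\calR$, and shrinking, for each witness path $\pi$ starting at $v$, the gap of $\pi$ to length below $\tfrac 1 n$ where $n$ is the depth of $v$. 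Such an $f$ exists precisely because the gap condition guarantees that the gap contains no value that must be fixed. Since each $\rho_v$ modifies values only inside $t_v$, the sequence converges pointwise to a pre-run $\rho'$ with the same states, which is still locally consistent and regular extrema-consistent and still has $\cal P$ as a regular certificate.

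The remaining task is to show $\rho'$ is an actual run, i.e.\ that for every node $v$ and every register $r$ with associated language $L$, the value $\rsup(v)$ equals $\sup\setof{r(w)}{w\text{ is an }L\text{-descendant of }v}$, and dually for $\rinf(v)$. The inequality $\ge$ follows from regular extrema-consistency by an easy induction (unfolding the $\max$ of $d_{\text{sup}}$ and the shifted values $\sigma\rsup(\mswarrow v),\sigma\rsup(\msearrow v)$ down the tree). For $\leq$, fix $v$ and $r$, let $\calB=(Q,\delta,s,F)$ be the deterministic automaton for $L$, and consider the run tree $\widehat t$ of $r$. Pick an inclusion-maximal branch $\pi$ starting at $v$ along which $\rsup$ is constantly equal to $\rsup(v)$. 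If $\pi$ is finite, regular extrema-consistency at its largest node $w$ forces $\rsup(v)=r(w)$ with $w$ an $L$-descendant of $v$ (because $\varepsilon$ lies in the residual language reached at $w$, which is exactly why the $d_{\text{sup}}=r(w)$ branch of the $\max$ is the active one), and we are done. If $\pi$ is infinite, I would track the tape $i$ on which $\widehat t$ carries, at $v$, the initial state $s$ of $\calB$; following $\pi$ downwards, this ``run of $\calB$ starting at $v$'' either stays on tape $i$ forever or gets absorbed into a smaller tape, but in either case there is a well-defined sequence of tapes carrying it, and at each node $u\in\pi$ the state on that tape is exactly $\delta(s,\text{word from }v\text{ to }u)$.

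Using the cyclicity of $\kappa$ and the fact that $\cal P$ is a partition into finite paths, I would extract infinitely many nodes $u_0=v,u_1,u_2,\ldots$ on $\pi$ together with witness paths $\pi_0,\pi_1,\ldots$ such that $\kappa(\pi_n)=(r_+,j_n)$ where $j_n$ is the tape carrying the run-of-$\calB$-from-$v$ at $u_n$, the starting node of $\pi_n$ is $u_n$, and $\pi_n$ is disjoint from $\pi_0,\ldots,\pi_{n-1}$ and has starting node of depth at least $n$. By the definition of the regular labeling, $\lambda(\pi_n)=t_{+}$ where $t$ is the register whose associated language is $L(\calB_{q})$ for $q=\delta(s,\text{word from }v\text{ to }u_n)$; hence the largest node $w_n$ of $\pi_n$ satisfies: the word from $u_n$ to $w_n$ lies in $L(\calB_q)$, so concatenated with the word from $v$ to $u_n$ it lies in $L$, i.e.\ $w_n$ is an $L$-descendant of $v$, and moreover $t(w_n)=r(w_n)$ since $w_n$ is the largest node of a path labeled $t_+$. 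By the gap-shrinking property used when defining $\rho_{u_n}$, together with the fact that later bijections fix $s(u_n)$-type values (and in particular all quantities appearing here), one gets $|r(w_n)-\rsup(u_n)|<\tfrac1n$ already in $\rho'$, and $\rsup(u_n)=\rsup(v)$. Thus $\rsup(v)=\lim_n r(w_n)$ with all $w_n$ being $L$-descendants of $v$, which gives $\leq$ and finishes the proof; the infimum case is symmetric.

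The main obstacle, and the only genuinely new ingredient compared to Lemma~\ref{lem:tight}, is the bookkeeping in the last paragraph: one must verify that the tape-tracking construction of the run tree really does make the regular labeling hand out, infinitely often along $\pi$, a witness path whose target value is the $r$-value of a genuine $L$-descendant of $v$ (not merely of some descendant reached by an arbitrary path). This hinges on the precise interplay between (i) the property of the run tree that a given run of $\calB$ is carried by a non-increasing sequence of tapes and that every state appears on exactly one tape at each node, (ii) the cyclicity of $\kappa$ over the finite set $K^{\pm}$ ensuring each pair $(r_+,j)$ recurs, and (iii) the fact that $\calB_{q}$ started at $u_n$ accepts exactly the residuals needed to close up the $L$-path. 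Once that combinatorial core is in place, the analytic part (convergence of the shifted pre-runs, persistence of the gap condition under the bijections) is identical to the trivial-language case.
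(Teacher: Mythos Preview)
Your approach is essentially the paper's: the iterated gap-shrinking construction of $\rho'$, its convergence, the preservation of local consistency, regular extrema-consistency and the certificate, and the finite/infinite branch dichotomy all match, and your tape-tracking discussion makes explicit what the paper compresses into the phrase ``by the fact that their labeling is regular.''

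There is, however, a gap at precisely the point you flag as the crux. Your two claims ``the word from $u_n$ to $w_n$ lies in $L(\calB_q)$'' and ``$t(w_n)=r(w_n)$'' are both unjustified as written. A witness path $\pi_n$ with regular label $t_+$ is merely a finite path whose \emph{target value} is $t(w_n)$; nothing in the definition of a regular certificate forces the $\Sigma$-word along $\pi_n$ to lie in the language associated to $t$, so $w_n$ need not be an $L$-descendant of $v$. And $t$ and $r$ are \emph{distinct} registers of $R$ (the extra ones introduced by language-closure), with a~priori unrelated values in a pre-run; the label $t_+$ on $\pi_n$ says nothing about the relationship between $t(w_n)$ and $r(w_n)$. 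What the gap-shrinking actually delivers is $t(w_n)\to t_{\sup}(u_n)=\rsup(v)$ for the \emph{shifted} register $t=\vartheta(v,u_n)r$ at an arbitrary descendant $w_n$ --- not an $r$-value at an $L$-descendant of $v$. The paper's own proof is equally terse at this step (its displayed inequality~\eqref{eq:lim'G} involves the shifted register, while~\eqref{eq:limG} speaks of $r(w_n)$), so this is not a divergence from the paper so much as a place where both arguments rely on an implicit convention: that the registers $ur$ added by language-closure are copies of $r$ sharing its values. Under that convention $t(w_n)=r(w_n)$ is immediate, but the claim that $w_n$ is an $L$-descendant still does not follow from your reasoning and would need a separate argument (or one must accept, as the paper appears to, that $r(w_n)\to\rsup(v)$ for mere descendants $w_n$ already suffices).
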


  \begin{proof}
  To prove the lemma, fix a pre-run $\rho$ and some regular certificate~$\cal P$.
  A run $\rho'$ which has the same states as $\rho$ is constructed exactly as in the proof of Lemma \ref{lem:tight}.
  
	The difference with the proof of Lemma \ref{lem:tight} is in showing that $\rho'$ is a run, i.e., that for every node $v$ and register~$r$ with associated language $L$,
\begin{align}
  \rsup(v)&=\sup\setof{r(v')}{v'\textit{ is an $L$-descendant of }v}\label{ineq:supG}\\
  \rinf(v)&=\inf\setof{r(v')}{v'\textit{ is an $L$-descendant of }v}\label{ineq:infG}.
\end{align}
We prove~\eqref{ineq:supG} while~\eqref{ineq:infG} is proved analogously. 
The inequality $\ge$ in~\eqref{ineq:supG} follows from the fact that $\rho'$ is extrema-consistent. 

Denote by $\vartheta(v, v')$ the word between the nodes $v$ and $v'$. Recall that $\calA$ is language-closed and therefore for every register $r$ with associated language $L$ there is a register $\vartheta r$ with associated language $\vartheta^{-1}L$, for all $\vartheta \in \Sigma^*$. 

Towards proving $\leq$, pick an inclusion-maximal branch $\pi$ starting at $v$, such that 
$\rsup(v)=\vartheta(v, v')\rsup(v')$ for all $v'\in \pi$. If the branch $\pi$ is finite then $\rsup(v)=\vartheta(v, w)r(w)$ where $w$ is the largest node in $\pi$ (wrt. the ancestor order) and $\sigma\rsup(\swarrow w),\sigma\rsup(\searrow w)$ are both smaller than $r(w)$ and $\sigma$ is the label of $w$. By regular extrema-consistency of $\rho'$ it follows that the right-hand side in~\eqref{ineq:supG} is equal to $\vartheta(v, w)r(w)$, which proves~\eqref{ineq:supG} as $\vartheta(v, w)\rsup(v)=r(w)$. 

Suppose now that $\pi$ is infinite. 
To prove~\eqref{ineq:supG}, it suffices to exhibit a sequence $w_1,w_2,\ldots$ of descendants of $v$ such that 
\begin{align}\label{eq:limG}
\rsup(v)=\lim_{n\rightarrow+\infty} r(w_n).
\end{align}

Construct $\sup$-witness paths $\pi_0,\pi_1,\ldots$ starting at nodes $v=u_0, u_1, \ldots$ associated to registers $\vartheta(v, u_i)r$, as follows. Assuming we have constructed $\pi_0,\pi_1,\ldots,\pi_{n-1}$, the path $\pi_n$ is any $\sup$-witness path associated to $\vartheta(v, u_n)r$ such that its starting node $u_n$ is in $\pi$, is larger than $u_{n-1}$, and does not belong to $\pi_0\cup\ldots\cup \pi_{n-1}$. 
It is easy to see that such a node $u_n$ exists, by the fact that $\cal P$ is a partition into finite paths, and the fact that their labeling is regular.

Then the nodes $v=u_0,u_1,\ldots$ all lie on the path $\pi$. In particular, $\rsup(v)=\vartheta(v, u_n)\rsup(u_n)$ and $u_n$ has depth at least $n$, for all $n$. 
Fix $n$ and let $w_n$ be the largest node in $\pi_n$ wrt. the ancestor order.
By  the second item in the definition of the bijection $f$ obtained when defining the pre-run $\rho_{u_n}$, the following holds for that pre-run:
\begin{align}\label{eq:lim'G}
|\vartheta(v, w_n)r(w_n)-\rsup(v)|=|\vartheta(v, w_n)r(w_n)-\rsup(u_n)|<\frac 1 n.
\end{align}
We observe that the same inequality holds for the limit pre-run~$\rho'$.
This follows from the first item in the definition of the bijection $f$ obtained when defining the pre-runs $\rho_{w}$ for all $w$ which are descendants of $u_n$.

Since the inequality~\eqref{eq:lim'G} holds for the pre-run $\rho'$ and for all $n$, this proves~\eqref{eq:limG}, yielding~\eqref{ineq:supG}. Hence $\rho'$ is  a run.
   \end{proof}

\begin{lemma}\label{lem:dec-paraG}
     There is a register automaton $\aut B$ which accepts a pre-run $\rho$ of  $\aut A$ if and only if it has a regular certificate. \end{lemma}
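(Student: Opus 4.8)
The plan is to adapt the proof of Lemma~\ref{lem:dec-para} to the setting of regular certificates, with the extra machinery being the on-the-fly computation of the run trees of the (finitely many) deterministic automata associated with the registers of $\aut A$. First I would observe that, since $\aut A$ is language-closed, there are only finitely many associated languages, hence finitely many minimal deterministic automata $\calB$ to consider; for each such $\calB$ with state set $Q_\calB$ of size $m$, the run tree $\widehat t$ of $\calB$ over an input tree $t$ is deterministically computable along each branch by the inductive rule defining $\bar p$ from $\bar p'$, so a tree register automaton can maintain, in its finite state, the tuple $\bar q\in Q_\calB^m$ annotating the current node for every relevant $\calB$ simultaneously. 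This gives $\aut B$ access, at every node $v$, to the value $q_i$ needed to translate a $K^{\pm}$-label $(r_+,i)$ into the $R^{\pm}$-label $t_+$ with $t$ the register for $L(\calB_{q_i})$.

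Next I would describe the guessing exactly as in Lemma~\ref{lem:dec-para}: $\aut B$ guesses a set $X$ of nodes, one per part, representing the least elements of the parts of $\cal P$, and checks by a parity (Büchi) condition that every branch meets $X$ infinitely often and that every node has a child in $X$, ensuring $\cal P$ is a partition into finite paths whose father relation is as intended. It then guesses a $K^{\pm}$-label at each node, verifies that nodes in the same part carry the same label, and verifies the cyclicity condition (root part gets the least label of $K^{\pm}$; a father part labeled $k$ forces its children parts to be labeled by the cyclic successor of $k$) — all of this is a finite-state check. Using the run-tree information it then derives, for the starting node $v$ of each part with $K^{\pm}$-label $(r_\pm,i)$, the induced $R^{\pm}$-label $\lambda(\pi)=t_\pm$, which is again a finite-state computation. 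As before, $\aut B$ uses a single register $\wit$ to guess the target value of the witness path containing each node, checking $\wit$ is constant along a part and that $\wit(v)=t(v)$ when $v$ is the largest node of a part labeled $t_+$ or $t_-$ (note $t(v)$ is readable because $\aut A$ is language-closed, so $t\in R$).

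Finally $\aut B$ verifies the gap condition: for every node $v$ and every relevant register $t$ arising as the $R^{\pm}$-label of the part starting at $v$, the half-open interval between $\wit(v)$ (inclusive) and $\rsup(v)$ (exclusive), respectively $\rinf(v)$ (exclusive), contains no value of the form $s(v)$ or $s(\uparrow v)$ for $s\in\cal R$; this is a finite conjunction of register comparisons at $v$ and its parent, hence checkable by a tree register automaton. Putting the pieces together, $\aut B$ accepts $\rho$ iff the guessed data assemble into a regular certificate for $\rho$, which by definition exists iff $\rho$ has a regular certificate. The only genuinely new obstacle relative to Lemma~\ref{lem:dec-para} is bookkeeping the run trees of the associated automata inside the finite state of $\aut B$ and correctly translating $K^{\pm}$-labels into $R^{\pm}$-labels at the start of each part; once that is in place the argument is a routine transcription of the previous proof, so I would state it as a proof sketch mirroring Lemma~\ref{lem:dec-para}.
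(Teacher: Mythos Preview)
Your proposal is correct and follows essentially the same approach as the paper's proof sketch: guess the partition via its set $X$ of starting nodes, check finiteness of parts via a parity condition, guess and verify the cyclic $K^{\pm}$-labeling, simulate the deterministic automata for the associated languages in the finite state to derive the $R^{\pm}$-labels, and then reuse the target-value register and gap check from Lemma~\ref{lem:dec-para}. Your write-up is in fact more explicit than the paper's (which stops at ``simulating the finite state automata''), though note a small slip: in the gap condition the associated extremum should be $t_{\sup}(v)$ (respectively $t_{\inf}(v)$) for the derived register $t$, not $\rsup(v)$.
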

   \begin{proof}[Proof sketch]
     The automaton $\aut B$ guesses a certificate for $\rho$, as follows.
   A partition $\cal P$ of all vertices into connected sets can be represented as the set $X=\set{\min P\mid P\in\cal P}$ of least elements of the sets in $\cal P$. Then $\cal P$ is a partition into finite paths if and only if every branch contains infinitely many elements of $X$ and every node has at least one child in $X$.
   
   The automaton $\aut B$ guesses the partition $\cal P$,  represented by the set $X$ as described above, and verifies that it is a partition into finite paths.
   Additionally, each node is nondeterministically labeled by a label from $K^{\pm}$ and from $R^{\pm}$ and the automaton verifies that nodes in the same part of $\cal P$ have equal labels, and that this yields a regular labeling of $\cal P$ by simulating the finite state automata for the languageas associated to registers. 
\end{proof}
As previously,  the properties (1)-(3) listed above and which have now been proved, yield a reduction from the emptiness problem for TRASI to the emptiness problem for tree register automata. It remains to show that the latter problem is decidable.

\subsection{Emptiness for Tree Register Automata is Decidable}\label{section:tra-decidability}

The last step in the proof of Theorem \ref{thm:decidable} is an emptiness test for tree register automata. The proof is along the same lines as the proof for register automata on finite words. The result is folklore, but we provide the proof for the sake of completeness.

\begin{lemma}\label{lemma:emptiness_TRA}
  Emptiness of tree register automata is decidable. 
\end{lemma}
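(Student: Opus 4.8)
The plan is to reduce emptiness of a tree register automaton $\calA$ (over infinite binary trees with data values from the dense linear order $\data$) to emptiness of an ordinary parity tree automaton over a finite alphabet, which is decidable by Rabin's theorem \cite{Rabin72}. The key observation is that a tree register automaton cannot distinguish data values beyond the order type they induce locally: at each node the transition relation only compares a bounded number of values (the registers at $v$, at $\mswarrow v$, at $\msearrow v$, together with the domain constants $-\infty, +\infty$), so what matters is only the relative order of these finitely many values, i.e. an abstract \emph{local order type}. Concretely, for a fixed set of registers $R$ (recall $\calR = R \cup R_{\sup} \cup R_{\inf}$, though for a tree register automaton without extrema constraints we may just take $R$ and the input registers), define a finite alphabet $\widehat\Sigma$ whose letters record, for a node $v$: its $\Sigma$-label, its state, and the \emph{ordered partition} describing how the values in $\{r(v) : r \in R\} \cup \{r(\mswarrow v),\, r(\msearrow v) : r \in R\} \cup \{-\infty, +\infty\}$ compare to each other (which are equal, which is below which). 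This alphabet is finite, and the truth of every atomic formula of $\delta$ at $v$ — label tests, state tests, and order/equality tests $s < t$, $s = t$ — is determined by the letter at $v$ (using the child-letters to cross-check consistency between a node and its children). A parity tree automaton $\widehat\calA$ over $\widehat\Sigma$ can then be built that (i) guesses such a labeling, (ii) checks with its transition function that neighbouring letters are mutually consistent as "local order types" (the comparison of $r(\mswarrow v)$ against $s(\msearrow v)$ recorded at $v$ matches nothing extra, since by normalization such cross-comparisons need not appear; but in any case the value $r(\mswarrow v)$ as seen from $v$ must agree with the value $r(\mswarrow v)$ as seen from $\mswarrow v$ itself, and similarly on the right), (iii) checks that $\delta$ is satisfied at every node, that the root state lies in $F$, and (iv) imposes the same parity condition via $\Omega$.

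The main content — and the step I expect to be the main obstacle — is the correctness of this abstraction in the "if" direction: from an accepting run of $\widehat\calA$, i.e. a consistent labeling of the tree by local order types, one must reconstruct an actual accepting run of $\calA$, assigning genuine real (or $\Rbar$) values to all registers at all nodes so that all the recorded local comparisons hold simultaneously and globally. This is where density of $\data$ is used: one processes the tree top-down (say in breadth-first order), and at each node extends the already-chosen values to the new registers at its children. Since $\data = \Rbar$ is a dense linear order with endpoints and the constraints at each node only fix a finite ordered partition relative to finitely many already-placed values, one can always interpolate fresh values in the prescribed gaps (a value strictly between two chosen values, or equal to a prescribed one, or above/below all of them but $\le +\infty$, $\ge -\infty$); and because the child-letters were checked for consistency with the parent-letter, the new choices are compatible with everything the child-letters will demand of their own children. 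One must be slightly careful that the values assigned to a node from "above" (when it is a child) and the values it itself publishes (when it is a parent of the next level) are literally the same numbers — this is guaranteed by item (ii) of the consistency check. The other direction is immediate: given an accepting run of $\calA$, reading off the induced local order type at each node yields an accepting run of $\widehat\calA$, since all comparisons evaluate identically.

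Finally, $\widehat\calA$ is a parity tree automaton over the finite alphabet $\widehat\Sigma$ whose state set and transition relation are effectively computable from $\calA$ (the transition checks are Boolean combinations of the finitely many possible letters, and $\delta$ is a given propositional formula that we evaluate on letters), and its parity ranks are inherited from $\Omega$. Hence $L(\calA) \neq \emptyset$ iff $L(\widehat\calA) \neq \emptyset$, and the latter is decidable \cite{Rabin72}. This completes the proof of Lemma \ref{lemma:emptiness_TRA}, and thereby — via the reductions of Sections \ref{section:trasi-to-traA} and \ref{section:trasi-to-traB} — of Theorem \ref{thm:decidable}.
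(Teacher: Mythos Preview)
Your proposal is correct and follows essentially the same approach as the paper: abstract each node to a finite ``local order type'' recording the $\Sigma$-label and the relative order of the finitely many register values at $v,\mswarrow v,\msearrow v$ (plus the constants), build a parity tree automaton over this finite alphabet that checks local consistency and the transition relation, and reconstruct actual $\data$-values top-down using density. The paper organizes this into two separate parity automata (one, $\calA'$, simulating $\calA$ on type trees; another, $\calB'$, checking that a type tree is realizable by genuine data) and then takes their intersection, whereas you fold everything into a single automaton $\widehat\calA$, but this is a cosmetic difference.
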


We will show that one can effectively construct, from each tree register automaton $\calA$, a parity tree automaton $\calA'$ such that $\calA$ is non-empty if and only if $\calA'$ is non-empty. One obvious problem is that $\calA$ is over an infinite alphabet, while $\calA'$ needs to work over a finite alphabet. This problem can be tackled by a standard technique: the automaton $\calA'$ abstracts the alphabet and the data values of an input tree for $\calA$ into \emph{types} that capture the essential information. 

For preciseness, let $\calA$ be a tree register automaton with input alphabet $\Sigma$ and input registers~$I$. As we are interested in emptiness, we may assume that $I$ contains all registers of~$\calA$.
Then each node of an input tree $t$ for $\calA$ is labelled by an element from $\Sigma \times \data^I$. 

Consider a node $v$ of $t$ with child nodes $\mswarrow v$ and $\msearrow v$. Then the \emph{type} $\type(v)$ of $v$ is the set of all satisfied propositional formulas with variables of the form $\sigma, \mswarrow \sigma$, and $\msearrow \sigma$, for all $\sigma \in \Sigma$ as well as $s < t$ where $s, t \in \{r, \mswarrow r, \msearrow r \mid  r \in I\}$. Here satisfaction is defined as before. The set of possible types is finite and henceforth denoted~$\Sigma_\types$.

The input tree $t$ can be converted into a \emph{type tree} $\type(t)$ over $\Sigma_\types$ as follows. The tree $\type(t)$ has the same set of nodes as  $t$ and each node $v$ is labelled by $\type(v)$.

\begin{lemma}\label{lemma:ra_type_correspondance}
        One can construct a parity tree automaton $\calA'$ over $\Sigma_\types$ such that for any tree $t$ over $\Sigma \times \data^I$, 
$$\text{$\calA$ accepts $t\quad$ if and only if $\quad \calA'$ accepts $\type(t)$}.$$
\end{lemma}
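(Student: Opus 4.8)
The plan is to build $\calA'$ from $\calA$ by the standard "abstraction" technique, replacing the infinite-alphabet/data machinery by a finite-alphabet parity automaton that guesses, along with the type tree, a witnessing family of rational data values, one per register per node, and checks local consistency of the guessed values against the types. Concretely, $\calA'$ runs over $\Sigma_\types$; at each node it nondeterministically guesses a tuple in a finite set $D$ of "abstract" data values together with the parity state of $\calA$, and verifies that the guessed values at a node, its children, and the guessed labels realize the type of that node (this is a local, finite check), that the $\Sigma$-component in each type is consistent between a node and its parent, and that the parity acceptance condition of $\calA$ holds. The point is that $\calA$ has a run on some $t$ with $\type(t)$ equal to a given type tree $\tau$ iff $\calA'$ has an accepting run on $\tau$.

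First I would set up the finite abstraction of data values. Since the only operations $\calA$ performs on data are comparisons $s<t$ and $s=t$ among the finitely many terms $\{r,\mswarrow r,\msearrow r\mid r\in I\}$ together with domain constants, a run of $\calA$ over $t$ is determined — up to the relevant observable behaviour — by, for each node $v$, the \emph{local order type} of the register values at $v$ and its two children (and their relation to constants). So I would let $D$ be a fixed finite linearly ordered set large enough to realize every possible order type on $3|I|$ values plus constants (e.g. $D=\{0,1,\dots,3|I|+c\}$ with the constants placed appropriately), and have $\calA'$ label each node by an element of $D^{I}\times Q$. The transition relation of $\calA'$ at a node with guessed label $(\bar d,q)\in D^I\times Q$, children guesses $(\bar d_0,q_0)$, $(\bar d_1,q_1)$, and type $\tau\in\Sigma_\types$ demands: (i) the propositional formula $\delta$ of $\calA$ is satisfied when its label/state atoms are read off from $\tau$ and $q,q_0,q_1$, and its comparison atoms are evaluated using $\bar d,\bar d_0,\bar d_1$ and the constants; (ii) $\tau$ itself is \emph{consistent} with $\bar d,\bar d_0,\bar d_1$, meaning the comparison atoms it asserts agree with those induced by the guessed values (this is what forces the guessed abstract values to reflect the type tree); (iii) the $\Sigma$-part of $\tau$ is compatible with the $\mswarrow\sigma$/$\msearrow\sigma$ part of the parent's type. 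The parity function of $\calA'$ is $\Omega$ composed with the $Q$-projection, and root states are $F$.

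Then I would argue both directions. For the "only if": given an accepting run of $\calA$ on $t$, read off $\type(t)$; for each node collect the actual register values at $v$ and its children and map them order-preservingly (respecting constants) into $D$, yielding $\bar d$; together with the run's states this is an accepting run of $\calA'$ on $\type(t)$ — local consistency holds because the $D$-values were obtained by an order-preserving map, so all comparisons are preserved, and the type was read off from the same values. For the "if": given an accepting run of $\calA'$ on a type tree $\tau=\type(t)$ for some $t$, I must reconstruct actual data values in $\Rbar$ realizing, at every node simultaneously, the guessed local $D$-patterns. Here is the only subtle point: the guessed $D$-values at a node and at its parent both constrain that node, so I need a single global assignment $I\times(\text{nodes})\to\Rbar$ that is consistent with all these overlapping local finite patterns and with the constants. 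This is where density and completeness of $\data=\langle\Rbar,<\rangle$ are used: process the tree top-down, and when assigning the values of register tuple at a freshly reached node $v$, the constraints are only that they sit in a prescribed order relative to the already-fixed values at the parent $\mswarrow v$ (finitely many reals) and relative to the constants; since $\Rbar$ is dense with endpoints one can always extend, and since $\calA$ is normalized the left/right children of $v$ are never compared to each other, so the two children can be filled in independently without conflict. The resulting valuation, with the run's states and (by normalization) the automaton's input-register reading, is an accepting run of $\calA$ on the reconstructed $t$, and $\type(t)=\tau$ by construction.

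The main obstacle I anticipate is exactly that reconstruction step — verifying that the overlapping local $D$-patterns guessed by $\calA'$ can be glued into one globally consistent real-valued assignment. One has to be careful that the normalization hypothesis (no direct comparison between the two children of a node) is genuinely what makes the top-down extension conflict-free: it guarantees that at each step the only already-committed values relevant to $v$'s tuple are those of its parent (and the constants), so the extension is a one-sided interpolation problem in a dense order, which always has a solution. Everything else — finiteness of $\Sigma_\types$ and of $D$, effectiveness of the construction of $\delta$'s finite evaluation, and carrying over the parity condition unchanged — is routine. Decidability of $\calA'$'s emptiness is then Rabin's theorem, which together with Lemma~\ref{lemma:ra_type_correspondance} (noting that every type tree over $\Sigma_\types$ that is of the form $\type(t)$ can itself be recognized by a parity automaton, so one intersects) yields Lemma~\ref{lemma:emptiness_TRA}.
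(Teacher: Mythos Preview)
Your approach has a genuine flaw and also misreads what the lemma asks for.

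\textbf{The flaw.} You label each node by a tuple $\bar d\in D^I$ for a \emph{fixed finite} ordered set $D$, and require at every node that the comparison atoms in the node's type agree with the $D$-values of the node and its children. But these $D$-tuples are global: the tuple you guess at $\mswarrow v$ is the same $\bar d_0$ used in the transition at $v$ and the same $\bar d$ used in the transition at $\mswarrow v$. Now take any input tree $t$ in which along some infinite branch the type at every node asserts $r<\mswarrow r$ for a fixed register $r$ (such trees certainly exist and can be accepted by $\calA$). Your consistency condition forces the $r$-coordinates of the $D$-tuples along that branch to be strictly increasing in $D$, which is impossible since $D$ is finite. So the ``only if'' direction fails: there are $t$ accepted by $\calA$ for which your $\calA'$ has no run on $\type(t)$. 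Your own description of the forward direction (``map the actual values at $v$ and its children order-preservingly into $D$'') already hints at the problem: that map depends on the local window and does not produce a single globally consistent $D$-tuple per node.

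\textbf{What the lemma actually needs.} The type $\type(v)\in\Sigma_{\types}$ \emph{already} records the truth value of every atom $\sigma,\mswarrow\sigma,\msearrow\sigma$ and every comparison $s<t$, $s=t$ for $s,t\in\{r,\mswarrow r,\msearrow r:r\in I\}$. Since we assumed $I=R$, these are exactly the non-state atoms occurring in $\delta$, and a run of $\calA$ on $t$ is nothing but a $Q$-labelling of the nodes (the register values are the input). Hence whether a $Q$-labelling is an accepting run of $\calA$ on $t$ depends only on $\type(t)$: local consistency at $v$ is determined by $\type(v)$ together with the three states. So $\calA'$ can simply take the same $Q$, $F$, $\Omega$, and declare $(q,q_0,q_1)$ a valid transition at input letter $\tau\in\Sigma_{\types}$ iff $\delta$ holds when its state atoms are evaluated by $q,q_0,q_1$ and all other atoms are read off from $\tau$. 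No abstract data values, no reconstruction.

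\textbf{The misplaced reconstruction.} Your top-down argument that glues local $D$-patterns into a global $\Rbar$-valuation using density is not needed here: in this lemma $t$ is \emph{given}, so for the ``if'' direction you just transport the $Q$-labelling from $\type(t)$ back to $t$. The reconstruction you sketch is precisely the content of the \emph{next} lemma (recognizing which $\Sigma_{\types}$-trees are of the form $\type(t)$), and there it is carried out on the types themselves, not on auxiliary $D$-tuples.
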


\begin{proof}This follows from the fact that whether $\calA$ accepts $t$ depends only on the formulas satisfied by  each triple of neighbouring nodes.
\end{proof}

\begin{lemma} \label{lemma:ra_type_consistency}
        One can construct a parity automaton $\calB'$ over $\Sigma_\types$ such that
        for every tree~$t'$ over~$\Sigma_\types$:
        {\small
        $$\text{$\calB'$ accepts $t'$ if and only if $t'=\type(t)$ for some tree $t$ over  $\Sigma \times \data^I$}.$$
        }
\end{lemma}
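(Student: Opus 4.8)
We need a parity automaton $\calB'$ over the finite alphabet $\Sigma_\types$ recognizing exactly the type trees, i.e., those $\Sigma_\types$-labeled trees $t'$ for which there exists some tree $t$ over $\Sigma \times \data^I$ with $\type(t) = t'$. The plan is to characterize this image set by a combination of \emph{local} conditions (each node's type is consistent with its children's types) and a \emph{global} realizability condition ensuring that the local type information can be simultaneously realized by actual data values from the dense order $\data$.

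**Local consistency.** First I would observe that not every $\Sigma_\types$-labeled tree is a type tree: a type $\tau$ at a node $v$ encodes, among other things, inequalities between registers of $v$ and registers of its two children, and these must agree with the ``downward'' view recorded in the types at $\mswarrow v$ and $\msearrow v$. Concretely, the $\Sigma$-label components $\mswarrow\sigma$ (resp.\ $\msearrow\sigma$) appearing positively in $\type(v)$ must match the $\sigma$-component positively in $\type(\mswarrow v)$ (resp.\ $\type(\msearrow v)$); and whenever $\type(v)$ asserts an (in)equality of the form $r < \mswarrow s$ (or $=$), this is a statement purely about the values at $v$ and $\mswarrow v$, so it must be consistent with the intra-node inequalities among $\{s \mid s \in I\}$ recorded in $\type(\mswarrow v)$, and likewise on the right. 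This is a regular (in fact, a ``triple of neighbouring nodes'') condition, so it is checkable by a finite-state parity automaton with the trivial acceptance condition. Call this automaton $\calB'_{\mathrm{loc}}$.

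**Global realizability.** The subtler point is that local consistency alone does \emph{not} guarantee realizability over a \emph{linear} order $\data$: the types along a branch encode a system of strict and non-strict inequalities among finitely many values (the registers of each node, viewed through the chain of parent–child comparisons), and one must check that this infinite system of order constraints is satisfiable in $\data$. Since $\data = \langle \Rbar, <\rangle$ is a \emph{complete dense} linear order with endpoints, I claim the key fact is: a countable system of strict/non-strict order constraints is realizable in $\data$ if and only if it is realizable ``finitely at each node'' in a coherent way — more precisely, one can propagate along each branch a bounded amount of information, namely the \emph{order type of the partition of $\data$ induced by the current node's register values together with the constants}, which is a finite object, and check that consecutive such partitions refine/extend compatibly. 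Because density lets us always insert new values between prescribed ones, and completeness is not even needed for finitely many values per node, local compatibility of these partition-order-types along every branch is exactly what is required. This ``branch-wise'' condition is again regular, hence checkable by a parity automaton $\calB'_{\mathrm{real}}$ — in fact here, too, a trivial acceptance condition suffices, since the constraint is safety-like (any finite violation is detectable at one node or one edge).

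**Assembling $\calB'$ and the argument.** I would then set $\calB'$ to be the product of $\calB'_{\mathrm{loc}}$ and $\calB'_{\mathrm{real}}$, which is again a parity (here, even safety) automaton over $\Sigma_\types$. For correctness: if $t' = \type(t)$, then reading off the actual data values of $t$ witnesses all local and realizability conditions, so $\calB'$ accepts. Conversely, given $t'$ accepted by $\calB'$, one builds $t$ by transfinite/inductive assignment of data values: process nodes in a breadth-first (or any parent-before-child) order; at the root pick arbitrary reals realizing the root type's internal order constraints and the constant positions; at each child, extend the parent's partial assignment to the new registers using density to satisfy the cross-edge inequalities dictated by the parent's type and the child type's internal inequalities — the realizability condition checked by $\calB'_{\mathrm{real}}$ is precisely what guarantees this extension is always possible. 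Taking the resulting (pointwise-defined) labeling gives a tree $t$ over $\Sigma\times\data^I$; by construction every triple of neighbouring nodes satisfies exactly the formulas recorded in $t'$, so $\type(t) = t'$.

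**Main obstacle.** The one genuinely delicate step is formulating the global realizability condition as a \emph{finite-state} (regular) property: one must argue that only a bounded amount of order-theoretic information per node needs to be threaded along branches, i.e., that the ``partition-order-type'' abstraction is sound and complete for realizability in a dense order, and that compatibility of these abstractions along all branches suffices (no global obstruction can arise from the interaction of infinitely many branches, since the register values at distinct incomparable nodes are never directly compared — the automaton model only compares a node with its children). Making this precise, and checking that the resulting condition is indeed recognized by a finite automaton, is where the real work lies; everything else is routine.
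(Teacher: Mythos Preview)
Your proposal is correct and follows the same core approach as the paper---a local consistency check plus a top-down construction exploiting density of $\data$---but you introduce an unnecessary complication. Contrary to your assertion, local consistency of triples (your $\calB'_{\mathrm{loc}}$) already \emph{does} guarantee global realizability here: types only compare registers at a node with those at its two children, so the only overlap between the type at $v$ and the type at $\mswarrow v$ is the order type of the registers at $\mswarrow v$ itself, and triple-consistency forces these to agree. Given that, the top-down construction you describe in your ``Assembling'' paragraph goes through using only triple-consistency: at each step the already-assigned values at a node realize the correct internal order type, and density lets you extend to the children. This is exactly what the paper does---its $\calB'$ checks only that each triple $(\tau_v,\tau_{\mswarrow v},\tau_{\msearrow v})$ is realizable by some height-$2$ tree, and then builds $t$ top-down ``deforming'' witness values by density. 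Your separate automaton $\calB'_{\mathrm{real}}$ is therefore redundant, and the ``main obstacle'' you flag dissolves: no branch-wise or global information needs to be threaded, because incomparable nodes are never compared and consecutive nodes are handled by the triple condition.
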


\begin{proof}[Proof sketch]
  The automaton $\calB'$ only checks for consistency of neighbouring labels in $t'$. Before explaining why this suffices, we define this notion more precisely.

  We say that a triple of labels $\tau, \mswarrow \tau, \msearrow \tau \in \Sigma_\types$ is \emph{consistent} if there exists a finite binary tree $t$ over $\Sigma \times \data^I$ such that in $\type(t)$ the root and its two sons  have labels $\tau, \mswarrow \tau, \msearrow \tau$, respectively. The tree $t$ is called a witness for $\tau, \mswarrow \tau$, and $\msearrow \tau$. 

  Note that it suffices to consider witnesses which are binary trees of height~2. Therefore, testing whether a triple of labels $\tau, \mswarrow \tau, \msearrow \tau$ is consistent amounts to testing satisfiability of a formula over constantly many variables $\Sigma \times \data^I$. It boils down to testing if propositional variables that encode inequalities of register values are consistent. For example, if $\tau$ states that two registers of the right child compare in a certain way, say $\msearrow s < \msearrow t$ for two registers $s$ and $t$, then $\msearrow \tau$ must agree with that, i.e.\ it must state that $s < t$. This discussion implies that set of consistent triples is computable. Obviously, it is also finite, since $\Sigma_\types$ is finite.

  The automaton $\calB'$ checks that for every node $v$ with children $\mswarrow v$ and $\msearrow v$, the triple of labels of these nodes is consistent. 
        
  Clearly, if $t'=\type(t)$ for some tree $t$ over $\Sigma \times \data^I$, then $t'$ is accepted by $\calB'$. It therefore remains to show that if a tree $t'$ is accepted by $\calB'$, then there exists a tree $t$ over $\Sigma \times \data^I$ such that $t=\type(t')$. We construct the tree $t$ in a top-down fashion, defining the labels of the nodes of $t$ by induction on their height. In each step, we use the values provided by the witnesses, and deform them using density of $\data$ so that they fit well into the values of $t$ constructed so far.\end{proof}

\begin{proofof}{Lemma \ref{lemma:emptiness_TRA}}
  From the above two lemmas it follows that $\calA$ is non-empty if and only if there is a tree $t'$ over $\Sigma_\types$ which is accepted both by $\calA'$ and $\calB'$. Indeed, if $t'$ is accepted by $\calA'$ and $\calB'$, then there is a tree $t$ over $\Sigma \times \data^I$ with $t = \type(t')$ by Lemma \ref{lemma:ra_type_consistency} which is accepted by $\calA$ due to Lemma \ref{lemma:ra_type_correspondance}. The other direction is immediate. 

  Since parity automata are closed under intersection, there is a parity tree automaton $\calC'$ which accepts a tree $t'$ if and only if it is accepted both by $\calA'$ and by $\calB'$. Therefore, deciding emptiness of $\calA$ is equivalent to deciding emptiness of $\calC'$. 
\end{proofof}

\section{Satisfiability of Two-Variable Logic with two Orders}\label{section:FO2toRA}
In this section we prove Theorem~\ref{theorem:satisfiability}, that is 
present a decision procedure for two variable logic over countable structures that include one tree order $<_1$, one linear order $<_2$, and access to binary atoms definable in MSO over~$<_1$. Our decision procedure uses the emptiness test for tree register automata with suprema and infima constraints developed in the preceding section. 

The \emph{two-variable fragment of first-order logic} (short: $\fotwo$) restricts first-order logic to the use of only two variables, which can be reused arbitrarily in a formula. The \emph{two-variable fragment of existential second-order logic} (short: $\esotwo$) consists of all formulas of the form $\exists R_1\ldots\exists R_k \varphi$ where $R_1,\ldots,R_k$ are relation variables and $\varphi$ is a $\fotwo$-formula. Since each $\fotwo$-atom can contain at most two variables we assume, from now on and without loss of generality, that all relation symbols are of arity at most two; see \cite[page 5]{GradelKV97} for a justification. Restricting the set quantifiers of $\esotwo$ to be unary yields the fragment $\emsotwo$.

Our interest is in the satisfiability problems for $\esotwo$ and  $\emsotwo$ with two orders $<_1$ and $<_2$. It is easy to see that two-variable logics cannot express that a binary relation is transitive and in particular that it is an order. Therefore we study two-variable logics, where certain relation symbols are supposed to be interpreted only by order relations. More formally, for a class $\calK$ of structures  over some fixed signature $\Delta$, we write $\esotwo(\calK)$ if we are only interested in evaluating formulas in $\esotwo$ with structures from $\calK$. An $\esotwo(\calK)$ formula is satisfiable if it has a model $\frakA \in \calK$, and we say that such a formula is \emph{$\calK$-satisfiable}. Two $\esotwo(\calK)$ formulas are equivalent, if they are satisfied by the same structures from $\calK$.

By $\esotwo(<_1, <_2)$ and  $\emsotwo(<_1, <_2)$ we denote that $\Delta = \{<_1, <_2\}$ where $<_1$ and $<_2$ are interpreted as orders. Often we will be interested in orders of a certain shape only, which we will then state explicitly. For instance, when saying ``$\esotwo(<_1, <_2)$ for a tree order $<_1$ and a linear order $<_2$'', we mean $\esotwo$ formulas with two distinguished binary relation symbols $<_1$ and $<_2$ which are always interpreted by a tree order and a linear order.  For simplicity, we assume that domains always have size at least two and we often identify relation symbols with their respective interpretations.

Let  $\emsotwo(\mso(<_1), <_2)$ denote the set of \mbox{$\emsotwo(<_1, <_2)$} formulas that can use binary atoms definable in $\mso(<_1)$ in addition to atoms from $\{<_1, <_2\}$.
Below, we consider the problem of deciding  whether a given formula of this logic is satisfied in some countable structure equipped with a tree order~$<_1$ and linear order~$<_2$. The following is a restatement of Theorem~\ref{theorem:satisfiability}, and is the main result of the paper.

\begin{theorem}\label{theorem:maintheorem}
   For the logic ${\emsotwo(\mso(<_1), <_2)}$, where~$<_1$ is a tree order and $<_2$ is a linear order, the countable satisfiability problem is decidable.
\end{theorem}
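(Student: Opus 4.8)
The plan is to reduce the countable satisfiability problem for $\emsotwo(\mso(<_1),<_2)$ to the emptiness problem for tree register automata with suprema and infima constraints, which was shown decidable in Theorem~\ref{thm:decidable}. The reduction proceeds through several encoding steps, following the pattern sketched in the introduction for the finite and $\omega$-word cases but adapted to arbitrary countable tree orders.

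First I would normalize the input formula. Using the standard Scott-style normal form for two-variable logic, every $\emsotwo(\mso(<_1),<_2)$ formula can be converted (preserving satisfiability over the relevant class of structures) into a conjunction of a sentence $\forall x\forall y\,\chi(x,y)$ with $\chi$ quantifier-free, together with finitely many sentences of the form $\forall x\exists y\,\psi_i(x,y)$. Since the extra binary atoms are $\mso(<_1)$-definable, and $<_2$-atoms only compare the two free variables, each such $\psi_i$ or $\chi$ can be rewritten as a boolean combination of: unary predicates (the monadic second-order quantifiers of $\emsotwo$ become guessed colourings, i.e.\ extra labels), $<_2$-order atoms between $x$ and $y$, and MSO-definable binary relations over $<_1$. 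As in the introduction, the $\forall\exists$ sentences become \emph{existential constraints} ("every element of unary type $\sigma$ has a witness of unary type $\tau$ in a specified $<_1/<_2$-direction and MSO-relation") and the $\forall\forall$ sentence becomes a finite set of \emph{universal constraints} forbidding patterns.

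Next I would encode a candidate countable model as an input data tree for a TRASI. A countable tree order $<_1$ embeds into the complete infinite binary tree: the elements of the model sit at (a subset of) the nodes, with $<_1$ refined by the ancestor order, and the "horizontal" spread of an infinite branching or a dense order is handled by additional branching structure together with a linear order on the children/descendants; this is where the detailed combinatorics of representing an arbitrary countable tree order inside $\{0,1\}^*$ must be carried out carefully. Each node carries: a label from a finite alphabet (the unary type, plus the guessed monadic colours, plus bookkeeping flags indicating whether the node represents a real element and its position in the $<_1$-structure), one input data value from $\data$ recording the $<_2$-position of that element, and the MSO-definability of the extra binary atoms over $<_1$ is absorbed by composing with a deterministic parity tree automaton for the relevant MSO formulas (legitimate because the acceptance condition of a TRASI may be any MSO/parity condition, per the remark after the definition). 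The automaton then verifies the universal constraints locally along the tree structure, and verifies each existential constraint by using suprema/infima registers: for a fixed unary type $\tau$ and a fixed $<_1$-direction (e.g.\ "strictly below in the tree and $<_2$-greater"), the required witness exists for node $v$ iff the supremum over the appropriate subtree (selected by a regular path language encoding the $<_1$-direction) of the $\data$-values stored in the "$\tau$-register" exceeds the value at $v$ — exactly the kind of condition a TRASI can check. Directions involving ancestors or incomparable elements are handled symmetrically by guessing-and-propagating extremal values, and one must also enforce, via the weaker distinctness condition described in the introduction, that distinct model elements receive distinct $<_2$-values.

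The main obstacle I expect is the encoding step: showing that satisfiability over arbitrary countable $(\mathbf D,<_1,<_2)$ is equivalent to the existence of an accepting TRASI run on some data tree. This has two delicate parts. First, one must prove that the complete binary tree (with appropriate labellings, flags, and a $<_2$-valuation) is expressive enough to represent \emph{every} countable tree order together with a compatible linear order — including pathological cases like a dense $<_1$ with infinite branching at each point — and that the MSO-definable atoms over $<_1$ transfer faithfully to an MSO/parity condition on the tree. Second, and conversely, one must show soundness: from any accepting TRASI run one can \emph{read off} a genuine countable model, which requires that the extrema constraints checked by the automaton really do certify the existence of the demanded witnesses in the limit structure — this is precisely where Theorem~\ref{thm:decidable}'s guarantee that extrema-consistent certified pre-runs correspond to actual runs (Lemma~\ref{lem:tightG}) is used. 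Once the equivalence is established, decidability follows immediately from Theorem~\ref{thm:decidable}.
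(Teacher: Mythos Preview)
Your proposal is correct and follows essentially the same approach as the paper: normalize to existential/universal constraints, encode arbitrary countable tree orders in the full binary tree, and build a TRASI that checks the constraints via suprema/infima registers indexed by MSO types (handled by composition), together with a weak-distinctness check for $<_2$. The paper organizes this slightly more cleanly by factoring through an intermediate result (Proposition~\ref{proposition:emsotwo:binarytree}: decidability when $<_1$ is the ancestor order of the full binary tree) and a separate interpretation lemma (Lemma~\ref{lemma:interpretation}: a first-order definable \emph{universal} tree order inside the binary tree, built via extension axioms), so that the final proof of Theorem~\ref{theorem:maintheorem} is a one-line relativization; but the ingredients you list are exactly those the paper uses.
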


The theorem implies a decision procedure for general satisfiability (i.e., not necessarily restricted to countable domains) for the logic 
${\emsotwo(\fo(<_1), <_2)}$, i.e., for the logic where the additional atoms are restricted to be definable in $\fo$. This follows from the observation that (general) satisfiability of a formula $\exists \bar Z \varphi$, where $\varphi$ is first-order, is equivalent to (general) satisfiability of $\varphi$ (as a formula over extended signature $\{<_1, <_2, \bar Z\}$); and the downward Löwenheim-Skolem theorem for first-order logic.

\begin{corollary} \label{corollary:emsotwo:twolinearorders}
  Satisfiability of $\emsotwo(\fo(<_1), <_2)$ for a tree order $<_1$ and a linear order $<_2$ is decidable.
\end{corollary}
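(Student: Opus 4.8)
The plan is to reduce general satisfiability of an $\emsotwo(\fo(<_1), <_2)$ sentence to its \emph{countable} satisfiability, which is already decidable by Theorem~\ref{theorem:maintheorem}: every $\fo(<_1)$-definable binary atom is \emph{a fortiori} $\mso(<_1)$-definable, so any $\emsotwo(\fo(<_1), <_2)$ sentence is in particular an $\emsotwo(\mso(<_1), <_2)$ sentence. Fix such a sentence $\psi = \exists \bar Z\, \varphi$, where $\bar Z$ is a tuple of monadic relation variables and $\varphi$ is an $\fotwo$ formula over $\{<_1, <_2, \bar Z\}$ in which every binary atom not involving $<_1, <_2, \bar Z$ has the form $\alpha_i(s,t)$ for some $\alpha_i$ defined by an $\fo(<_1)$ formula $\theta_i(x,y)$; the goal is to decide whether $\psi$ has a model in which $<_1$ is a tree order and $<_2$ is a linear order.

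First I would observe that this class of structures is \emph{first-order} axiomatizable: writing $z \le_1 x$ for $z <_1 x \vee z = x$, a structure has $<_1$ a tree order exactly when it satisfies the $\fo$-sentence $\chi_{\mathrm{tree}}$ conjoining transitivity and irreflexivity of $<_1$, the sentence $\forall x\forall y\forall z\,(y <_1 x \wedge z <_1 x \to y <_1 z \vee z <_1 y \vee y = z)$, and $\forall x\forall y\,\exists z\,(z \le_1 x \wedge z \le_1 y)$; likewise $<_2$ is a linear order exactly when the structure satisfies the obvious $\fo$-sentence $\chi_{\mathrm{lin}}$. Next, let $\hat\varphi$ be the first-order sentence over $\{<_1, <_2, \bar Z\}$ obtained from $\varphi$ by replacing, with suitable renaming of bound variables, each atom $\alpha_i(s,t)$ by $\theta_i(s,t)$; note that $\hat\varphi$ need no longer be two-variable, which is harmless since from here on we use only plain first-order logic. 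By construction, on every structure with $<_1$ a tree order and $<_2$ a linear order, $\varphi$ and $\hat\varphi$ are equivalent, so $\psi$ has such a model if and only if the first-order sentence $\hat\varphi \wedge \chi_{\mathrm{tree}} \wedge \chi_{\mathrm{lin}}$ over the vocabulary $\{<_1, <_2, \bar Z\}$ has a model at all.

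Now I would apply the downward L\"owenheim--Skolem theorem: if that first-order sentence has a model, it has a countable model $\frakB$. Then the $\{<_1, <_2\}$-reduct of $\frakB$ is a countable structure in which $<_1$ is a tree order and $<_2$ a linear order, and it satisfies $\psi$, the interpretation of $\bar Z$ in $\frakB$ witnessing the existential second-order quantifiers and the equivalence $\varphi \leftrightarrow \hat\varphi$ translating $\hat\varphi$ back to $\varphi$. Since the converse implication (countable satisfiability implies satisfiability) is immediate, general and countable satisfiability of $\psi$ coincide, and the latter is decidable by Theorem~\ref{theorem:maintheorem}. The only step needing care is this last one: it relies on the classes ``$<_1$ is a tree order'' and ``$<_2$ is a linear order'' being first-order definable, which is precisely what lets us pass to a countable model without leaving the class of admissible structures; the two-variable restriction plays no role here and is only needed earlier, to place $\psi$ within the scope of Theorem~\ref{theorem:maintheorem}.
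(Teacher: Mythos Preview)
Your proof is correct and follows essentially the same approach as the paper: unfold the $\fo(<_1)$-defined atoms to obtain a plain first-order sentence, drop the existential monadic quantifiers by moving $\bar Z$ into the signature, and apply downward L\"owenheim--Skolem to reduce to countable satisfiability, which is handled by Theorem~\ref{theorem:maintheorem}. The paper's argument is terser and leaves implicit the point you rightly make explicit, namely that the class of admissible structures (tree order $<_1$, linear order $<_2$) is itself first-order axiomatizable, so passing to a countable elementary substructure stays within the class.
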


Furthermore, the theorem implies decision procedures for \emsotwo with \mso-definable and \fo-definable classes of orders $<_1$. We give two example corollaries.

\begin{corollary} \label{corollary:emsotwo:twolinearorders'}
   Satisfiability of $\emsotwo(<_1, <_2)$ is decidable when $<_2$ is a linear order and $<_1$ is (i) a linear order, or  (ii) the usual linear order on the naturals, or (iii) the usual linear order on the integers.
\end{corollary}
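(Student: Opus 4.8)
The plan is to reduce each of the three cases to Theorem~\ref{theorem:maintheorem}, by observing that each prescribed shape of $<_1$ is a subclass of the tree orders that is definable by an $\mso$ sentence over $<_1$, and that such a sentence can be folded into an $\emsotwo(\mso(<_1),<_2)$ formula.

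\emph{First, axiomatise the shape of $<_1$.} Being a linear order is the $\fotwo$ sentence $\chi_{\mathrm{lin}} := \forall x\forall y\,(x=y \vee x<_1 y \vee y<_1 x)$. A linear order is isomorphic to the order of the naturals exactly when it is, in addition, well-founded, has a least element, has no greatest element, and every non-least element has an immediate $<_1$-predecessor; well-foundedness is the $\mso$ property $\forall X\,(X\neq\emptyset \to \exists x\,(x\in X \wedge \forall y\,(y\in X \to x\le_1 y)))$ and the remaining conditions are first-order, so the conjunction $\chi_{\N}$ is an $\mso(<_1)$ sentence (starting with $\chi_{\mathrm{lin}}$ guarantees that "immediate successor/predecessor" are unambiguous). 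Similarly, a linear order is isomorphic to the order of the integers exactly when it is discrete, has no endpoints, and is \emph{connected} in the sense that for every $x$ and every set $X$ containing $x$ and closed under immediate successor and immediate predecessor one has $X=\N$-- i.e.\ $X$ is the whole domain; the conjunction $\chi_{\Z}$ (again prefixed by $\chi_{\mathrm{lin}}$) is an $\mso(<_1)$ sentence. In each of the three cases the named class is a subclass of the tree orders, so these are legitimate restrictions for Theorem~\ref{theorem:maintheorem}.

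\emph{Next, fold the sentence into the logic.} An $\mso(<_1)$ sentence $\chi$ is in particular an $\mso(<_1)$ formula with two (vacuous) free first-order variables, hence defines a binary atom $R_\chi(x,y)$; since domains are assumed nonempty, $\chi$ is equivalent to the $\emsotwo(\mso(<_1),<_2)$ sentence $\forall x\,R_\chi(x,x)$. Thus, given $\psi\in\emsotwo(<_1,<_2)$, I would form $\psi_i := \psi \wedge \forall x\,R_{\chi_i}(x,x)$ (with $\chi_1=\chi_{\mathrm{lin}}$, $\chi_2=\chi_{\N}$, $\chi_3=\chi_{\Z}$), which lies in $\emsotwo(\mso(<_1),<_2)$; in case (i) one may instead conjoin $\chi_{\mathrm{lin}}$ directly, as it is already $\fotwo$. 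Then $\psi$ has a model in which $<_1$ has the prescribed shape and $<_2$ is a linear order if and only if $\psi_i$ has a model in which $<_1$ is a tree order and $<_2$ is a linear order, and by Theorem~\ref{theorem:maintheorem} the latter property is decidable over countable structures. Finally I would remove the countability restriction: in cases (ii) and (iii) every model of $\psi_i$ has $<_1$-reduct isomorphic to $(\N,<)$ resp.\ $(\Z,<)$, hence a countable domain, so plain satisfiability and countable satisfiability coincide; in case (i), $\psi_i$ has the form $\exists\bar Z\,\varphi$ with $\varphi$ first-order over $\{<_1,<_2,\bar Z\}$, so the downward Löwenheim–Skolem theorem yields a countable model whenever there is one (this is precisely the argument deriving Corollary~\ref{corollary:emsotwo:twolinearorders} from Theorem~\ref{theorem:maintheorem}, of which case (i) is in fact a direct instance). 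Combining these steps gives decidability in all three cases.

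I do not expect a genuine obstacle here; the only points that need care are the correctness of the $\mso$ characterisations of $(\N,<)$ and $(\Z,<)$ -- in particular that well-foundedness and connectedness are not first-order, so that $\mso$ rather than $\fo$ is genuinely required for cases (ii)--(iii) -- and the routine observation that an $\mso(<_1)$ sentence can be realised as a binary $\mso(<_1)$-atom so as to match the exact syntactic form of Theorem~\ref{theorem:maintheorem}.
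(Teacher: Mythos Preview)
Your proposal is correct and takes essentially the same approach as the paper: axiomatise the shape of $<_1$ (in $\fo$ for case~(i), in $\mso$ for cases~(ii) and~(iii)) and appeal to Theorem~\ref{theorem:maintheorem}, handling the countability issue for~(i) via L\"owenheim--Skolem exactly as in Corollary~\ref{corollary:emsotwo:twolinearorders}. The paper's proof is terser and leaves the explicit $\mso$ axiomatisations of $(\N,<)$ and $(\Z,<)$ and the observation that their models are automatically countable implicit, whereas you spell these out carefully.
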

\begin{proof}For a tree order $<_1$ one can axiomatize in \fo that it is a linear order. Applying Corollary \ref{corollary:emsotwo:twolinearorders} 
  yields statement~(i). The usual order on the naturals and integers can be axiomatised in \mso, and thus statements (ii) and (iii) follow from Theorem \ref{theorem:maintheorem}.
\end{proof}

Using techniques developed by Pratt-Hartmann in \cite{Pratt-Hartmann18} for partial orders (see Section~\ref{section:reduction:esotwo:emsotwo}), one can reduce satisfiability of \mbox{$\esotwo(<_1, <_2)$} to satisfiability of \mbox{$\emsotwo(<_1, <_2)$} when $<_1$ and $<_2$ are linear orders. Combining this reduction with Corollary~\ref{corollary:emsotwo:twolinearorders'} yields the following result. 

\begin{theorem}  \label{thm:esotwo:twolinearorders}
  Satisfiability of $\esotwo(<_1, <_2)$ for linear orders $<_1$ and $<_2$ is decidable.
\end{theorem}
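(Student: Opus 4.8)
\textbf{Proof plan for Theorem~\ref{thm:esotwo:twolinearorders}.}
The plan is to reduce the satisfiability problem for $\esotwo(<_1,<_2)$ over two linear orders to the satisfiability problem for $\emsotwo(<_1,<_2)$ over two linear orders, which is already decidable by Corollary~\ref{corollary:emsotwo:twolinearorders'}(i). Given a formula $\psi = \exists R_1\ldots\exists R_k\, \varphi$ with $R_1,\ldots,R_k$ relation variables of arity at most two and $\varphi \in \fotwo(<_1,<_2,R_1,\ldots,R_k)$, the obstacle is that the $R_i$ may be binary, so we cannot simply absorb them into the $\emso$ prefix. The idea, following Pratt-Hartmann~\cite{Pratt-Hartmann18}, is to eliminate each binary $R_i$ by encoding it by a bounded amount of monadic information attached to the two endpoints, using the fact that the structure carries two linear orders and hence a rich ``coordinate system''.

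First I would recall the standard Scott-style normal form: every $\fotwo$-formula over the signature $\{<_1,<_2,R_1,\ldots,R_k\}$ is, over all structures, equisatisfiable with one of the shape $\forall x\forall y\, \chi_0 \wedge \bigwedge_{j} \forall x\exists y\, \chi_j$ where $\chi_0,\chi_j$ are quantifier-free, at the cost of introducing fresh unary predicates. Thus the only role of a binary $R_i$ is (i) in universal constraints forbidding certain $2$-types, and (ii) in the existential witnesses $\forall x\exists y\,\chi_j$. The key combinatorial point, exploited in \cite{Pratt-Hartmann18}, is that in a model one only needs, for each element $x$ and each existential requirement $j$, a single witness $y$; and one is free to re-choose all the binary relations $R_i$ as long as the chosen witnesses and the universal constraints remain consistent. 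One shows that if $\psi$ has any (countable) model, then it has a model in which each $R_i$ is ``piecewise simple'' with respect to the $<_1$/$<_2$ coordinates of its two arguments — concretely, $R_i(x,y)$ depends only on the unary types of $x$ and $y$ together with the relative position of $x$ and $y$ in $<_1$ and in $<_2$ (i.e.\ which of the regions I--IV of Figure~\ref{figure:structureanddatawords} contains $y$ relative to $x$), plus a bounded amount of extra monadic marking. All of that extra data is monadic, so it can be guessed by the $\emso$ prefix, and the constraints ``$R_i$ has this piecewise form'' and ``$\varphi$ holds'' become expressible as an $\emsotwo(<_1,<_2)$ sentence $\psi'$ such that $\psi$ is satisfiable over two linear orders iff $\psi'$ is.

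Having produced $\psi'$, I would invoke Corollary~\ref{corollary:emsotwo:twolinearorders'}(i): since $\psi'\in\emsotwo(<_1,<_2)$ with both $<_1,<_2$ linear orders, its satisfiability is decidable (by axiomatizing a linear order as a tree order in $\fo$ and applying Corollary~\ref{corollary:emsotwo:twolinearorders}, which in turn rests on Theorem~\ref{theorem:maintheorem}). One subtlety to double-check is the passage between arbitrary and countable models: $\emsotwo$ formulas enjoy a downward Löwenheim--Skolem property for the $\fotwo$-part once the second-order witnesses are fixed, so countable satisfiability of $\psi'$ coincides with satisfiability of $\psi'$, matching what is needed; and conversely, since any satisfiable $\esotwo(<_1,<_2)$ formula over two linear orders has a countable model, the reduction is faithful. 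The main obstacle, and the place where genuine work is required, is the normalization lemma of the second paragraph — showing that binary $R_i$'s can always be replaced by monadically-codable, coordinate-definable relations without losing models — which is exactly the content imported from \cite{Pratt-Hartmann18} and is referenced in Section~\ref{section:reduction:esotwo:emsotwo}; everything downstream of it is a routine translation plus an appeal to the already-proved Corollary~\ref{corollary:emsotwo:twolinearorders'}.
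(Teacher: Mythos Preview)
Your plan is correct and matches the paper's approach: reduce $\esotwo(<_1,<_2)$ to $\emsotwo(<_1,<_2)$ via the Pratt-Hartmann machinery (Proposition~\ref{proposition:binaryToUnary}), then invoke Corollary~\ref{corollary:emsotwo:twolinearorders'}(i). The one place your sketch is thin is the mechanism behind the ``piecewise simple'' model: the paper does not merely re-choose the $R_i$ in the given model but first \emph{clones} all non-King elements (Lemma~\ref{lemma:cloningOfTwoLinearOrders}) to obtain a larger model admitting a \emph{spread normal form} with disjoint witnesses (Lemma~\ref{lemma:cloningImpliesSpreading}), handling Kings separately via court structures, and only then eliminates the binary atoms (Lemma~\ref{lemma:spreadImpliesReduction}); your phrase ``plus a bounded amount of extra monadic marking'' hides this domain blow-up, which is what actually makes the witness-disjointness and hence the elimination go through.
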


The proof of Theorem \ref{theorem:maintheorem} is provided in Section \ref{section:tree-order-linea-order-decdiability}. The proof of Theorem~\ref{thm:esotwo:twolinearorders} is presented in Section \ref{section:reduction:esotwo:emsotwo}.

\subsection{Decision Procedure for a Tree Order and a Linear Order} \label{section:tree-order-linea-order-decdiability}

The main ingredient for the proof of Theorem \ref{theorem:maintheorem}  is a similar decidability result, for the the special case when $<_1$ is restricted to be the ancestor order of the complete binary tree (see Section \ref{section:AncestorOrderBinaryTree}). 
We then prove that every countable tree order can be interpreted in the ancestor order of the complete binary tree in Section \ref{section:InterpretingCountableOrders}.
Combining the two  yields Theorem \ref{theorem:maintheorem} (see Section~\ref{sec:final-proof}).

\subsubsection{Countable satisfiability for $\emsotwo(\mso(<_1), <_2)$ in the Full Binary Tree} \label{section:AncestorOrderBinaryTree}

In this section the focus is on deciding $\emsotwo(\mso(<_1), <_2)$ in the full binary tree.

\begin{proposition}\label{proposition:emsotwo:binarytree}
  For the logic $\emsotwo(\mso(<_1), <_2)$, where $<_1$ is the ancestor order of the full binary tree and $<_2$ is a linear order, the countable satisfiability problem is decidable.
\end{proposition}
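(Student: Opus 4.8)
The plan is to reduce countable satisfiability of $\emsotwo(\mso(<_1),<_2)$ over the full binary tree to the emptiness problem for tree register automata with suprema and infima constraints, which is decidable by Theorem~\ref{thm:decidable}. The key observation is the correspondence sketched in the introduction: a countable structure whose universe is (a subset of) the nodes of the complete binary tree, equipped with the ancestor order $<_1$ and a linear order $<_2$, can be encoded as an input data tree for a TRASI. Each node of the binary tree is either ``present'' in the structure or not (a bit in the finite label), the unary $\fotwo$-type of a present element is recorded in the finite label, the tree order $<_1$ is read off directly from the ancestor relation of the input tree, the MSO-definable binary atoms over $<_1$ are handled by composing with a parity (MSO) tree automaton, and $<_2$ is encoded by assigning to each present node a data value from $\data=\Rbar$, so that $u<_2 v$ iff the data value at $u$ is less than the data value at $v$. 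Since the domain of each present element must be distinct for the linear order to be faithfully represented, I would first reduce, exactly as described in the introduction for the finite word case, to checking a weaker ``separation'' condition enforceable by the automaton that still guarantees that an accepting input can be normalised to one with pairwise distinct data values (using density of $\Rbar$); alternatively one can note that a countable linear order embeds into $\Rbar$, and deal with possible coincidences by a standard perturbation argument.

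The main steps, in order, are as follows. First, I would put the given $\emsotwo(\mso(<_1),<_2)$ formula into a normal form: push the monadic second-order quantifiers into the finite alphabet (they become part of the guessed labelling, verified globally), and use the MSO acceptance mechanism permitted for TRASIs (noted after the definition of TRASI) to absorb the MSO-definable binary atoms over $<_1$ --- each such atom $\psi(x,y)$ over the ancestor order is MSO-definable, hence recognisable by a parity tree automaton running over the binary tree, whose ``acceptance'' of the pair $(x,y)$ can be encoded as a regular path condition between the nodes $x$ and $y$. Second, I would convert the resulting $\fotwo$-core into existential and universal constraints in the spirit of~\cite{SchwentickZ12}: each unary type $\sigma$ must, for each required ``direction'' determined by the combination of $<_1$ and $<_2$ (ancestor/descendant/incomparable in $<_1$, and $<_2$-smaller/$<_2$-larger), find a witness of a prescribed type, and certain patterns are forbidden. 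Third, and this is where the suprema/infima machinery is essential, I would implement these constraints with a TRASI. The universal constraints are local-to-global and easy. For the existential constraints, the automaton maintains, for each unary type $\tau$ and each register, the suprema and infima of the $<_2$-values of $\tau$-nodes located in the subtree below the current node, and similarly (by a bounded amount of guessing propagated along branches, using auxiliary registers whose associated languages are $\Sigma^*$) the extrema of $\tau$-values among ancestors and among nodes incomparable with the current node; a node of type $\sigma$ then checks that the appropriate extremum lies strictly above, or strictly below, its own $<_2$-value. The point of having actual infima and suprema available is that an existential constraint asking for a $\tau$-witness that is $<_2$-above $x$ among the descendants of $x$ is satisfiable for all $\sigma$-nodes simultaneously precisely when the relevant supremum behaves correctly --- a purely local check once the extrema are accessible, which is exactly the expressive power a TRASI adds over a plain tree register automaton.

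The correctness argument then runs in both directions: from a countable model one reads off an accepting run of the TRASI (choosing any embedding of $(<_2)$ into $\Rbar$), and from an accepting run one reconstructs a countable structure --- the present nodes with their types form the universe, $<_1$ is the ancestor order, and $<_2$ is the order induced by the data values, after the perturbation step ensuring distinctness; one checks that every existential and universal constraint, and hence the original formula, holds, using that the run's extrema registers genuinely compute suprema and infima. Decidability follows by Theorem~\ref{thm:decidable}.

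The hard part, I expect, is not the automata-theoretic plumbing but getting the constraint translation exactly right in the presence of the three incomparability-flavoured directions that a tree order produces (as opposed to the four quadrants of two linear orders in Figure~\ref{figure:structureanddatawords}): an element $x$ of the binary tree partitions the remaining nodes into proper ancestors, proper descendants, and pairwise-$<_1$-incomparable nodes, and within each of these the $<_2$-relationship can go either way, so one must argue that the finitely many resulting ``directional'' existential demands can all be discharged by extrema that a TRASI can track --- in particular that the ``incomparable'' extrema, which are not simply subtree extrema, are still computable by combining subtree extrema along the path from the root (the extremum over all nodes $<_1$-incomparable with $x$ is obtained from subtree suprema/infima at the siblings branching off the path from the root to $x$, which can be threaded through registers with associated language $\Sigma^*$). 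Making this bookkeeping precise, and verifying that no existential demand secretly requires a witness ``at infinity'' that the perturbation step would destroy, is the crux; everything else is an adaptation of the finite two-linear-orders construction together with the MSO-to-parity-automaton translation for the $\mso(<_1)$ atoms.
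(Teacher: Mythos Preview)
Your proposal is correct and follows essentially the same route as the paper: reduce to TRASI emptiness via a constraint normal form, track extrema of data values of nodes in each ``direction'', and handle distinctness by a weaker, checkable surrogate condition.

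Two points where the paper's organisation differs from yours are worth noting. First, rather than splitting into unary types plus the three $<_1$-directions (ancestor, descendant, incomparable) and treating the $\mso(<_1)$ atoms separately, the paper works uniformly with \emph{binary} $\mso(<_1)$-types $\tau(x,y)$ of bounded quantifier rank: it keeps one register $r_{\tau,\sup}$ (and $r_{\tau,\inf}$) per binary type $\tau$, intended to hold $\sup\{d(u)\mid (t,v,u)\models\tau\}$. This absorbs the direction and the $\mso$ atoms in one stroke, and the decomposition into ``above'', ``left subtree'', ``right subtree'' is then carried out via an MSO composition lemma (the paper's Lemma~\ref{lemma:composition}). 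Your claim that a binary $\mso(<_1)$ atom ``can be encoded as a regular path condition between $x$ and $y$'' is exactly what the paper isolates as Lemma~\ref{lemma:compositionB}, but note that it only holds after first annotating nodes with MSO-definable information about the surrounding subtrees; for incomparable $x,y$ the raw labels along the $x$--$\mathrm{lca}$--$y$ path do not suffice. Second, your ``perturbation argument'' for distinctness is made concrete in the paper as a \emph{weak diversity} condition checked by a separate automaton $\calA_{\neq}$: any two nodes with equal input value must be separated by a node on the connecting path at which \emph{no} register of $\calA_\varphi$ carries that value; this is what allows the shifting in the emptiness proof to produce a run with pairwise distinct input values. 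Both refinements are in the spirit of what you sketch, so there is no genuine gap.
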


	The high level idea for testing countable satisfiability of an $\emsotwo(\mso(<_1), <_2)$ sentence $\varphi$ is to construct, from $\varphi$, a TRASI $\calA$ such that $\varphi$ has a model if and only if $\calA$ accepts some input tree. Then Proposition \ref{proposition:emsotwo:binarytree} follows immediately as emptiness for TRASIs is decidable. 
  
  Instead of constructing $\calA$ from  a general $\emsotwo(\mso(<_1), <_2)$ sentence $\varphi$, it will be convenient to start from a normal form for such formulas. Say that an  $\emsotwo(\mso(<_1), <_2)$-formula is in \emph{Scott normal form}, if it adheres to the form  
    
    \[\exists \bar Z\Big( \; \forall x \forall y \psi(x,y) \wedge \bigwedge_i \forall x \exists y \psi_i(x, y) \Big)\] 
	where $\psi$ and all $\psi_i$ are quantifier-free formulas whose atoms are $\mso(<_1)$-formulas and relation symbols from $\bar Z \cup \{<_2\}$. It is well known that an $\emsotwo(\mso(<_1), <_2)$-formula can be effectively transformed into an equisatisfiable $\emsotwo(\mso(<_1), <_2)$-formula in Scott normal form (see, e.g., \cite{Scott1962} and \cite[page 17]{GraedelO1999}).

	Under the assumption that $<_2$ is a linear order, formulas in Scott normal form can be further simplified. An $\emsotwo(\mso(<_1), <_2)$-formula is in \emph{constraint normal form} if it is of the form $\exists \bar Z \psi$ where $\psi$ is a conjunction of \emph{existential constraints} of the form \[\forall x \exists y \Big(\bigvee_i \big(x <_2 y \wedge \eta_i(x,y)\big)  \vee \bigvee_i \big(x >_2 y \wedge \vartheta_i(x,y)\big)\Big)\] and of \emph{universal constraints} of the form \[\forall x \forall y \big(x <_2 y \rightarrow  \eta(x,y)\big) \quad \text{ or } \quad \forall x \forall y \big(x = y \rightarrow  \zeta(x)\big)\] where $\eta_i$, $\vartheta_i $, and $\eta$ are arbitrary, quantifier-free $\mso(<_1)$ formulas with free variables $x$ and~$y$ which may use relation symbols from $\bar Z$, and $\zeta$ is a quantifier-free $\mso$ formula with free variable $x$ over $\bar Z$.

	Formulas in Scott normal form can be easily converted into equisatisfiable formulas in constraint normal form by using that $<_2$ is a linear order (this leads to the restricted use of $<_2$-atoms in the constraints).
  \begin{lemma}\label{lemma:constraint-formulas}
		Every $\emsotwo(\mso(<_1), <_2)$-formula can be effectively transformed into an equisatisfiable formula in constraint normal form.
  \end{lemma}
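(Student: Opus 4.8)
The plan is to start from a formula $\exists \bar Z(\forall x\forall y\,\psi(x,y)\wedge\bigwedge_i\forall x\exists y\,\psi_i(x,y))$ already in Scott normal form (which is available by the cited transformations) and manipulate the quantifier-free matrices $\psi$ and $\psi_i$ using the assumption that $<_2$ is a linear order. The key observation is that linearity of $<_2$ gives, for any two elements $x$ and $y$, the trichotomy $x<_2 y$, $x=y$, or $x>_2 y$; moreover $x=y$ is itself expressible and, crucially, all other atoms of the formula are either $\mso(<_1)$-atoms or $\bar Z$-atoms, none of which mention $<_2$. So every occurrence of $<_2$ in $\psi$ and $\psi_i$ can be isolated by a case split on this trichotomy.

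First I would handle the universal part. Write $\psi(x,y)$ in a form that makes the $<_2$-literals explicit; since the only $<_2$-atom available is $x<_2y$ (and by symmetry of the universal quantifier, also the case $y<_2 x$), and $x=y$, a Boolean normalization lets me rewrite $\forall x\forall y\,\psi(x,y)$ as a conjunction of implications, one per outcome of the trichotomy: a clause guarded by $x<_2 y$, a clause guarded by $x>_2 y$, and a clause guarded by $x=y$. The first two, after possibly renaming $x$ and $y$ in the $x>_2y$ case, become universal constraints of the form $\forall x\forall y\,(x<_2 y\to\eta(x,y))$ with $\eta$ a quantifier-free $\mso(<_1)$-formula over $\bar Z$; the diagonal case becomes $\forall x\forall y\,(x=y\to\zeta(x))$ after substituting $y:=x$ and collapsing the now-trivial binary $\mso(<_1)$-atoms on the diagonal into a unary $\mso$-formula $\zeta(x)$ over $\bar Z$ (here one uses that $\mso(<_1)$-atoms evaluated with both arguments equal are themselves $\mso$-definable unary predicates of $x$, so they can be folded into $\zeta$). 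This yields exactly the two shapes of universal constraint in the claimed normal form.

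Next I would handle each existential conjunct $\forall x\exists y\,\psi_i(x,y)$. Put $\psi_i$ into disjunctive form with respect to the $<_2$-trichotomy: $\psi_i(x,y)\equiv (x<_2 y\wedge\alpha(x,y))\vee(x=y\wedge\beta(x))\vee(x>_2 y\wedge\gamma(x,y))$, where $\alpha,\gamma$ are quantifier-free $\mso(<_1)$-formulas over $\bar Z$ and $\beta$ is a unary $\mso$-formula over $\bar Z$ (again folding diagonal binary atoms into a unary formula). The middle disjunct $\exists y(x=y\wedge\beta(x))$ is simply $\beta(x)$, which we may absorb by adjoining it as an extra disjunct witnessed by $y:=x$; formally, $x<_2 y$ in the first big disjunction is required, so to keep the witness $y=x$ legal we instead replace $\beta(x)$ by noting it is equivalent to the universal-free statement $\beta(x)$ and can be pushed into one of the $\eta_i$ by taking $\eta_i(x,y):= \alpha(x,y)\vee(x{=}y\wedge\beta(x))$ — or, more cleanly, simply treat $x{=}y$ as one of the allowed directions and merge it into either the $<_2$ or $>_2$ block, since the constraint-normal-form disjunction only needs to be satisfied by \emph{some} $y$. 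After this, $\forall x\exists y\,\psi_i(x,y)$ has exactly the shape $\forall x\exists y\big(\bigvee_i(x<_2 y\wedge\eta_i(x,y))\vee\bigvee_i(x>_2 y\wedge\vartheta_i(x,y))\big)$ of an existential constraint. Conjoining all the resulting constraints under the original $\exists\bar Z$ gives the constraint-normal-form formula; equisatisfiability (in fact logical equivalence over structures where $<_2$ is linear) is immediate from the trichotomy, since each rewriting step was an equivalence under the axiom of linearity of $<_2$.

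The step I expect to be the main (though still routine) obstacle is the bookkeeping around the diagonal case $x=y$: one must be careful that binary $\mso(<_1)$-atoms and $\bar Z$-atoms, when both arguments are identified, genuinely collapse to $\mso$-definable unary predicates over $\bar Z$ and $<_1$, and that these can be packaged into the single allowed unary constraint $\forall x\forall y(x=y\to\zeta(x))$ without losing expressiveness; and, dually, that the $x=y$ disjunct of an existential conjunct can always be legally absorbed into the $<_2$/$>_2$ disjunctions (or else handled by the observation that it only weakens the existential requirement, so dropping-and-re-adding it as a universal-style side condition is sound). Everything else is a mechanical Boolean normalization driven by the $<_2$-trichotomy, and no genuinely new idea beyond "linearity of $<_2$ lets us case-split on the position of $y$ relative to $x$" is needed.
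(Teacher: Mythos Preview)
The paper omits the proof entirely, calling it standard and citing \cite{ZeumeH16}, so there is no paper-side argument to compare against; your case-split on the $<_2$-trichotomy is exactly the standard route and is correct.

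One point deserves tightening: your handling of the diagonal disjunct $x=y$ in an existential conjunct is muddled. Writing $\eta_i(x,y):=\alpha(x,y)\vee(x{=}y\wedge\beta(x))$ does not help, since $\eta_i$ sits under the guard $x<_2 y$ and hence the added disjunct is dead. The clean fix uses the paper's standing assumption that domains have at least two elements: put $\eta_i(x,y):=\alpha(x,y)\vee\beta(x)$ and $\vartheta_i(x,y):=\gamma(x,y)\vee\beta(x)$. Then whenever $\beta(x)$ holds, any $y\neq x$ (which exists) makes the existential constraint true; otherwise a genuine $<_2$- or $>_2$-witness is required, as intended. With this adjustment your argument goes through verbatim.
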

The proof is standard (see, e.g., \cite[Lemma 2.2]{ZeumeH16}) and  therefore omitted here.

For the construction of $\calA$ we will henceforth assume that an \mbox{$\emsotwo(\mso(<_1), <_2)$}-formula in constraint normal form is given. Let $\Theta$ be the signature of $\varphi$ and $\Sigma$ the set of unary, quantifier-free types over~$\Theta$.
  
  The automaton $\calA$ works on input trees over $\Sigma$ with a single data value per node. Intuitively the tree order of such an input tree will encode the order $<_1$ in $\varphi$  and the order on data values will encode the order $<_2$. In this intuition we assume that all data values of data trees are distinct: while the automaton cannot ensure this, we will later see a workaround. The automaton $\calA$ will be constructed such that it accepts a data tree over $\Sigma$ if and only if $\varphi$ is satisfiable. Even more, there will be a one-to-one correspondence (up to morphisms of data values) between data trees $t$ with distinct data values accepted by $\calA$ and models of~$\varphi$. We next outline this correspondence and then show how to construct~$\calA$. 
  
  Each structure $\calS$ over $\Theta$ corresponds to a data tree $t$ over $\Sigma$ with distinct data values as follows: the nodes of $t$ are the elements of the domain of $\calS$ labeled by their unary type in $\calS$; the tree order of $t$ is given by $<_1$; and the order of the data values is given by $<_2$, that is, data values are assigned such that $d(u) < d(v)$ if and only if $u <_2 v$ for all nodes $u$ and~$v$. Here and in the following, $d(v)$ denotes the single data value of a node $v$.  Since $<_2$ is a linear order, all nodes of $t$ have distinct data values. 
  
  On the other hand, a data tree $t$ with distinct data values corresponds to the structure~$\calS$ whose domain contains all nodes of $t$; unary relations are chosen  according to the $\Sigma$-labels of~$t$; and $<_1$ and $<_2$ are interpreted according to the tree order of $t$ and the order of the unique data values, respectively. 
  
  For a data tree $t$ with distinct data values, we say that $t$ \emph{satisfies} $\varphi$ if its corresponding structure $\calS$ satisfies $\varphi$.  

  \begin{lemma}\label{lemma:automaton}
		From an  $\emsotwo(\mso(<_1), <_2)$-formula $\varphi$ in constraint normal form one can effectively construct a TRASI $\calA$  such that
		\begin{enumerate}
		 \item for data trees $t$ with distinct values, $\calA$ accepts $t$ if and only if $t$ satisfies $\varphi$, and 
		 \item if $\calA$ accepts some data tree, then it also accepts one with distinct data values.
		\end{enumerate}
  \end{lemma}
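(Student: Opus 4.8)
The plan is to construct the TRASI $\calA$ directly from the constraints of $\varphi$, following the "register automaton checks constraints" recipe sketched in the introduction, but now adapted to the tree setting with extrema terms. I would first set up the state space and registers. For each unary type $\sigma\in\Sigma$ and each "direction'' a node can look, I introduce registers that track extrema of data values of $\sigma$-labelled nodes in the relevant region of the tree. Because $<_1$ is the ancestor order of the binary tree, the regions relevant for an element $v$ split into: strict ancestors of $v$ (a $<_1$-chain, readable along the unique path to the root and hence encodable in the state via a finite abstraction), strict descendants of $v$ (handled by $\sup r_\sigma$, $\inf r_\sigma$ terms for a register $r_\sigma$ guessed to hold $d(u)$ at $\sigma$-nodes $u$ and, say, $+\infty/-\infty$ elsewhere, with associated language $\Sigma^*$), and the "incomparable'' nodes, i.e.\ descendants of proper ancestors of $v$ lying off the path — these are handled by extrema terms with associated regular path languages that exclude the branch taken by $v$, passed down along the path together with the finite ancestor abstraction. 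Each $\mso(<_1)$-definable binary atom $\eta_i(x,y)$ depends only on the positions of $x,y$ relative to their least common ancestor; by the standard MSO--automata correspondence this reduces to a finite amount of information about the LCA situation, which the automaton propagates.

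Second, I would encode the constraints. A universal constraint $\forall x\forall y\,(x<_2 y\to\eta(x,y))$ becomes: at no node $v$ does there exist a node $w$ with $d(v)<d(w)$ and the $\mso(<_1)$-pattern $\eta$ violated; since the forbidden $\eta$-patterns are finitely many LCA-configurations, for each configuration $C$ and each pair of types $(\sigma,\tau)$ this is a statement of the form "if $v$ has type $\sigma$ then the extremum of $d$ over $\tau$-nodes lying in region $C$ relative to $v$ does not exceed/fall below $d(v)$'' — precisely an inequality between a register value and a $\sup/\inf$ term, checkable at $v$. The diagonal constraints $\forall x(x=x\to\zeta(x))$ are just local label restrictions. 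An existential constraint $\forall x\exists y(\bigvee_i(x<_2 y\wedge\eta_i)\vee\bigvee_i(x>_2 y\wedge\vartheta_i))$ is the subtle one: at $v$ one must witness the existence of some $w$ on the correct side in $<_2$ and in a correct LCA-region with a correct type. When the required region is "strictly below $v$'', satisfaction is equivalent to $\sup r_\tau$ (restricted to the right path language) being strictly greater than $d(v)$, resp.\ $\inf r_\tau$ strictly less — again an extrema inequality. When the region is among the ancestors, the witness is visible in the finite ancestor abstraction. When the region is incomparable, it is witnessed by the extrema terms with path languages avoiding $v$'s branch, carried along the path. Crucially, the extrema formulation automatically does the right thing in the limit: a supremum can be a non-attained limit, and we do not need the witness to be an actual maximum, only that the relevant extremum lies strictly on the correct side — which is exactly what an existential $<_2$-constraint needs.

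Third, for part (2) I would follow the workaround alluded to in the text: the automaton cannot force all data values distinct, but it can enforce a weaker separation property sufficient to "spread out'' a non-distinct accepting data tree into a distinct one without destroying any constraint. Concretely, I would have $\calA$ additionally guess, for every node, whether it is about to be at a point where two values coincide, and verify a local condition ensuring that whenever finitely or countably many nodes share a value, a monotone perturbation of the data values (using density of $\Rbar$) can separate them while preserving all order relations among \emph{distinct} values and all the $\sup/\inf$ relationships used. Since all constraints are phrased through unary types, the $<_2$-order, and $\mso(<_1)$-atoms (which are independent of data), and since the extrema inequalities we check are strict on the relevant side, such a perturbation preserves acceptance; this is the familiar "if some data word is accepted then one with distinct values is'' argument, here on trees. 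Part (1) is then essentially by unwinding the definitions: a distinct-valued accepting data tree is literally a model of $\varphi$ and conversely, because the correspondence $\calS\leftrightarrow t$ from the preceding paragraphs is a bijection (up to monotone data relabellings) and the transitions of $\calA$ were defined to be exactly the local encodings of the constraints.

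I expect the main obstacle to be the bookkeeping for the "incomparable'' region in both the universal and existential constraints: unlike in the word case, where "left of'' and "right of'' are the only directions, here a node $v$ sees, for each proper ancestor $a$ on its path, the entire subtree hanging off the \emph{other} child of $a$, and the $\mso(<_1)$-atom may distinguish these by the height of the LCA or by further structure. Packaging this as finitely many extrema registers with suitable regular path languages — so that $\sup r$ over the correct path language computed at the appropriate ancestor, then threaded down the path and combined with the finite LCA-abstraction, yields exactly the quantity the constraint refers to — is where the construction is genuinely technical. The fact that TRASIs allow \emph{arbitrary} regular path languages for the extrema terms (Section~\ref{section:trasi-to-traB}) is what makes this possible; without it one would be stuck. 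The rest — Scott/constraint normal form (already quoted as Lemma~\ref{lemma:constraint-formulas}), the MSO--automata translation of the $\mso(<_1)$-atoms, and the density perturbation for part (2) — is standard.
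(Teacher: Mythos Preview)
Your plan for part~(1) is in the same spirit as the paper's construction, but the paper organises the bookkeeping differently and more uniformly. Rather than indexing registers by unary types $\sigma$ and ad hoc ``LCA-configurations'', the paper introduces registers $r_{\tau,\sup}$ and $r_{\tau,\inf}$ indexed by \emph{binary} rank-$q$ \mso-types $\tau(x,y)$, intended to store the sup/inf of $d(u)$ over all $u$ with $(t,v,u)\models\tau$. These are then verified via a second layer of registers $r_{\gamma,\sup,\muparrow}$, $r_{\gamma,\sup,\mswarrow}$, $r_{\gamma,\sup,\msearrow}$ using the composition lemmas (Lemmas~\ref{lemma:composition} and~\ref{lemma:compositionB}): the downward ones via extrema terms with the regular path languages supplied by Lemma~\ref{lemma:compositionB}, and the upward one by a local max-equation propagated top-down. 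Your decomposition into ancestors/descendants/incomparable is essentially the same split, and you correctly flag the incomparable region as the crux; the paper's answer is precisely this two-layer type-indexed scheme rather than your ``path languages that exclude the branch taken by $v$'' (note that extrema terms only see descendants, so the incomparable region must be handled by computing at ancestors and threading down, as the paper does and as you also hint).

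For part~(2) there is a genuine gap. Your justification --- that ``the extrema inequalities we check are strict on the relevant side'', so a small monotone perturbation preserves acceptance --- does not go through: the automaton $\calA_\varphi$ necessarily verifies \emph{equalities} between registers (e.g.\ that $r_{\tau,\sup}(v)$ equals the maximum of several other register values), and a perturbation of input values will in general shift the suprema and break these. The paper does not perturb an accepted tree directly. Instead it builds a second automaton $\calA_{\neq}$ that enforces a concrete structural property, called \emph{weak diversity}: whenever two nodes $v,v'$ carry the same input value $d$, some node on the $v$--$v'$ path has \emph{no} register (of $\calA_\varphi$) equal to $d$. This is checked by guessing, for each register value at each node, a ``source'' direction and verifying local consistency of these source labels, with the input register forced to be its own source. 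The point is then not a local density argument but an appeal to the internals of the emptiness proof for TRASI (Section~\ref{section:decidability}): the monotone bijections used there can, under weak diversity, be chosen so that fresh values at each node are distinct from everything above. Your proposal would need to replace the ``strict inequalities suffice'' claim by a comparable mechanism and a comparable appeal.
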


	Thus the TRASI from the lemma ensures that a data tree is accepted if and only if $\varphi$ has a model. Combining this lemma with the decision procedure for emptiness of TRASIs yields Proposition \ref{proposition:emsotwo:binarytree}.

	The TRASI we construct for proving Lemma \ref{lemma:automaton} has to handle $\mso(<_1)$ formulas occuring in constraints. To do so, it will store type-information of nodes. Recall that a $k$-ary \mso-type $\tau(x_1, \ldots, x_k)$ of quantifier-rank $q$ is a maximal consistent conjunction of \mso-formulas of quantifier-rank~$q$ and free variables $x_1, \ldots, x_k$ (see \cite[Section 7.2]{Libkin04} for background on \mso-types). For a tree $t$ with ancestor order $<$ and an $\mso(<)$-type $\tau$, we write $(t, u_1, \ldots, u_k) \models \tau$ if the tuple $(u_1, \ldots, u_k)$ of nodes in $t$ is of type $\tau$. 

	It is folklore that the rank-$q$ \mso-type of a tree can be constructed from the rank-$q$ \mso-types of its components. This is implicit in Shelah's Theorem on \emph{generalised sums} \cite{Shelah75}, see also the exposition from Blumensath et al.~\cite[Theorem 3.16]{BlumensathCL08}.

  We will use the following two special cases stating how the binary type of two nodes $u$ and $v$ in a tree $t$ can be inferred from type information of parts of $t$. The TRASI will store and verify this additional type information and thereby be able to handle $\mso(<_1)$ formulas occuring in constraints. As the MSO formulas in constraints have at most two free variables, it suffices to consider unary and binary types only.  
  
  The first special case states that the type of two nodes $u$ and $v$ in a tree $t$ can be determined from the viewpoint of $v$ by decomposing $t$ into the components $t \setminus t_v$, $t_{\mswarrow v}$,  $t_{\msearrow v}$, and $v$, and combining their respective types. 
  \begin{lemma}\label{lemma:composition}
			Suppose $v$ is a node of a tree $t$. If $u$ is a node in $t \setminus t_v$, then the rank-$q$ \mso-type of $(u, v)$ can be effectively obtained from the rank-$q$ \mso-types of (1) $(\muparrow v, u)$ in $t \setminus t_v$, (2) $\mswarrow v$ in  $t_{\mswarrow v}$, (3)  $\msearrow v$ in  $t_{\msearrow v}$, and of (4) $v$ in the single-node tree consisting solely of~$v$. Similarly if $u$ is a node of $t_{\mswarrow v}$ or $t_{\msearrow v}$, or if $u = v$.
  \end{lemma}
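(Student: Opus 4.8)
The plan is to obtain Lemma~\ref{lemma:composition} from the composition theorem for monadic second-order logic over generalised sums (Shelah~\cite{Shelah75}; see also \cite[Theorem~3.16]{BlumensathCL08}). First I would set up the decomposition at the level of structures. Writing $<$ for the ancestor order, partition the nodes of $t$ into the four pieces $A\df t\setminus t_v$, $B\df t_{\mswarrow v}$, $C\df t_{\msearrow v}$ and $\{v\}$; each of $A$, $B$, $C$ carries the restriction of $<$ (and of any unary predicates), and via the obvious prefix-shortening maps the structures on $B$ and $C$ are isomorphic to the trees $t_{\mswarrow v}$ and $t_{\msearrow v}$ with their own ancestor orders, the roots being the images of $\mswarrow v$ and $\msearrow v$. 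The only thing that has to be understood is how $<$ relates elements lying in different pieces.

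Then I would record the cross-piece relations explicitly: for $a\in A$ and $b\in\{v\}\cup B\cup C$ one has $a<b$ iff $a$ is an ancestor of $v$, i.e.\ iff $a\le\muparrow v$ in $A$, and never $b<a$; moreover $v<b$ for every $b\in B\cup C$, and no element of $B$ is $<$-comparable to an element of $C$. Thus, across pieces, $<$ is governed entirely by whether $a$ lies in the set $P\df\setof{a\in A}{a\le\muparrow v}$, which is MSO-definable in $A$ with $\muparrow v$ marked (it is the set of strict $<$-predecessors of $\muparrow v$ together with $\muparrow v$), together with the purely combinatorial data of which piece each point, and in particular each of the marked elements $u,v$, lies in. After expanding the $A$-summand by the predicate $P$ --- which does not increase the relevant type information, since $P$ is definable from $\muparrow v$ --- this is exactly the description of a generalised sum over a four-element index structure with the stated fixed incidence pattern. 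Hence the composition theorem applies: the rank-$q$ MSO-type of $(u,v)$ in $t$ is a computable function of the rank-$q$ MSO-types of the summands with their distinguished elements, namely --- since $u\in A$ --- of $(\muparrow v,u)$ in $A$, of $\mswarrow v$ in $B$, of $\msearrow v$ in $C$, and of $v$ in $\{v\}$; these are precisely items (1)--(4). (Marking $\mswarrow v$ and $\msearrow v$ is in fact redundant, each being the $<$-minimum of its piece, but it is harmless and matches the format of the theorem.) The remaining cases --- $u$ in $t_{\mswarrow v}$, in $t_{\msearrow v}$, or $u=v$ --- are handled identically by moving the mark of $u$ into the piece containing it: item (1) becomes the type of $\muparrow v$ in $A$, and the relevant item among (2)--(4) acquires the extra mark. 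The degenerate case $v=\varepsilon$ (so $A=\emptyset$ and $\muparrow v$ is undefined) is just the sub-decomposition $t=\{v\}\cup t_{\mswarrow v}\cup t_{\msearrow v}$ and is treated the same way.

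I expect essentially no obstacle beyond this bookkeeping: all the content lies in the observation that inter-component comparability is controlled by the single parameter $\muparrow v$, after which the composition theorem is invoked as a black box. If a self-contained argument were preferred, one could instead run a direct Ehrenfeucht--Fra\"{\i}sse argument, assembling a winning $q$-round Duplicator strategy for $(u,v)$ in $t$ against any $(u',v')$ in $t'$ whose four pieces have matching rank-$q$ types out of the Duplicator strategies on the individual pieces, each pebble being answered in the piece its partner occupies; the only point to verify is that the composite play respects the cross-piece $<$-edges, which again reduces to the description above. For brevity I would use the composition-theorem route, which the paper already cites.
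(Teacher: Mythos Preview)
Your proposal is correct and takes essentially the same approach as the paper: the paper does not give a proof of this lemma at all, but simply states it as a special case of the folklore composition result implicit in Shelah's theorem on generalised sums~\cite{Shelah75} (referring also to \cite[Theorem~3.16]{BlumensathCL08}), which is precisely the black box you invoke. Your write-up merely makes explicit the bookkeeping---the cross-piece behaviour of $<$ being governed by the parameter $\muparrow v$---that the paper leaves to the reader.
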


  The second special cases states that one can annotate all nodes of a tree such that this information allows to determine the type of a node $u$ and its ancestor $v$ solely by looking at the annotated path between them. Intuitively, each node is annotated by the information of how it relates to the rest of the tree MSO-wise.
  \begin{lemma}\label{lemma:compositionB}
		For every binary $\mso(<)$-type $\tau(x,y)$ with free first-order variables $x$ and $y$ there is an alphabet $\Gamma$ and a regular language $L$ over $\Gamma$ such that for every tree $t$ with ancestor order $<$ there is a labelling with symbols from $\Gamma$ which is $\mso(<)$-definable and such that for all pairs $(u, v)$ of nodes of $t$ with $u < v$:  $(t, u, v) \models \tau$ if and only if the labels on the path from $u$ to $v$ constitute a string from $L$. 
  \end{lemma}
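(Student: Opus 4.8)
The plan is to endow each node of $t$ with a bounded-size ``$\mso$-profile'' recording how it sits inside the rest of $t$, so that the rank-$q$ type of a pair $(u,v)$ with $u<v$ becomes a finite-state function of the profiles read along the path from $u$ to $v$ (here $q:=\operatorname{qr}(\tau)$). First I would recall the standard facts: there are only finitely many rank-$q$ $\mso(<)$-types of trees, of trees with one distinguished node, and of trees with two distinguished nodes, and each such type $\theta$ is characterised by a single rank-$q$ formula $\chi_\theta$. The engine of the proof is the composition method for $\mso$ over trees (Shelah's theorem on generalised sums \cite{Shelah75}, of which Lemma~\ref{lemma:composition} is the instance we need): the rank-$q$ type of a tree assembled from pieces --- a node together with the subtrees plugged below it, or a context together with a subtree plugged into its hole --- is a fixed computable function of the rank-$q$ types of the pieces. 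Since $<$ is the only relation, this composition does not see the left/right order, so it is enough to work with the \emph{unordered} information ``which subtrees hang off a node''.

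Concretely, for a tree $t$ and a node $w$ I would set $\gamma(w):=(A(w),B(w),C(w))\in\Gamma$, where $A(w)$ is the rank-$q$ type of $(t\setminus t_w,\muparrow w)$ (the upward context of $w$, marked at its parent), or a fixed symbol if $w$ is the root; $B(w)$ is the rank-$q$ type of $(t_w,w)$ (the subtree at $w$, marked at its root); and $C(w)$ is the rank-$q$ type of $(t_z,z)$ for the sibling $z$ of $w$, or a fixed symbol if $w$ has no sibling. The alphabet $\Gamma$ is the finite set of all such triples and depends only on $q$. The labelling $w\mapsto\gamma(w)$ is $\mso(<)$-definable: ``$y$ lies in $t_x$'', ``$y$ is the parent of $x$'', ``$y$ is a child of $x$'' and ``$y$ is the sibling of $x$'' are all $\mso(<)$-definable, and ``the rank-$q$ type of $(t_w,w)$ equals $\theta$'' is obtained by relativising $\chi_\theta$ to the node set of $t_w$ with $\chi_\theta$'s free variable bound to $w$; analogously for $A(w)$ (relativise to the node set of $t\setminus t_w$) and $C(w)$. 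This relativisation trick is exactly why infinite subtrees cause no difficulty: types are never computed bottom-up, one merely relativises a fixed sentence.

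Finally I would recover $\operatorname{tp}_q(t,u,v)$ from the word $\gamma(w_0)\gamma(w_1)\cdots\gamma(w_k)$, where $u=w_0<w_1<\cdots<w_k=v$ is the path. Peeling the path from the top and applying the composition theorem at each $w_j$ shows that the types $\operatorname{Up}_j:=\operatorname{tp}_q(t\setminus t_{w_j},\muparrow w_j,u)$ satisfy $\operatorname{Up}_0=A(w_0)$ and $\operatorname{Up}_{j+1}=G(\operatorname{Up}_j,C(w_{j+1}))$ for a fixed function $G$ (passing from the context of $w_j$ to that of $w_{j+1}$ plugs back in the node $w_j$ together with its off-path subtree, whose type is determined by $C(w_{j+1})$), and then $\operatorname{tp}_q(t,u,v)=H(\operatorname{Up}_k,B(w_k))$ for a fixed function $H$ (plug $t_v$ into the hole of the context of $v$). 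Hence a deterministic finite automaton $\mathcal D$ over $\Gamma$, whose states are pairs consisting of a rank-$q$ context type and a rank-$q$ subtree type, computes after reading $\gamma(w_0)\cdots\gamma(w_k)$ a state from which $H$ outputs $\operatorname{tp}_q(t,u,v)$; taking $L$ to be the set of words on which $\mathcal D$ ends in a state with $H$-value $\tau$ yields a regular language depending only on $\tau$ with the required property. The hard part is not any individual step but the bookkeeping: tracking exactly which node each distinguished mark occupies as the path is consumed, and checking that the subtree/context type needed at step $j$ is the one actually recorded by $C$ at $w_{j+1}$ (rather than, say, by $B$ at $w_j$), so that $G$ and $H$ are honestly functions of the recorded labels alone.
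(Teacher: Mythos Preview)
Your approach is correct and is exactly the argument the paper gestures at: the paper does not give a proof of this lemma at all, but states it (together with Lemma~\ref{lemma:composition}) as a folklore special case of Shelah's composition theorem and the exposition in Blumensath et al. Your proposal spells out the construction the paper leaves implicit --- labelling each node by the rank-$q$ types of its upward context, its own subtree, and its sibling's subtree, and then running a finite automaton along the path --- and this is the standard way to derive such a statement from the composition method.

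One small bookkeeping wrinkle worth cleaning up: as written, $\mathrm{Up}_j := \mathrm{tp}_q(t\setminus t_{w_j},\muparrow w_j,u)$ is ill-typed at $j=0$, since $w_0=u$ gives $u\notin t\setminus t_u$, so the second mark has nowhere to sit; what you actually have at step~$0$ is the \emph{unary} context type $A(u)$, and the binary mark for $u$ only appears from $\mathrm{Up}_1$ onward. This is harmless --- either let the automaton's state space carry a unary-or-binary context type (with the transition function $G$ handling the first step separately), or start the recurrence at $j=1$ with $\mathrm{Up}_1$ computed from $A(w_0)$ and $C(w_1)$ --- but it is precisely the kind of off-by-one bookkeeping you flag in your last sentence, so it is worth getting right in the final write-up.
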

	
	We are now ready to proceed with the proof of  Lemma \ref{lemma:automaton}.
\begin{proof}[Proof of Lemma \ref{lemma:automaton}]
  
  The automaton $\calA$ is the intersection of two automata $\calA_\varphi$ and $\calA_{\neq}$. Intuitively, the automaton $\calA_\varphi$ verifies that an input tree satisfies $\varphi$ under the assumption that all data values are distinct, but with unspecified behaviour for input trees with distinct nodes with the same data value. The automaton $\calA_{\neq}$ ensures that if a tree is accepted by $\calA_\varphi$ (with possibly the same data value on more than one node) then so is a tree with distinct data values on all nodes.  Note that TRASIs can be easily seen to be closed under intersection. In the following we formalize these intuitions.

  We first explain the construction of the automaton $\calA_\varphi$. This automaton checks whether an input tree  $t$ satisfies each of the existential and universal constraints in $\varphi$. To this end, it will (1) guess and verify $\mso(<_1)$-type information for each node (recall that $<_1$ corresponds to the tree order), in order to handle the $\mso(<_1)$-formulas $\eta_i$, $\vartheta_i$, $\eta$, and $\zeta$ in the constraints, and (2) use additional registers for checking the constraints themselves.

  The automaton  $\calA_\varphi$ guesses and verifies the following type information. Suppose that $q \in \N$ is the maximal quantifier-rank of the formulas $\eta_i$, $\vartheta_i$, $\eta$, and $\zeta$. For each node $v$ the automaton guesses the following unary $\mso(<_1)$-types of quantifier-rank $q$:  
  \begin{itemize}
   \item  the type $\muparrow\delta(v)$ of $\muparrow v$ in $t \setminus t_v$,
   \item  the type $\mswarrow\delta(v)$ of $\mswarrow v$ in $t_{\swarrow v}$, and 
   \item  the type $\msearrow\delta(v)$ of $\msearrow v$ in $t_{\searrow v}$.
  \end{itemize}
  Those guesses can be verified since tree register automata capture the power of parity automata, and therefore also the power of $\mso$.

  The automaton $\calA_\varphi$ uses two sets (A) and (B) of additional registers. The idea is that the registers from (A) will be used to verify that the given data tree satisfies $\varphi$; and registers from (B) will be used to verify the content of registers from~(A).
  
  The set (A) contains registers $r_{\tau, \sup}$ and $r_{\tau, \inf}$ for every binary $\mso(<_1)$-type $\tau(x,y)$ of quantifier-rank at most~$q$. At a node $v$ of an input tree $t$, these registers are intended to store the supremum and infimum data value over all nodes $u$ with $(t, v, u) \models \tau$, that is it is intended that
  \begin{align*}
    \alpha(v, r_{\tau, \sup}) &= \sup \{ d(u) \mid \text{$u \in t$ and $(t, v, u) \models \tau$}\} \\
  \alpha(v, r_{\tau, \inf}) &= \inf \{ d(u) \mid \text{$u \in t$ and $(t, v, u) \models \tau$}\}
  \end{align*}
  for register assignments $\alpha$.

  Assuming, that the information stored in registers in (A) is correct, the automaton can locally check that each node satisfies the existential and universal constraints. For instance, for checking existential constraints, the automaton can guess which of the disjuncts is satisfied for each node of the input tree. If it guesses that for a node $v$ there is a node $u$ such that $v <_2 u \wedge \eta_i(v,u)$, then it can verify this guess by checking whether there is a type $\tau$ that implies $\eta_i$ and a node $u$ such that $d(u) > d(v)$ and $(v, u) \models \tau$. This can be tested via a constraint $r_{\tau, \sup}(v) > d(v)$. The verification of universal constraints is analogous.
  
  The values of the registers in (A) are verified with registers from a set (B) of registers. The intuition for registers in (B) is as follows. For a node~$v$, the supremum of data values of nodes $u$ with $(t, v, u) \models \tau$ is either achieved in $v$ itself, or by nodes in $t \setminus t_v$, or in the left subtree $t_{\mswarrow v}$ of $v$, or in the right subtree $t_{\msearrow v}$, that is:
  {
    \setlength{\tabcolsep}{0pt}
  \[\alpha(v, r_{\tau, \sup}) =  \max 
    \left \{
  \begin{tabular}{rl}
   & $\sup_u\{d(u) \mid u = v \text{ and } (t, v, u) \models \tau \}, $ \\
   & $\sup_u\{d(u) \mid u \in t \setminus t_v  \text{ and } (t, v, u) \models \tau\}, $ \\
   & $\sup_u\{d(u) \mid  u \in t_{\mswarrow v} \text{ and } (t, v, u) \models \tau\}, $ \\
   & $\sup_u\{d(u) \mid  u \in t_{\msearrow v} \text{ and } (t, v, u) \models \tau\}$ \\
  \end{tabular}
\right \}  
  \]}
  
The registers in (B) will allow the automaton to deduce the components of the maximum on the right-hand side.

  The set (B) contains the following registers for each binary type $\gamma(x, y)$ of quantifier-rank~$q$:
    \begin{enumerate}[(i)]
     \item a register $r_{\gamma, \sup, \muparrow}$ intended to store, for a node $v$, the supremum of all nodes $u$ in $t \setminus t_v$ with $(t \setminus t_v, \muparrow v,u) \models \gamma$,
     \item a register $r_{\gamma, \sup, \mswarrow}$ intended to store, for a node $v$, the supremum of all nodes $u$ in $t_{\mswarrow v}$ with $(t_{\mswarrow v}, \mswarrow v,u) \models \gamma$,
     \item a register $r_{\gamma, \sup, \msearrow}$ intended to store, for a node $v$, the supremum of all nodes $u$ in $t_{\msearrow v}$ with $(t_{\msearrow v}, \msearrow v,u) \models \gamma$,
    \end{enumerate}
  Analogously it contains further registers $r_{\gamma, \inf, \muparrow}$, $r_{\gamma, \inf, \mswarrow}$, and $r_{\gamma, \inf, \msearrow}$.
  
  Assuming that the information stored in the register from (B) is correct, the automaton can verify the values of registers in (A) using the intuition stated above, i.e.,  that, for a node~$v$, the supremum of data values of nodes $u$ with $(t, v, u) \models \tau$ is either achieved in $v$ itself, or by nodes in $t \setminus t_v$, or in the left subtree $t_{\mswarrow v}$ of $v$, or in the right subtree $t_{\msearrow v}$.  Thus the values of registers in (A) can be checked as follows:
  
  {\small
  \[\alpha(v, r_{\tau, \sup})  = \max 
    \left \{
  \begin{tabular}{rl}
   & $\{d(v) \mid \text{composing $\muparrow\delta(v)$, $\mswarrow\delta(v)$, $\msearrow\delta(v)$, and the type of $(v,v)$ yields $\tau$}\}$ \\
   $\cup$ & $\{\alpha(v, r_{\gamma, \sup, \muparrow}) \mid \text{composing $\gamma$, $\mswarrow\delta(v)$, $\msearrow\delta(v)$, and the type of $v$ yields $\tau$}\}$ \\
   $\cup$ & $\{\alpha(v, r_{\gamma, \sup, \mswarrow}) \mid \text{composing $\muparrow\delta(v)$, $\gamma$, $\msearrow\delta(v)$, and the type of $v$ yields $\tau$}\}$ \\
   $\cup$ & $\{\alpha(v, r_{\gamma, \sup, \msearrow}) \mid \text{composing $\muparrow\delta(v)$, $\mswarrow\delta(v)$, $\gamma$, and the type of $v$ yields $\tau$}\}$ \\
  \end{tabular}
\right \}  
  \]
  }

  The second line, for instance, takes care of the case when the supremum is achieved in~$t \setminus t_v$. It collects, for each $\gamma$, the suprema values $\alpha(v, r_{\gamma, \sup, \muparrow})$ achieved by nodes $u$ above $v$ with $(t \setminus t_v, \muparrow v, u) \models \gamma$, but only if composing $\gamma$ with the type of the rest of the tree yields the type $\tau$. That is, if the types (1) $\gamma$ of $(\muparrow v, u)$ in $t \setminus t_v$, (2) $\mswarrow\delta(v)$ of $\mswarrow v$ in  $t_{\mswarrow v}$, (3) $\msearrow\delta(v)$ of $\msearrow v$ in $t_{\msearrow v}$, (4) $v$ in the single-node tree consisting solely of~$v$, compose to $\tau$. Here, the composition of types is according to Lemma \ref{lemma:composition}. The equation can be easily implemented by a TRASI.

  We sketch how the automaton can verify the content of the registers in (B). 
  
  For verifying the values of $r_{\tau, \sup, \mswarrow}$ and $r_{\tau, \sup, \msearrow}$ in a register assignment $\alpha$, we use additional registers $r_{\tau, \sup, \mdownarrow}$ with an associated language $L$, such that the supremum of these registers with respect to $L$-descendants is the supremum over all nodes $u$ with $(t_v, v, u) \models \tau$. Thus the values of $r_{\tau, \sup, \mswarrow}$ and $r_{\tau, \sup, \msearrow}$ can be verified by comparing them to these suprema. The register  $r_{\tau, \sup, \mdownarrow}$ stores the input data value of a node $v$. Its associated language $L$ is the language over alphabet $\Gamma$ for $\tau$ according to Lemma \ref{lemma:compositionB}. As TRASI capture the power of MSO, the automaton can guess and verify the $\Gamma$-labeling.

  The value of $r_{\tau, \sup, \muparrow}$ on a node $v$ can be verified as follows. Suppose that $v$ is the left child of its parent $w$ (the cases when $v$ is a right child or the root are very similar). Similarly to the reasoning above, the value of $r_{\tau, \sup, \muparrow}$ is either achieved in $w$, or in $t \setminus t_w$, or in $t_{\msearrow w}$. Thus it can be computed using the following equation:
    
    {\small
    \[\alpha(v, r_{\gamma, \sup, \muparrow})  = \max 
    \left \{
  \begin{tabular}{rl}
   & $\{d(w) \mid \text{composing $\muparrow\delta(w)$, $\msearrow\delta(w)$, and the type of $w$ yields $\gamma$}\}$ \\
   $\cup$ & $\{\alpha(w, r_{\gamma', \sup, \muparrow}) \mid \text{composing $\gamma'$, $\msearrow\delta(w)$, and the type of $w$ yields $\gamma$}\}$ \\
   $\cup$ & $\{\alpha(w, r_{\gamma', \sup, \msearrow}) \mid \text{composing $\muparrow\delta(w)$, $\gamma'$, and the type of $w$ yields $\gamma$}\}$ \\
  \end{tabular}
\right \}  
  \]
  }
  
  This equation can be implemented easily by a tree register automaton with additional auxiliary registers. This concludes the construction of~$\calA_\varphi$.
  
  A tree register automaton with extrema constraints cannot verify that all data values of a tree are distinct. Therefore the automaton $\calA_{\neq}$ checks a weaker condition for a register assignment of $\calA_\varphi$, namely, that if the input value of two nodes $v$ and $v'$ is $d$ then there is a node $u$ on the shortest path from $v$ to $v'$ (which may use tree edges in either direction) such that no register (of $A_\varphi$) in $u$ has value $d$. We call an assignment with this property \emph{weakly diverse}. A careful analysis of the proof of decidability of the emptiness problem for tree register automata with extrema constraints shows that if there is an accepting run which is weakly diverse, then there is also a run with distinct data values. Basically, because the values in a subtree $t_v$ can be shifted by $\pi$ in such a way that fresh data values in registers at node $v$ are distinct from all data values in $t \setminus t_v$.
  
	We show how a TRASI $\calA_{\neq}$ can check that the data values of an input data tree for $\calA_\varphi$ are weakly diverse. For this it has to be verified that no two nodes $v$ and $v'$ with the same  data value $d$ are connected by a path on which each node has a register with value~$d$. In other words, no value $d$ of a register may originate from the input register, hereafter denoted~$s$, of two different nodes $v$ and $v'$. 
  
  Towards checking this condition, $\calA_{\neq}$ guesses a \emph{source} $(v, r)$ for each data value $d$, where $v$ is a node and $r$ is a register. It first ensures that on all nodes the input register is a source. Now, verifying that no data value has two sources ensures that the assignment is weakly diverse. For this verification, sources are annotated by labels from $\{S_r \mid  \text{$r$ is a register}\}$ with the intention that a node $u$ is labelled by $S_r$ if $(u, r)$ is a source. For tracing values to their source, each node is annotated by labels from $\{S_{r}^\uparrow, S_{r}^\swarrow, S_{r}^\searrow \mid  \text{$r$ is a register}\}$ with the intention that, e.g., if a node $u$ is labelled by $S_{r}^\swarrow$ then the value of register $r$ at node $u$ originates somewhere in the left subtree of $u$. 

  The automaton then checks the following for each node~$u$:
  \begin{enumerate}
    \item The node $u$ is labelled by exactly one of the labels $S_r, S_{r}^\uparrow, S_{r}^\swarrow$, or $S_{r}^\searrow$, for each register~$r$. Furthermore the labels are consistent for registers with the same value, i.e. if \mbox{$\alpha(u, r) = \alpha(u, r')$} then $u$ is labelled by $S_r$ if and only if it is labeled by $S_{r'}$, and similarly for $S_{r}^\uparrow$, $S_{r}^\swarrow$, and $S_{r}^\searrow$.
    \item The labels $S_r$, $S_{r}^\uparrow, S_{r}^\swarrow$, or $S_{r}^\searrow$ indicate the direction of the source of the data value of register $r$ at node $u$:
      \begin{enumerate}
        \item If $u$ is the left child of its parent $v$ and $\alpha(u, r) = \alpha(v, r')$, then
          \begin{itemize}
            \item If $u$ is labelled by $S_r$, $S_{r}^\searrow$ or $S_{r}^\swarrow$ then $v$ is labelled by $S_{r'}^\swarrow$.
            \item If $u$ is labelled by $S_{r}^\uparrow$ then $v$ is labelled by $S_{r'}, S_{r'}^\searrow$, or $S_{r'}^\uparrow$.
          \end{itemize}
        \item Similarly if $u$ is the right child of $v$, and if $u$ is the parent of $v$. 
      \end{enumerate}
    \item The register $s$ of node $u$ is a source, i.e. each node $u$ is labelled by $S_s$. (Recall that $s$ is the distinguished input register.) 
  \end{enumerate}
  The first condition ensures that no data value originates from register $s$ of two different nodes. The automaton $\calA_{\neq}$ now guesses and verifies the described annotation witnessing weak diversity for the input register $s$ with respect to the registers of $\calA_\varphi$. This proves Lemma~\ref{lemma:automaton}.
\end{proof}
Lemma~\ref{lemma:automaton} together with Theorem~\ref{thm:decidable}
yield Proposition~\ref{proposition:emsotwo:binarytree}.

\subsubsection{All Tree Orders Interpret in the Complete Binary Tree} \label{section:InterpretingCountableOrders}
In this section we prove that every countable tree can be encoded in the full binary tree using formulas. This will allow us to derive 
Theorem~\ref{theorem:maintheorem} from Proposition~\ref{proposition:emsotwo:binarytree} in Section~\ref{sec:final-proof}.

A tree order is \emph{universal} if it contains every countable tree order as a substructure.  Denote by $\calT$ the complete binary tree encoded as a structure with domain $V=\set{0,1}^*$ over the signature $\{L, R, <\}$ which interprets $L$ and $R$ as the left and right child relations, and $<$ as the ancestor order of the complete binary tree. 

We strongly suspect that the following result is known; for lack of reference we provide a proof sketch. 
\begin{lemma}\label{lemma:interpretation}
  There are first-order formulas $\delta_U(x)$ and $\delta_\prec(x,y)$ over signature $\{L, R, <\}$ such that the structure $(U, \prec)$ with
  \begin{align*}
    U &\df \{a \in V \mid \calT \models \delta_U(a)\}\\ 
    \prec &\df \{(a, b) \in U \times U \mid \calT \models \delta_\prec(a,b)\}
  \end{align*}
  is a universal tree order.

\end{lemma}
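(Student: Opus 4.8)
The plan is to exhibit a concrete countable tree order $\mathcal U$ that is universal, and then to realise $(\mathcal U,\preceq)$ by a first‑order interpretation inside $\calT$. For $\mathcal U$ I would take the \emph{edge‑subdivided complete binary tree}: the underlying set is $\{0,1\}^*\times(\Q\cap(0,1))$, and $(\sigma,a)\preceq(\tau,b)$ iff $\sigma$ is a proper prefix of $\tau$, or $\sigma=\tau$ and $a\le b$. Then $\preceq$ is a partial order; the predecessors of any point form a chain, because the strings involved are prefixes of one fixed string and each ``column'' $\{\sigma\}\times(\Q\cap(0,1))$ is linearly ordered; and any two points have a common lower bound (take a point whose first coordinate is a short enough common prefix). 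So $\mathcal U$ is a tree order. Its useful features: every maximal chain is an $\omega$‑indexed concatenation of copies of $\Q\cap(0,1)$, hence a countable dense linear order without endpoints, and in particular contains a copy of $\Q$; and above every point there is an infinite antichain, e.g.\ $\{(\sigma1,\tfrac12)\}\cup\{(\sigma0^{k}1,\tfrac12):k\ge1\}$ lies above $(\sigma,a)$.

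The substantive part of this step is universality. Given a countable tree order $(D,<)$, enumerate $D=(d_n)_n$ and build an order‑embedding $e\colon D\to\mathcal U$ greedily, maintaining the invariant that the finite partial embedding has ``slack'': between $e(d_i)$ and the common‑lower‑bound chain of the images of the $<$‑minimal already‑placed elements above $d_i$ there is unused dense room, and images of $<$‑comparable elements are separated by further points of $\mathcal U$. When inserting $d_n$, its relationship to $d_1,\dots,d_{n-1}$ is itself a tree order, so the requirements ``lie above $e(d_i)$ for the largest placed predecessor $d_i$ of $d_n$, lie strictly below $e(b_1),\dots,e(b_s)$ for the $<$‑minimal placed elements $b_1,\dots,b_s$ above $d_n$ (which are pairwise incomparable), and be incomparable to the remaining images'' are mutually consistent; density of chains and richness of branching in $\mathcal U$, together with the slack invariant, yield a point realising them, again choosable with slack. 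In the limit $e$ is an order‑embedding of $D$.

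For the interpretation, I would first note that several orders on $\{0,1\}^*$ are first‑order definable from $\{L,R,<\}$: the ancestor order $<$ itself; the least common ancestor $\lca(x,y)$, definable as the $<$‑maximal common $\le$‑lower bound; and, through $\lca$, the left‑to‑right order $x\sqsubset y$, given for $<$‑incomparable $x,y$ by $\exists x'\exists y'\,\bigl(L(\lca(x,y),x')\wedge x'\le x\wedge R(\lca(x,y),y')\wedge y'\le y\bigr)$, with the evident variants when one of $x,y$ is an ancestor of the other. Sending a node $w=w_1\cdots w_n$ to the dyadic rational $0.w_1\cdots w_n1$ (in binary) witnesses $(\{0,1\}^*,\sqsubset)\cong(\Q,<)$. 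Using these I would carve out a first‑order‑definable set $U$ of ``codes'' for pairs $(\sigma,a)$ — recovering the first coordinate $\sigma$ from the ancestor structure and the second coordinate $a$ from $\sqsubset$ restricted to a designated part of the code — and define $\delta_\prec$ as the Boolean combination of $<$‑tests and $\sqsubset$‑tests that translates the definition of $\preceq$ on $\mathcal U$. Checking that $\delta_U$ and $\delta_\prec$ are indeed first‑order and that $(U,\prec)$ is isomorphic to $\mathcal U$ is the technical core.

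The main obstacle is this last step. A definable order on $\calT$ tends to be either too sparse — the bare ancestor order has only chains of order type $\le\omega$, so cannot contain $\Q$ as a chain and is far from universal — or too dense: declaring $x\prec y$ for \emph{every} ancestor‑incomparable pair with $x$ to the left of $y$ (together with the ancestor order) collapses the whole node set, by trichotomy of $\sqsubset$, into a single linear order $\cong\Q$, destroying all branching. The delicate point is to choose the code set and the Boolean combination so that precisely the right incomparabilities survive, making $(U,\prec)$ simultaneously dense along chains and genuinely tree‑shaped. Once this is in place, the back‑and‑forth bookkeeping of the universality argument is routine, and Lemma~\ref{lemma:interpretation} follows.
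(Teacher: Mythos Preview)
Your approach parallels the paper's --- exhibit a concrete universal tree order and interpret it in $\calT$ --- but your choice of structure creates a difficulty that your sketch does not resolve. In your $\mathcal U=\{0,1\}^*\times(\Q\cap(0,1))$, branching is effectively binary: above $(\varepsilon,\tfrac12)$ there is \emph{no} point incomparable to both $(0,\tfrac12)$ and $(1,\tfrac12)$, since every nonempty string begins with $0$ or $1$ and $(\varepsilon,q)$ lies below both. So if the greedy embedding happens to place two incomparable elements at $(0,\tfrac12)$ and $(1,\tfrac12)$, it cannot later add a third element above $(\varepsilon,\tfrac12)$ incomparable to both. Your ``slack'' invariant, as stated, speaks only of dense room along chains between comparable images; it says nothing about reserving an unused branch at each branching point, and without that the clause ``be incomparable to the remaining images'' can simply fail. (Your $\mathcal U$ is in fact universal, but proving it needs a sharper invariant --- for instance, never exhaust both children of any prefix used --- which you do not formulate.)

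The paper sidesteps this by working with a structure satisfying a Fra\"iss\'e-style \emph{extension axiom}: for any $x$ and any finite antichain of descendants of $x$, there is a further descendant incomparable to all of them. This makes the greedy insertion step immediate; the paper also closes the finite set under meets at each stage to control where the new point must go. Concretely, the paper interprets in $\calT$ (via an intermediate ternary tree) the ``$\Q$-tree'' whose nodes have children densely ordered like $\Q$, and then sets $v\prec w$ iff $v$ is a left sibling of some $\Q$-tree ancestor of $w$; density of the sibling order yields the extension axiom directly. The paper moreover writes down $\delta_U$ and $\delta_\prec$ explicitly as first-order tests on regular word properties of the node addresses, whereas you defer exactly this ``technical core''.
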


We proceed with the proof of Lemma \ref{lemma:interpretation}. The following notation and facts will be used in the proof. For a tree order $\prec$, we say that $x$ is an \emph{ancestor} of $y$, or that $y$ is a \emph{descendant} of $x$, if $x \prec y$ or $x = y$, written~$x\preccurlyeq y$.
The \emph{meet} $x \wedge y$ of $x$ and $y$ 
is the greatest lower bound of $x$ and $y$, with respect to~$\preccurlyeq$. Observe that if $x, y_1, \ldots, y_m$ are elements of the tree order, then $x \wedge y_1, \ldots, x \wedge y_m$ are ancestors of $x$ (if they exist). In particular, due to the definition of tree orders, the tree order $\prec$ is total on $x \wedge y_1, \ldots, x \wedge y_m$. 

\begin{proof}[Proof of Lemma~\ref{lemma:interpretation}]Instead of working with universality directly, we use  the following stronger condition. A nonempty tree order $\prec$ satisfies the \emph{extension axioms} if:
  \begin{enumerate}
    \item for any node $x$ and nodes $y_1, \ldots, y_n$ which are mutually incomparable descendants of $x$, there exists yet another incomparable descendant $y$,
     \item for every node $x$ there is some $y$ such that $y \prec x$, and     
    \item for any two nodes $x$ and $y$ such that $x \prec y$, there is a node $z$ such that $x\prec z \prec y$.
  \end{enumerate}

  Let us first show that every tree order $\prec$ on a set of nodes $V$ satisfying the extension axioms  is universal. Let $\prec'$ be another tree order with countable domain~$V'$.
  Extending this tree order by countably many points if necessary, we may assume that $\prec'$ is closed under meets, i.e., the meet of any two elements of $V'$ is in $V'$. Fix an enumeration of the elements of the domain $V'$.

  We inductively construct an embedding $f\from V' \to V$ by defining $f$ for larger and larger finite meet-closed subsets $X$ of $V'$. For the induction basis, the first element of $V'$ with respect to the chosen enumeration is mapped to an arbitrary element of $V$. Now suppose that we have already embedded a finite subset $X$ of $V'$ into $V$. In the inductive step we extend the embedding by considering the first node $y$ of $V'-X$ with respect to the chosen enumeration. Consider the set \[M\df\set{y\land x\mid x\in X}\subset V'\] of meets of $y$ with the previously inserted nodes (see Figure \ref{figure:universal}).
  This set is a finite set of ancestors of $y$, hence it has a largest element $m$ with respect to $\prec'$. We observe that the set $M \setminus \{m\}$ is contained in $X$ due to the meet closure of $X$. 
  
  If the element $m$ does not already belong to $X$ then  we first extend $f$ to $X\cup \set m$. To this end let $b$ be the image under $f$ of the $\prec'$-smallest element of the set $\set{x\in X\mid x\succ' m}$, which exists since $X$ is meet-closed.
  
  If $|M| = 1$, apply the second extension axiom to $b$ in order to obtain an element $a \in V$ with $a \prec b$, and define $f(m) \df a$.

  If $|M|>1$ then let $b'$ be the image under $f$ of the second-largest element of $M$ (the second largest element must be in $X$, due to meet-closure). Apply the third  extension axiom to $b'$ and $b$ in order to obtain an element $c$ of $V'$ with $b' \prec c \prec b$ to which $m$ will now be mapped.

  Once $f$ is defined on $X\cup \set m$, use the first extension axiom to extend the embedding to $y$. Finally, replace $X$ by $X\cup \set{m,y}$ and proceed to the next element of $V'-X$. Note that the set $X\cup \set{m,y}$ is closed under meet.

This concludes the proof that $\prec$ is universal.

\begin{figure}
\begin{tikzpicture}[edge from parent/.style={draw, decorate,decoration=snake}]

\node {$x_1 \wedge x_2 \wedge x_3$}
    child { node {$m$} 
      child {node {$y$}}
      child { node {$x_1 \wedge x_2$}
        child { node {$x_1$} }
        child { node {$x_2$} }
       }
    }
    child { node {$x_3$} };
\end{tikzpicture}
\caption{Illustration of the construction of the mapping from $V'$ to $V$ in the proof of Lemma \ref{lemma:interpretation}. Suppose the elements $x_1, x_2, x_3$ are already in $X$. Due to the closure under meets, also $x_1 \wedge x_2$ and $x_1 \wedge x_2 \wedge x_3$ are in $X$. The set $M $ contains the nodes $m = y \wedge x_2$ and nodes $y \wedge x_3$.} \label{figure:universal}

\end{figure}
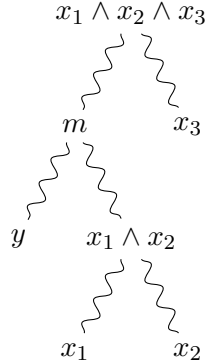

\medskip

  We will now show that a tree order satisfying the extension axioms can be first-order interpreted in the complete binary tree $\cal T$.

  \paragraph*{Step 1.} The complete ranked ternary tree (each node has exactly one ``left child'', one ``right child'' and one ``middle child'') $\cal T'$ and its tree order can be first-order interpreted into the complete binary tree. This can be done by using the unary predicate $V'\subset V=\set{0,1}^*$ 
  defined by the regular expression $(0100+0110+0111)^*$.
  The regular word language $V'\subset\set{0,1}^*$
  can be defined by a first-order sentence $\varphi$ in the signature of 
  words, that is $\set{0,1,<}$, where $<$ is the usual order on the positions of a word  (equivalently, by a star-free expression, by the McNaughton-Papert theorem). 
  The sentence $\varphi$ expresses the following:
  \begin{itemize}
    \item the word starts with $01$,
    \item every occurrence of $01$ is immediately followed either by $0001$, or by $1001$, or by $1101$, or occurs at the end of the word.
  \end{itemize} 
The sentence $\varphi$ translates into a first-order formula 
$\delta'(x)$ that holds at a node $v\in V$ of $\cal T$ if and only of $v\in V'$.
  
Thus, the set of nodes of the ternary tree $\cal T'$ is defined as the set $V'$ of those nodes of $\cal T$ that satisfy $\delta'(x)$. The ancestor relation of the ternary tree $\cal T'$
is then the restriction of the ancestor relation of $\cal T$ to $V'$.

  \paragraph*{Step 2.} The \emph{$\Q$-tree} is the rooted tree $\cal T''$ where each non-root node has a parent,  and the children of each node are ordered ``from left to right'' forming a dense order without endpoints. The $\Q$-tree, its tree order, and the order on siblings interpret in the complete ranked ternary tree~$\cal T'$.

  As a warm-up we observe that the structure $(\Q, \leq)$ interprets in the complete ranked ternary tree $\cal T'$. The elements of $(\Q,\leq)$ correspond to the nodes of $\cal T'$ which are middle children, and which have no ancestor which is a middle child. The order $\leq$ is given by $v \leq w$ if and only if $v$ is the left descendant of the meet of $v$ and $w$.

  For interpreting the $\Q$-tree $\cal T''$ in the complete ranked ternary tree $\cal T'$, the construction for $(\Q, \leq)$ is nested. More explicitly, the nodes of the $\Q$-tree $\cal T''$ are the nodes of $\cal T'$ which are middle children (corresponding to the nodes $(0100+0110+0111)^*0110\subset V$ of $\cal T$), and the ancestor relation is the ancestor relation of $\cal T'$ restricted to those nodes. The left-to-right order on the children $v$, $w$ (in the $\Q$-tree) of a given node is such that $v$ is to the left of $w$ if and only if $v$ is the left descendant of the meet of $v$ and $w$ in $\cal T'$.
  Note that the nodes of $\cal T''$ are again definable by a first-order formula $\delta''(x)$ in $\cal T$, and so is the ancestor relation of $\cal T''$.

  \paragraph*{Step 3.} Consider the interpretation of a tree order $\prec$ in the $\Q$-tree $\cal T''$, in which $v \prec w$ if and only if $v$ is a left sibling of an ancestor of $w$ in the $\Q$-tree. It is easy to see that $\prec$ is indeed a tree order. Also all meets exist in $\prec$, since, for all elements $x$ and $y$, there is a meet $z$ of $x$ and $y$ in the $\Q$-tree with children $x'$ and $y'$ which are ancestors of $x$ and $y$. The smaller of the two nodes $x'$ and $y'$ with respect to the children order of the $\Q$-tree is the meet of $x$ and~$y$.

  We further claim that $\prec$ satisfies the extension axioms, apart from the second one. To show the first extension axiom, suppose that $y_1, \ldots, y_n$ are pairwise incomparable with respect to $\prec$ and such that $x \prec y_i$ for all $i$. Since $x \prec y_i$ for all $i$ there must be a node with children $x_1, \ldots, x_n$ such that each $x_i$ is an ancestor of $y_i$ in the $\Q$-tree $\cal T''$, and without loss of generality the $x_i$ appear in the order $x_1, \ldots, x_n$. Since siblings in the $\Q$-tree are dense, there is a sibling $x'$ to the left of $x_1$ such that $x \prec x'$. Every child $y$ of $x$ in the $\Q$-tree is incomparable to $y_1, \ldots, y_n$ with respect to $\prec$. 

  Towards showing that $\prec$ satisfies the third  extension axiom, suppose that $x \prec y$. If $x$ is not a left sibling of $y$ then choose some left sibling $z$ of $y$, otherwise choose an arbitrary $z$ in-between $x$ and $y$ in the $\Q$-tree (such a $z$ exists as siblings in the $\Q$-tree are dense). In both cases $x \prec z \prec y$. 

The order $\prec$ does not satisfy the second extension axiom, as it contains a smallest element, corresponding to the root of the $\Q$-tree.
However, it follows from the third extension axiom that after removing the root, the second extension axiom is satisfied. Therefore, $\prec$ restricted to the set of non-root nodes of the $\Q$-tree is a universal tree order.

  It is not hard to verify that all steps are indeed definable in first-order logic,
  yielding the required first-order formulas $\delta_U(x)$ and $\delta_\prec(x,y)$.
  More explicitly, $\delta_U(x)$ is the first-order formula that checks membership of $x\in\set{0,1}^*$ in the regular language $U:=(0100+0110+0111)^*0110$,
  while $\delta_\prec(x,y)$ is the first-order formula that holds of $u,v\in U$ if $u$ is lexicographically smaller than $v$ and, 
  writing $u$ as $u=u_1\cdot u_2\cdot 0110$, where $u_1$ is the longest common prefix of $u$ and $v$ belonging to $(0100+0110+0111)^*$, we have that $u_2\in (0100+0111)^*$. It follows from the reasoning presented above (and can be also verified directly), the relation defined by $\delta_\prec(x,y)$ on $U$ is a universal tree order. Moreover, both $\delta_U(x)$ and $\delta_\prec(x,y)$ are first-order definable. This proves Lemma~\ref{lemma:interpretation}.
\end{proof}

\subsubsection{Countable satisfiability for  $\emsotwo(\mso(<_1), <_2)$}\label{sec:final-proof}
We can at last prove Theorem \ref{theorem:maintheorem}.

\begin{proof}[Proof of Theorem \ref{theorem:maintheorem}.]
  Suppose $\varphi$ is an $\emsotwo(\mso(<_1), <_2)$ formula for a tree order $<_1$ and a linear order $<_2$. Then satisfiability of $\varphi$ reduces to the $\emsotwo(\mso(<'_1), <_2)$ formula $\varphi' = \exists S \psi$ where $<'_1$ is the ancestor order of the full binary tree and $\psi$ is obtained from $\varphi$ by 
  \begin{enumerate}[(1)]
   \item relativizing all first-order quantifiers via replacing subformulas $\exists x \gamma(x)$ and $\forall x \gamma(x)$ by $\exists x (S(x) \wedge \delta_{U}(x) \wedge \gamma(x))$ and $\forall x (S(x) \wedge \delta_{U}(x)\rightarrow \gamma(x))$, respectively, and
		\item replacing all atoms $x <_1 y$ by $\delta_\prec(x, y)$. 
  \end{enumerate}
		Here, $\delta_{U}(x)$ and  $\delta_\prec(x, y)$ are the formulas from Lemma \ref{lemma:interpretation}.
		
	Thus, a tree order $<_1$ in a model of $\varphi$ is simulated in models of $\varphi'$ by guessing a substructure (of domain $S$) of the universal tree order defined by $\delta_{U}(x)$ and $\delta_\prec(x,y)$.
\end{proof}

\subsection{Satisfiability of $\esotwo(<_1, <_2)$ reduces to $\emsotwo(<_1, <_2)$}
\label{section:reduction:esotwo:emsotwo}
In this section we prove Theorem~\ref{thm:esotwo:twolinearorders}.

For many classes $\calK$ of structures the general satisfiability problem of $\esotwo(\calK)$ can be reduced to the satisfiability problem of $\emsotwo(\calK)$. Examples are two-variable logic with one linear order \cite{Otto01} and two-variable logic with one relation to be interpreted by a preorder \cite{Pratt-Hartmann18}. 

Here we show that the same approach works for ordered structures with two linear orders. This proves Theorem~\ref{thm:esotwo:twolinearorders}.

\begin{proposition}\label{proposition:binaryToUnary}
  The satisfiability problem of $\esotwo(<_1, <_2)$ effectively reduces to the satisfiability problem  of ${\emsotwo(<_1, <_2)}$, for linear orders $<_1$ and $<_2$.
\end{proposition}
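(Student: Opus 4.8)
The plan is to follow the route used by Otto for a single linear order and by Pratt-Hartmann~\cite{Pratt-Hartmann18} for a single preorder, the new ingredient being that the two-variable sentence now sees \emph{two} linear orders at once. First I would put the input into a convenient normal form. Since every relation symbol may be assumed to have arity at most two, and unary relation variables cause no difficulty (they are already monadic and are simply carried over), it suffices to treat an $\esotwo(<_1,<_2)$ sentence of the shape
\[
  \exists \bar R\Big(\forall x\forall y\,\chi(x,y)\ \wedge\ \bigwedge_i\forall x\exists y\,\chi_i(x,y)\Big),
\]
obtained by the standard Scott-normal-form transformation, where $\chi$ and the $\chi_i$ are quantifier free over $\{<_1,<_2,\bar R\}$ and $\bar R$ is a tuple of \emph{binary} relation variables. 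Two distinct elements $a,b$ of a model with two linear orders stand in exactly one of four relationships, according to whether $a<_1b$ or $b<_1a$ and whether $a<_2b$ or $b<_2a$; together with the case $a=b$ this gives five \emph{order classes}. By distributing disjunctions I would further arrange that each $\chi_i$ pins down one order class, i.e.\ $\chi_i(x,y)=\omega_i(x,y)\wedge\rho_i(x,y)$ where $\omega_i$ fixes a single class and $\rho_i$ is a Boolean combination of atoms $R_j(x,y),R_j(y,x)$ (or, in the diagonal case, of atoms $R_j(x,x)$).

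The heart of the argument is a \emph{canonicalisation lemma}: if the above sentence is satisfiable over two linear orders, then it is satisfiable in a model in which $\bar R$ is \emph{colour-definable}. Concretely, there is a finite colour set $C$, a colouring $\gamma\colon M\to C$ of the domain, and a lookup table $F$ assigning to every triple $(c,c',o)\in C\times C\times\{\text{order classes}\}$ a Boolean value for each atom $R_j(x,y),R_j(y,x)$ — symmetric in the sense that the prescription for $(c,c',o)$ and the prescription for $(c',c,o')$ agree after swapping $x\leftrightarrow y$, where $o'$ is the class of $(b,a)$ when $(a,b)$ has class $o$ — such that $R_j(a,b)$ holds precisely when $F$ says so on input $(\gamma(a),\gamma(b),\mathrm{class}(a,b))$, and likewise on the diagonal. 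To prove this I would start from an arbitrary model $\mathfrak A$ (countable, by the Löwenheim--Skolem theorem) and colour each element $a$ by a bounded packet of data: its diagonal type $(R_j(a,a))_j$; for each order class $o$, the set $W_o(a)$ of \emph{link types} $\bigl((R_j(a,b))_j,(R_j(b,a))_j\bigr)$ realised by pairs $(a,b)$ of class $o$; and, for each existential constraint $i$, the class of a chosen witness of $a$. I would then define $F(c,c',o)$, for a quadrant $o$, to be a link type lying in the $W_o$-component of $c$ intersected with the swap of the $W_{o'}$-component of $c'$: when $(c,c',o)$ is actually realised in $\mathfrak A$ this intersection is nonempty, and since every realised link type occurs in $\mathfrak A$ and is therefore not forbidden by $\chi$, the universal constraint is preserved by the re-interpretation; diagonal constraints reduce to the diagonal types, which are unchanged. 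The existential constraints require that the colour be rich enough that the re-interpreted relation still supplies each element with a witness of the right link type in the prescribed class; the cleanest way to guarantee this is, before colouring, to pass to a sufficiently ``generic'' model — e.g.\ a suitably blown-up/amalgamated variant of $\mathfrak A$ — so that distinct existential demands of a single element are met by witnesses of distinct colours, exactly as in the construction of \cite{Pratt-Hartmann18} for a single partial order.

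Granted the canonicalisation lemma, the reduction is routine. One builds the $\emsotwo(<_1,<_2)$ sentence $\exists\bar Z\,\psi$ by introducing a monadic predicate $Z_c$ for each colour $c\in C$, asserting that the $Z_c$ partition the domain, taking a finite disjunction over the (finitely many) possible lookup tables $F$, and obtaining $\psi$ from $\varphi$ by replacing every atom $R_j(x,y)$ by $\bigvee\bigl(Z_c(x)\wedge Z_{c'}(y)\wedge\mathrm{ord}_o(x,y)\bigr)$ over those $(c,c',o)$ for which the chosen $F$ prescribes truth, where $\mathrm{ord}_o(x,y)$ is the quantifier-free $\{<_1,<_2\}$-formula defining order class $o$ (for instance $x<_1y\wedge x<_2y$), and handling the diagonal atoms $R_j(x,x)$ by a further monadic encoding. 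The forward implication of the proposition is then precisely the canonicalisation lemma (applied after the Löwenheim--Skolem step, which is harmless since satisfiability over two linear orders is unaffected), while the backward implication is immediate: from any model of $\exists\bar Z\,\psi$ one reads off interpretations of the $R_j$ via the substituted formulas, obtaining a model of $\exists\bar R\,\varphi$ over two linear orders. I expect the main obstacle to be the canonicalisation lemma, and within it the bookkeeping needed to make the existential witnesses survive the passage to a colour-definable interpretation; this is where the techniques of \cite{Pratt-Hartmann18} must be adapted, the only genuinely new point being that one now juggles the five order classes arising from two linear orders rather than the order classes of a single (pre)order.
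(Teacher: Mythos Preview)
Your approach follows essentially the same route as the paper: both adapt Pratt-Hartmann's framework to two linear orders, and in both cases the only genuinely new content is verifying that the ``blow-up''/cloning operation is available in this setting. The paper organises the argument more modularly---first showing that structures with two linear orders allow cloning of any set of non-King elements (by inserting each clone $b'$ immediately next to $b$ in both orders and copying all $2$-types from $b$), then invoking Pratt-Hartmann's machinery verbatim to pass through an intermediate \emph{spread normal form} and finally strip the binary relations---whereas you aim for a single semantic canonicalisation lemma followed by a syntactic substitution. Both routes are sound in outline.

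Your sketch has one genuine gap, however: you do not address \emph{Kings}, i.e.\ elements whose $1$-type is realised only once. Such elements cannot be blown up or cloned, since a King together with its clone would realise a $2$-type not present in the original model, violating the requirement that the generic model introduce no new $2$-types (and hence no new violations of the universal part). The paper, following Pratt-Hartmann, handles this by taking a finite disjunction over all possible \emph{court structures}---the isomorphism types of the finite substructure induced by the Kings together with their chosen witnesses---and only then cloning the remaining non-King elements. Without this step your canonicalisation lemma as stated can fail: if two Kings are mutual witnesses for some existential conjunct, no symmetric look-up table $F$ can be chosen that preserves both witness relationships while remaining consistent with the universal part. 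A related detail you gloss over is that the paper's spread normal form also enforces a $3$-cycle on the $\lambda_k$ predicates precisely to rule out mutual witnessing among non-Kings; your symmetric $F$ would have to confront this issue directly, and ``witnesses of distinct colours'' alone does not suffice.
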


For the proof of this proposition, we rely on a general reduction for a broad range of classes of structures which is implicit in the procedure used by Pratt-Hartmann in \cite{Pratt-Hartmann18}. We make this approach explicit and show that it can also be applied to structures with two linear orders. 

Observe that the satisfiability problems for $\esotwo(\calK)$- and $\emsotwo(\calK)$-formulas are inter-reducible with the satisfiability problems for $\fotwo(\calK)$ over arbitrary signatures and, respectively, signatures restricted to unary relation symbols apart from the signature of $\calK$. For this reason we restrict our attention to two-variable \emph{first-order logic} in the following. One might wonder, why not concentrate on $\fotwo$ throughout, yet we prefer to state results such as Proposition \ref{proposition:binaryToUnary} in terms of \esotwo and \emsotwo instead of \fotwo as the abbreviation \fotwo is ambiguously used for two-variable logic over unary and arbitrary signatures in the literature.

The high level proof idea is as follows. Recall that a $\fotwo$-formula can be transformed in a satisfiability-preserving way into Scott normal form. For structures with at least two elements, formulas in Scott normal form can be transformed into an equisatisfiable formula in \emph{standard normal form}  \cite{Pratt-Hartmann18}, that is, into the form

  \[\forall x \forall y (x = y \vee \psi(x,y)) \wedge \bigwedge^{m}_{h=1} \forall x \exists y (x \neq y \wedge \psi_h(x, y))\]

where both $\psi$ and all $\psi_h$ are quantifier-free. Thus, a formula in standard normal form poses some universal as well as some existential constraints on models. Each existential constraint $\forall x \exists y (x \neq y \wedge \psi_h(x, y))$ requires that all instantiations $a$ of $x$ are witnessed by an instantiation $b$ of $y$ such that $a$ and $b$ satisfy $\psi_h$. Given a model, a \emph{witness function} $W$ assigns a witness $b$ to each $\forall x \exists y (x \neq y \wedge \psi_h(x, y))$ and each instantiation $a$ of $x$. We call such an instantiation $b$ a \emph{witness} for $a$ and $\forall x \exists y (x \neq y \wedge \psi_h(x, y))$.

We will show that (1) under certain conditions, the $\forall \exists$-constraints can be required to \emph{spread} their witnesses over distinct elements (see Lemma \ref{lemma:cloningImpliesSpreading}), and (2) if witnesses are spread in this way, then binary relation symbols can be removed from formulas in a satisfiability-preserving way (see Lemma \ref{lemma:spreadImpliesReduction}). Proposition \ref{proposition:binaryToUnary} will then follow from the observation that $\fotwo(<_1, <_2)$ allows to spread witnesses (see Lemma \ref{lemma:cloningOfTwoLinearOrders}). 

Let us first formalize what we mean by ``spreading''. A witness function $W$ \emph{spreads} its witnesses if for all elements $a$: 
\begin{enumerate}
 \item The witnesses for $a$ for all existential constraints are distinct elements.
 \item If $a$ has a witness $b$ for some constraints, then all witnesses for $b$ are not the element~$a$.
\end{enumerate}

We start by refining the standard normal form to spread normal form with the intention that if a formula in spread normal form has a model, then it has a witness function that spreads its witnesses. Fix a class $\calK$ of structures over some signature $\Delta$. An  $\fotwo(\calK)$ formula over $\Delta \uplus \Theta$ is in \emph{spread normal form} \cite[p. 23]{Pratt-Hartmann18} if it conforms to the pattern
\begin{align*}
  & \quad\bigwedge_{\theta \in Z} \exists x \theta(x) \wedge \forall x \forall y (x = y \vee \nu(x,y)) \\ 
  \wedge &\quad \bigwedge_{k=0}^2 \bigwedge_{h = 0}^{m-1} \forall x \exists y (\lambda_k(x) \rightarrow (\lambda_{\lfloor k+1 \rfloor}(y) \wedge \mu_h(y)\wedge \theta_h(x,y)))
\end{align*}

where (i) $Z$ is a set of unary pure Boolean formulas; (ii) $\nu$, $\theta_0, \ldots, \theta_{m-1}$ are quantifier- and equality-free formulas; (iii) $\lambda_0$, $\lambda_1$, and $\lambda_2$ are mutually exclusive unary pure Boolean formulas;
and (iv) $\mu_0, \ldots, \mu_{m-1}$ are mutually exclusive unary pure Boolean formulas. Here, a formula is said to be \emph{pure Boolean} if it is quantifier-free and only uses symbols from $\Theta$. By $\lfloor k+1 \rfloor$ we denote $k+1 \pmod 3$.  

It is easy to see that for each model of a formula in spread normal there is a witness function that spreads its witnesses. The first condition -- that witnesses for an element $a$ for the existential conjuncts are distinct elements -- is ensured by the mutually exclusive~$\mu_h$. The second condition -- that if $a$ has a witness $b$ then all witnesses for $b$ are not the element~$a$ -- is ensured by the mutually exclusive $\lambda_k$. 
  
Our next goal is to find a general criterion for when $\fotwo(\calK)$ formulas can be transformed into equisatisfiable formulas in spread normal form, for a given class $\calK$ of structures of signature $\Delta$. 
Intuitively, $\fotwo(\calK)$ formulas $\varphi$ can be transformed into a equisatisfiable formula $\varphi^*$ in  spread normal form, if elements can be ``cloned'' without affecting the truth of the formula. Indeed, if elements can be cloned, then disjoint witnesses for existential constraints can be created on-demand by cloning elements. In a moment, after formalizing what it means that an element can be cloned,  we will see that not all elements can be cloned. This will lead us to slightly adapt our goal and to rather look for a criterion for when $\fotwo(\calK)$ formulas can be transformed into a \emph{disjunction} of  equisatisfiable formulas in spread normal form, where each disjunct essentially takes care of one configuration of the non-clonable elements. 

Let us formalize what it means that elements can be cloned. We first fix some notation. In the following we denote the quantifier-free type of a tuple $\bar a$ in a structure $\calA$ by $\type^{\calA}[\bar a]$. We say that a quantifier-free type $\tau$ is \emph{realized} in $\calA$, if there is a tuple $\bar a$ with $\type^{\calA}[\bar a] = \tau$. A structure $\calA$ over signature $\Delta \uplus \Theta$ is an expansion of a $\Delta$-structure $\calB$, if $\calA$ and $\calB$ have the same domain and agree on the interpretations of relation symbols from $\Delta$.

Now, suppose $\calK$ is a class of structures over $\Delta$, and $\calA$ is an expansion of a $\calK$-structure with domain $A$ over a signature $\Delta \uplus \Theta$.  Then $\calA$ \emph{allows cloning} of a set of elements~$B$, if there is a structure $\calA'$ with domain $A' \df A \cup B'$ where $B' = \{b' \mid b \in B\}$ such that
  \begin{enumerate}[(i)]  
    \item $\calA$ is a substructure of $\calA'$ and all quantifier-free binary types realized in $\calA'$ are realized in $\calA$;
    \item $\type^{\calA'}[a, b'] = \type^{\calA}[a, b]$ for all $a, b \in A$ with $a \neq b$; and
    \item $\type^{\calA'}[b'_1, b'_2] = \type^{\calA}[b_1, b_2]$ for all $b_1, b_2 \in A$ with $b_1 \neq b_2$; and    
    \item the $\Delta$-structure obtained from $\cal A'$ by forgetting the relations in $\Theta$  is a $\calK$-structure. 
  \end{enumerate}

It is easy to see that it is not possible to clone elements whose unary quantifier-free type is unique within a structure: when we clone such an element $b$, then the binary type of $b$ and its clone $b'$ has not been present before, which violates condition (i) from above.  This motivates the following definition. An element $a$ of a structure $\calA$ is a \emph{King}, if there is no other element $b$ in $\calA$ with the same quantifier-free type, i.e., there is no $b \neq a$ with $\type^{\calA}[a] = \type^{\calA}[b]$.

Now, we say that a \emph{class $\calK$ of structures allows cloning} if every expansion of a $\calK$-structure allows cloning of arbitrary sets of non-King elements. The following lemma establishes that  in structures with two linear orders we can clone arbitrary sets of non-King elements. The same construction can be used for structures with more than two linear orders.

\begin{lemma}\label{lemma:cloningOfTwoLinearOrders}
  The class of structures with two linear orders allows cloning. 
\end{lemma}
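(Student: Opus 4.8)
The plan is to build the cloned expansion $\calA'$ by inserting, for each non-King element $b$ of the set $B$ that is to be cloned, a fresh element $b'$ that sits immediately next to $b$ in each of the two linear orders, with the side (below or above $b$) chosen so that the quantifier-free type of the pair $(b,b')$ agrees with the type of $b$ together with one of its \emph{twins}. Concretely, I would first fix, for every $b\in B$, an element $t(b)\in A$ with $t(b)\neq b$ and $\type^{\calA}[t(b)]=\type^{\calA}[b]$; such a twin exists precisely because $b$ is not a King. Put $\tau_b\df\type^{\calA}[b,t(b)]$ and record, for each $i\in\set{1,2}$, whether $\tau_b$ asserts $x<_i y$ or $y<_i x$.

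The domain of $\calA'$ is $A'\df A\cup B'$ with $B'=\setof{b'}{b\in B}$, and I would use the projection $p\from A'\to A$ defined by $p(a)=a$ for $a\in A$ and $p(b')=b$. For $i\in\set{1,2}$ I declare $x<_i y$ in $\calA'$ to hold iff either $p(x)\neq p(y)$ and $p(x)<_i p(y)$ in $\calA$, or $p(x)=p(y)$ (so $\set{x,y}=\set{b,b'}$ for a unique $b\in B$) and the side recorded for $\tau_b$ places $x$ below $y$. For the $\Theta$-part: $b'$ inherits the unary $\Theta$-type of $b$; the $\Theta$-atoms on a pair consisting of a clone $b'$ and an element of $A\setminus\set{b}$ are copied from the corresponding pair involving $b$; the $\Theta$-atoms on a pair of two distinct clones $b_1',b_2'$ are copied from the corresponding pair $b_1,b_2$; and the $\Theta$-atoms on $(b,b')$ are copied from $\tau_b$, i.e.\ from $(b,t(b))$. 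I would then verify in turn: (a) each $<_i$ is a linear order on $A'$ — totality and antisymmetry are immediate, and transitivity follows by a routine case analysis according to which of $p(x),p(y),p(z)$ coincide, using that $<_i$ is linear on $A$; (b) $\calA$ is a substructure of $\calA'$ and the $\set{<_1,<_2}$-reduct of $\calA'$ is a structure with two linear orders — immediate, and this already gives condition (iv) of the definition of cloning; (c) conditions (ii) and (iii) — here the two projections of the pair in question are distinct, so the order atoms agree with $\calA$ by the first clause defining $<_i$ and the $\Theta$- and unary atoms agree by construction, hence the full quantifier-free types agree; (d) condition (i) — $\calA$ is a substructure, and every binary type realized in $\calA'$ is one of $\type^{\calA}[a_1,a_2]$, $\type^{\calA}[a,b]$, $\type^{\calA}[b_1,b_2]$, or $\tau_b=\type^{\calA}[b,t(b)]$, each realized in $\calA$; the only pair needing a remark is $(b,b')$, whose order atoms match $\tau_b$ by the choice of side and whose unary and $\Theta$-atoms match $\tau_b$ by construction, so $\type^{\calA'}[b,b']=\tau_b$.

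The single delicate point — and the reason the naive ``insert $b'$ immediately after $b$ in both orders'' fails — is the pair $(b,b')$: its type is not constrained by (ii) or (iii), yet by (i) it must already be realized in $\calA$, and it can happen that among pairs of elements sharing the unary type of $b$ only one of the four orientation combinations for $(<_1,<_2)$ occurs (for instance when those elements are ordered one way by $<_1$ and the opposite way by $<_2$). Forcing the orientation of $b'$ relative to $b$ to coincide with an actual twin pair $(b,t(b))$ is exactly what fixes this. Finally, note that the construction is uniform in $B$ (so arbitrary, possibly infinite, sets of non-King elements are handled) and generalizes verbatim to any fixed finite number of linear orders by keeping one orientation bit per order, which is the remark stated after the lemma.
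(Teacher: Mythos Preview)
Your proposal is correct and takes essentially the same approach as the paper: both fix, for each non-King $b\in B$, a twin $c=t(b)$ with the same unary type, insert the clone $b'$ immediately adjacent to $b$ in each order on the side dictated by the position of $t(b)$, and copy the type of $(b,b')$ from $(b,t(b))$, while all other types are inherited via the projection. Your write-up is more explicit than the paper's sketch (in particular you spell out the transitivity check and articulate why the twin-directed placement is needed for condition~(i)), but the construction and verification are the same.
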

\begin{proof}[Proof sketch]
  Consider a structure $\calA$ with signature  $\{<_1, <_2\} \uplus \Theta$ and domain $A$ that interprets $<_1$ and $<_2$ by linear orders. Let $B$ be a set of non-King elements of $\calA$. The structure $\calA'$ that clones $B$ in $\calA$ has domain $A' \df A \cup \{b' \mid b \in B\}$. The relations between elements in $A$ within the structure $\calA'$ are inherited from $\calA$.   The simple idea for new elements $b'$ is to insert a copy $b'$ of $b \in B$ immediately before/after $b$ with respect to $<_1$ and $<_2$. If $c \in A$ is a non-king with the same unary type as~$b$ (such an element exists since $b$ is a non-King) and if, for instance, $b <_1 c$ and $b <_2 c$ then $b'$ is inserted directly after $b$ in both orders $<_1$ and $<_2$. The type of $(b, b')$ is then inherited from $(b, c)$. Furthermore the element $b'$ inherits its relation to all other elements from~$b'$. See Figure \ref{figure:cloning} for an illustration.
  
  More formally, the tuples of $\calA'$ have the following types :
  \begin{enumerate}[(i)]
    \item $\type^{\calA'}[a] = \type^{\calA}[a]$ and $\type^{\calA'}[a, a'] = \type^{\calA}[a, a']$ for all $a, a' \in A$;
    \item $\type^{\calA'}[b'] = \type^{\calA}[b]$ and $\type^{\calA'}[b', a] = \type^{\calA}[b, a]$ for all $b \in B$ and $a \in A$;
    \item $\type^{\calA'}[b'_1, b'_2] = \type^{\calA'}[b_1, b_2]$ for all $b_1, b_2 \in B$ with $b_1 \neq b_2$;    
    \item $\type^{\calA'}[b, b'] = \type^{\calA}[b, c]$ for $c \in A$ with $\type^{\calA}(c) = \type^{\calA}(b)$.
  \end{enumerate}
  It can be easily verified that $\calA'$ satisfies the conditions on cloning. 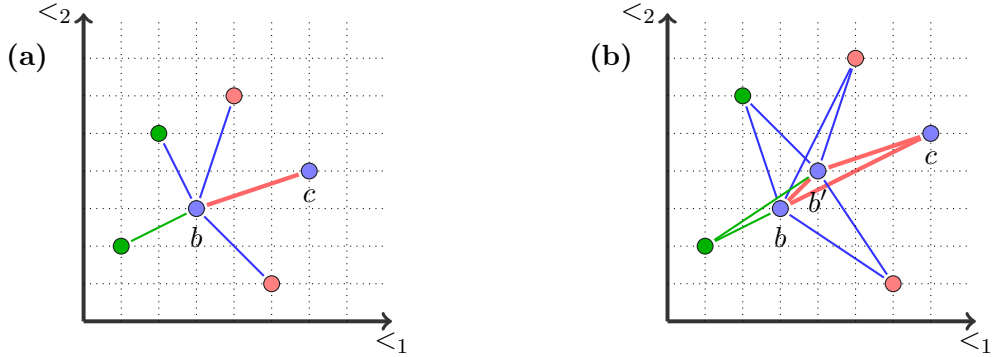
\begin{figure}[t] 
\begin{subfigure}[t]{0.5\textwidth}
      \begin{tikzpicture}[
        xscale=0.5,
        yscale=0.5
      ]
        \node (tmp) at (-1.5, 7) {\textbf{(a)}};
        
\draw [->, line width=1.5pt, black!80] (0,0) -- (8.2,0) node[black, below] {$<_1$};
        \draw [->, line width=1.5pt, black!80] (0,0) -- (0,8.2) node[black, left, sloped] {$<_2$};

\foreach \x in {1,2,...,7}
        \draw[dotted, black] (\x,0) -- (\x,8) ;

        \foreach \y in {1,2,...,7}
        \draw[dotted, black] (0,\y) -- (8,\y);

\node (v1) at (5, 1)[redNode] {};
        \node (v2) at (2, 5)[greenNode] {};
        \node (v3) at (3, 3)[blueNode, label={below:$b$}] {};
        \node (v4) at (4, 6)[redNode] {};
        \node (v5) at (1, 2)[greenNode] {};
        \node (v6) at (6, 4)[blueNode, label={below:$c$}] {};

\draw[blueEdge](v3)  -- (v1);
        \draw[blueEdge](v3)  -- (v2);
        \draw[blueEdge](v3)  -- (v4);
        \draw[greenEdge](v3)  -- (v5);
        \draw[redEdge, ultra thick](v3)  -- (v6);
        
      \end{tikzpicture}
    \end{subfigure}
    \begin{subfigure}[t]{0.45\textwidth}
      \begin{tikzpicture}[
        xscale=0.5,
        yscale=0.5
      ]
        \node (tmp) at (-1.5, 7) {\textbf{(b)}};
        
\draw [->, line width=1.5pt, black!80] (0,0) -- (8.2,0) node[black, below] {$<_1$};
        \draw [->, line width=1.5pt, black!80] (0,0) -- (0,8.2) node[black, left, sloped] {$<_2$};

\foreach \x in {1,2,...,7}
        \draw[dotted, black] (\x,0) -- (\x,8) ;

        \foreach \y in {1,2,...,7}
        \draw[dotted, black] (0,\y) -- (8,\y);

\node (v1) at (6, 1)[redNode] {};
        \node (v2) at (2, 6)[greenNode] {};
        \node (v3) at (3, 3)[blueNode, label={below:$b$}] {};
        \node (v3b) at (4, 4)[blueNode, label={below:$b'$}] {};
        \node (v4) at (5, 7)[redNode] {};
        \node (v5) at (1, 2)[greenNode] {};
        \node (v6) at (7, 5)[blueNode, label={below:$c$}] {};

        \begin{pgfonlayer}{background}
\draw[redEdge, ultra thick](v3)  -- (v3b);

          \draw[blueEdge](v3)  -- (v1);
          \draw[blueEdge](v3)  -- (v2);
          \draw[blueEdge](v3)  -- (v4);
          \draw[greenEdge](v3)  -- (v5);
          \draw[redEdge, ultra thick](v3)  -- (v6);

          \draw[blueEdge](v3b)  -- (v1);
          \draw[blueEdge](v3b)  -- (v2);
          \draw[blueEdge](v3b)  -- (v4);
          \draw[greenEdge](v3b)  -- (v5);
          \draw[redEdge, ultra thick](v3b)  -- (v6);
        \end{pgfonlayer}
    \end{tikzpicture}
    \end{subfigure}
\caption{\textbf{(a)} A structure with non-King elements $b$ and $c$ with the same unary type. \textbf{(b)} The structure obtained by cloning the element $b$ once. For clarity only the binary types between $b$ and its clone $b'$ with respect to other elements are indicated. }\label{figure:cloning}
\end{figure}
\end{proof}

When a class $\calK$ allows cloning then witnesses for a formula $\varphi \in \fotwo(\calK)$ in Scott normal form can be spread among cloned elements. Unfortunately, Kings need to be dealt with separately as they cannot be cloned. But, essentially, for each  possible ``configuration'' of Kings and their witnesses one can find a formula in spread normal form such that their disjunction is equisatisfiable to $\varphi$.

\begin{lemma}[Adaption of Lemma 4.2 in \cite{Pratt-Hartmann18}] \label{lemma:cloningImpliesSpreading}
  If a class $\calK$ of structures allows cloning then for all $\varphi \in \fotwo(\calK)$ one can effectively find formulas $(\varphi_\calC)_{\calC \in C}$, for some finite set $C$, in spread normal form such that the following conditions are equivalent:
	\begin{enumerate}
	 \item $\varphi$ is (finitely) satisfiable.
	 \item $\bigvee_{\calC \in C} \varphi_\calC$ is (finitely) satisfiable.
	\end{enumerate}
\end{lemma}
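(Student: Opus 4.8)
The plan is to adapt the argument of \cite[Lemma 4.2]{Pratt-Hartmann18}: the few rigid elements of a model (the Kings) are handled by a finite case distinction, while the witnesses of the remaining ``generic'' elements are spread out by cloning. First bring $\varphi$ into standard normal form $\forall x\forall y\,(x=y\vee\psi(x,y))\wedge\bigwedge_{h=1}^{m}\forall x\exists y\,(x\neq y\wedge\psi_h(x,y))$ over a signature $\Delta\uplus\Theta$, and fix, for every model, a witness function $W$ choosing a witness $W(h,a)$ for each $a$ and each existential conjunct $h$. Since distinct Kings carry distinct quantifier-free $1$-types, a $\calK$-structure has at most $N$ Kings, where $N$ is the finite number of $1$-types over $\Delta\uplus\Theta$. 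A \emph{court} $\calC$ records the $1$-types of the $p\le N$ Kings $a_1,\ldots,a_p$, all their pairwise $2$-types, and, for each $a_i$ and each $h$, its designated witness $W(h,a_i)$: either another King (which one) or a commoner (its $1$-type and its $2$-type to $a_i$). Only finitely many courts exist, so $C$ is finite and effectively computable.

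For each court $\calC$ one builds $\varphi_\calC$ over $\Delta\uplus\Theta'$ with $\Theta'\supseteq\Theta$ containing fresh unary predicates: singleton predicates $P_1,\ldots,P_p$ naming the Kings and $Q_{i,h}$ naming the commoner-witnesses the court demands for the Kings; a cyclic $3$-colouring $\lambda_0,\lambda_1,\lambda_2$ and slot predicates $\mu_0,\ldots,\mu_{m-1}$ for the generic elements; and predicates $E_{h,j}$ marking generic elements whose $h$-witness is the King $a_j$. The universal part $\nu$ is the conjunction of $\psi$ with: the court constraints pinning the $1$-types of the $P_i$- and $Q_{i,h}$-elements and the recorded $2$-types between them (forcing, in particular, $\psi_h$ between a King and its recorded witness); the clauses $E_{h,j}(x)\wedge P_j(y)\to\psi_h(x,y)$; and the unary requirements — foldable into a binary formula since domains have $\ge 2$ elements — that each $P_i$ is a singleton, that the $P_i$ are disjoint, that every non-$P$ element lies in exactly one $\lambda_k$, and that the $\lambda$'s respectively the $\mu$'s are mutually exclusive. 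The $Z$-part consists of the pure Boolean conjuncts $\exists x\,P_i(x)$ and $\exists x\,Q_{i,h}(x)$. The spread conjuncts are $\forall x\exists y\,(\lambda_k(x)\to(\lambda_{\lfloor k+1\rfloor}(y)\wedge\mu_h(y)\wedge\theta_h(x,y)))$ for $k\in\{0,1,2\}$ and $0\le h<m$, with $\theta_h(x,y)\df\psi_h(x,y)\vee\bigvee_{j}E_{h,j}(x)$; the ``escape hatch'' disjunct $\bigvee_j E_{h,j}(x)$ lets a generic element whose $h$-witness is a King satisfy its spread conjunct vacuously, its actual obligation being discharged through $\nu$ and $Z$.

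It then remains to prove that $\varphi$ is (finitely) satisfiable iff $\bigvee_{\calC\in C}\varphi_\calC$ is. For the easy direction, a model $\calB$ of some $\varphi_\calC$ is a $\calK$-structure after forgetting $\Theta'$; $\nu$ gives $\psi$, and each element $a$ obtains, for every $h$, a witness distinct from itself: if $a$ satisfies some $P_i$ or some $E_{h,j}$ the witness comes from $Z$ together with a clause of $\nu$, otherwise $a$ lies in a unique $\lambda_k$ and the spread conjunct for $(k,h)$ provides one, the escape hatch there being inactive so that $\psi_h$ genuinely holds. Hence $\calB\models\varphi$, and finiteness is preserved. For the converse, given $\calA\models\varphi$ read off its court $\calC$ and a witness function $W$, and form $\calA'$ by cloning each commoner of $\calA$ boundedly often — one copy $(b,k,h)$ for every colour $k$ and slot $h$, plus enough further copies to host the $Q_{i,h}$-markers. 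Iterating the cloning hypothesis keeps the $\Delta$-reduct in $\calK$ and realizes no $2$-type outside $\calA$, so $\psi$ survives; the Kings are left unchanged, and $P_i,Q_{i,h},\lambda_k,\mu_h,E_{h,j}$ are interpreted by the obvious bookkeeping (with $E_{h,j}$ inherited from $W$). Every spread conjunct then holds: when $W(h,b)$ is a commoner use the copy $(W(h,b),k+1,h)$ as a genuine $\psi_h$-witness, and when $W(h,b)$ is a King use any copy of $b$ in colour $k+1$ and slot $h$, the escape hatch firing because $E_{h,j}$ holds. Thus $\calA'\models\varphi_\calC$, finite if $\calA$ is.

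The step I expect to be the main obstacle is the treatment of the Kings. Because a King cannot be cloned, its $1$-type and all $2$-types incident to it are rigid and must be committed to in the court in advance; consequently a King's existential witnesses cannot be supplied by an ordinary $\forall\exists$-conjunct but must be forced ``globally'' through the singletons $P_i$ and $Q_{i,h}$ inside $\nu$ and $Z$, and one must guarantee that a generic element whose sole $h$-witness happens to be a King does not break the spread pattern — which is exactly what the $E_{h,j}$ and the escape-hatch disjunct in $\theta_h$ achieve. Nailing down the consistency conditions a court must satisfy, and re-checking that the cloned model $\calA'$ realizes no $2$-type absent from $\calA$, is the delicate bookkeeping; it follows \cite[Lemma 4.2]{Pratt-Hartmann18} up to notation.
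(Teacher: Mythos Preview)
Your proposal is correct and follows essentially the same route as the paper. Your $P_i,\lambda_k,\mu_h,E_{h,j}$ correspond exactly to the paper's $Q_i,N_k,W_h,Q_{h,i}$, and your escape-hatch $\theta_h=\psi_h\vee\bigvee_j E_{h,j}$ is logically the same as the paper's $\psi_h^*=(\bigwedge_i\neg Q_{h,i})\to\psi_h$; the cloned model $\calA'=A_0\cup\bigcup_{k,h}B_{k,h}$ and the reduction of the backward direction are also identical up to notation.
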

\begin{proof}This follows immediately from the proof of Lemma 4.2 in \cite{Pratt-Hartmann18}. The structures $\calA_1, \ldots, \calA_i$ used in that proof can be easily constructed inductively by successively cloning non-King elements. The rest of the proof is independent of the class $\calK$. 
  
  For the sake of completeness we provide a proof sketch. Suppose $\calK$ contains structures over signature $\Delta$ and $\varphi$ is a $\fotwo(\calK)$ formula over signature $\Delta \uplus \Theta$ in standard normal form: 
  
  \[\forall x \forall y (x = y \vee \psi(x,y)) \wedge \bigwedge^{m}_{h=1} \forall x \exists y (x \neq y \wedge \psi_h(x, y)\]

	Each of the formulas $\varphi_\calC$ in spread normal form that we will construct is responsible for one configuration of Kings and their witnesses in models of $\varphi$. More precisely, let $\Gamma_K$ be a set of unary types over the signature $\Delta \uplus \Theta$. A \emph{court structure for $\Gamma_K$ and $\varphi$} is a structure $\calC$ such that
   \begin{enumerate}
    \item the types of Kings in $\calC$ are exactly the types in $\Gamma_K$,
    \item $\calC$ satisfies $\forall x \forall y (x = y \vee \psi(x,y))$, and
    \item for all Kings $a$ of $\calC$ and all $h \in \{1, \ldots, m\}$ there is exactly one $b \neq a$ in $\calC$ such that $C \models \psi_h(a, b)$.
   \end{enumerate}
	Thus, such a court structure represents a set of Kings as well as their witnesses. Denote by $C$  the set of all court structures for all sets $\Gamma_K$ of unary types. 
  
  For each $\calA$ with $\calA \models \varphi$, the \emph{court structure induced by $\calA$} is the substructure of $\calA$ induced by the Kings of $\calA$ as well as their witnesses. 
  
  Our goal is to construct formulas $\varphi_\calC$ in spread normal form for each $\calC \in C$ such that the following conditions are equivalent:
		\begin{enumerate}
		 \item $\varphi$ has a (finite) model with court structure $\calC$.
		 \item $\varphi_\calC$ has a (finite) model. 
		\end{enumerate}
	Then the statement of the proposition follows by observing that $\varphi$ has a (finite) model if and only if $\bigvee_{\calC \in C} \varphi_\calC$ has a (finite) model. 
	   
  Thus we aim at constructing, for each $\calC \in C$, a formula $\varphi_\calC$ in spread normal form with the above property. Recall that a formula is in spread normal form, if it is of the form
      \begin{align*}
      & \quad\bigwedge_{\theta \in Z} \exists x \theta(x) \wedge \forall x \forall y (x = y \vee \nu(x,y)) \\ 
      \wedge &\quad \bigwedge_{k=0}^2 \bigwedge_{h = 1}^{n} \forall x \exists y (\lambda_k(x) \rightarrow (\lambda_{\lfloor k+1 \rfloor}(y) \wedge \mu_h(y)\wedge \theta_h(x,y)))
    \end{align*}
    where (i) $Z$ is a set of unary pure Boolean formulas; (ii) $\nu$, $\theta_1, \ldots, \theta_{n}$ are quantifier- and equality-free formulas; (iii) $\lambda_0$, $\lambda_1$, and $\lambda_2$ are mutually exclusive unary pure Boolean formulas; and (iv) $\mu_1, \ldots, \mu_{n}$ are mutually exclusive unary pure Boolean formulas.

	The formula $\varphi_\calC$ ensures a set of conditions on models. Suppose that $\calC$ has the elements $a_1, \ldots, a_t$ ordered in such a way that $a_1, \ldots, a_s$ are its Kings and $a_{s+1}, \ldots, a_t$ are its non-King elements.

	A first set of conditions, encoded in formulas $\delta_1$ and $\delta_2$, ensures that the court structure of models of $\varphi_\calC$ are isomorphic to $\calC$. To this end, let $Q_1, \ldots, Q_t$ be fresh unary relation symbols.
	
	The formula $\delta_1$ ensures that each such relation $Q_i$ contains exactly one element (intended to be the $i$th element $a_i$ of $\calC$):
	
	\[ \bigwedge_{i=1}^t \exists x Q_i(x) \;\wedge\; \forall x \forall y \Big(x = y \vee \bigwedge_{i=1}^t \neg(Q_i(x) \wedge Q_i(y))\Big)\] 
	
	The formula $\delta_2$ ensures that the substructure induced by elements in the $Q_i$s is isomorphic to $\calC$: 
	
	\[\bigwedge_{i=1}^{t-1}\bigwedge_{j=i+1}^{t} \forall x \forall y \Big(x = y \vee ((Q_i(x) \wedge Q_j(y)) \rightarrow \type^\calC[a_i, a_j])\Big) \]
	
	A second set of conditions, encoded in formulas $\gamma_1$, $\gamma_2$, and $\gamma_3$ ensures that witnesses are spread. To this end, let $Q_{h, 1}, \ldots, Q_{h, s}$ be fresh unary relation symbols. The intuition is that $Q_{h, i}$ shall contain all elements $a$ such that the $\psi_h$-witness of $a$ is the $i$th King. For spreading non-King witnesses, further fresh unary relation symbols $W_1,\ldots, W_{m}$ and $N_0, N_1, N_2$ are used. 
	
	Witnesses for an element $a$ will then be ensured by distinguishing between King witnesses, handled by $\gamma_2$, and non-King witnesses, handeled by $\gamma_3$.
	
	The formula $\gamma_1$ enforces that each element is either a King, or occurs in exactly one $W_i$ and in exactly one $N_i$:
  
  \[\forall x \forall y \Big(x = y \vee \bigvee_{i=1}^s Q_i(x) \vee \mu(x)\Big) \]
  
  Here, $\mu(x)$ is a formula stating that $x$ occurs in exactly one $W_i$ and in exactly one $N_i$. (Note that the quantification of $y$ is not really necessary, but makes it easy to see that $\gamma_2$ adheres to the conditions for spread normal form.)

	The formula $\gamma_2$ ensures that King witnesses are present:
	
	\[\bigwedge_{i=1}^s \bigwedge_{h=1}^m \forall x \forall y \Big(x = y \vee ((Q_{h,i}(x) \wedge Q_i(y)) \rightarrow \psi_h(x, y))\Big)\]

  Finally, the formula $\gamma_3$ ensures that each element has its non-King witnesses:
  
      \[\bigwedge^2_{k=0} \bigwedge^{m}_{h=1}  \forall x \exists y \Big(N_k(x) \to (N_{\lfloor k + 1 \rfloor}(y) \wedge W_h(y) \wedge  \psi^*_h(x, y) )\Big)\]
  where $\psi^*_h(x, y)$ is of the form
    \[\psi^*_h(x, y) \df \Big(\bigwedge_{i=1}^s \neg Q_{h, i}(x)\Big) \to \psi_h(x,y)\]

	Now,  formula $\varphi_\calC$ can be obtained by transforming the conjunction of $\delta_1$, $\delta_2$, $\gamma_1$, $\gamma_2$, $\gamma_3$, and $\nu \df \forall x \forall y (x = y \vee \psi(x,y))$ into spread normal form. Due to the form of these formulas, this can be easily achieved. This concludes the construction of $\varphi_\calC$

	It remains to show that $\varphi$ has a (finite) model with court structure $\calC$ if and only if $\varphi_\calC$ has a (finite) model.
	
	From a model $\calA_\calC$ of $\varphi_\calC$ one can construct a model $\calA$ of $\varphi$ by ignoring the interpretations of all new relation symbols in $\varphi_\calC$, that is, by ignoring the relations $Q^{\calA_\calC}_i$, $Q^{\calA_\calC}_{h, i}$, $W^{\calA_\calC}_i$ and~$N^{\calA_\calC}_i$. The thus obtained structure $\calA$ satisfies the $\forall\forall$-constraints of $\varphi$ due to $\nu$, and the $\forall\exists$-constraints due to $\delta_2$, $\gamma_2$, and $\gamma_3$. The court structure induced by $\calA$ is $\calC$ due to $\delta_2$.

  Conversely suppose that $\calA$ is a model of $\varphi$ with induced court structure $\calC$. A model $\calA_\calC$ of $\varphi_\calC$ can be constructed from $\calA$ as follows. Let $A$ be the universe of $\calA$,  let $A_0 \df \{a_1, \ldots, a_s\}$ be its set of Kings, and let $B \df A \setminus A_0$.

	The structure $\calA_\calC$ is constructed by cloning $B$. It has universe $A_0 \cup \bigcup^2_{k = 0} \bigcup^{m}_{h=1} B_{k, h}$ where each $B_{k, h}$ is a copy of~$B$. 
  
  The relations $Q^{\calA_\calC}_i$ and  $Q^{\calA_\calC}_{h,i}$ are defined as:
  
  \[ Q^{\calA_\calC}_i \df \{ a_i\} \quad \text{and} \quad Q^{\calA_\calC}_{h,i} \df \{b' \mid \text{$b'$ is a copy of an element $b$ with $\psi_h(b, a_i)$}\}\]
  
  The relations $W^{\calA_\calC}_h$ and $N^{\calA_\calC}_k$ are defined as:
    \[W^{\calA_\calC}_h \df \bigcup^2_{k=0} B_{k, h} \quad \text{ and } \quad N^{\calA_\calC}_k \df \bigcup^{m}_{h=1} B_{k, h}\]
  
  It remains to argue that the thus constructed structure $\calA_\calC$ is a model of $\varphi_\calC$. The formulas $\delta_1$ and $\delta_2$ are satisfied, as $\calC$ is the court structure induced by $\calA$. The formula $\gamma_1$ is satisfied due to the construction of $W^{\calA_\calC}_h$ and $N^{\calA_\calC}_k$. The formula $\gamma_2$ is satisfied due the construction of $Q^{\calA_\calC}_{h,i}$. The formula $\gamma_3$ is satisfied due to the construction of $W^{\calA_\calC}_h$ and $N^{\calA_\calC}_k$, and because the $\forall\exists$-constraints of $\varphi$ are satisfied in $\calA$. For seeing that $\nu$ is satisfed we observe that it is satisfied in $\calA$, and cloning does not lead to new binary types (see (i) in the definition of cloning).
\end{proof}

The proofs of Lemma 4.3, Lemma 4.4, and Theorem 4.5  in \cite{Pratt-Hartmann18} implicitly imply the following lemma. 

\begin{lemma}
\label{lemma:spreadImpliesReduction}
 Suppose $\calK$ is a class of structures with signature $\Delta$ and $\varphi$ is a $\fotwo(\calK)$ formula over $\Delta  \uplus \Theta$. If $\varphi$ is in spread normal form then one can effectively construct an equisatisfiable $\fotwo(\calK)$ formula over signature  $\Delta \uplus \Theta'$ where $\Theta'$ only contains unary relation symbols.
\end{lemma}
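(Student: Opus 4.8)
The plan is to follow the argument implicit in Lemmas~4.3 and~4.4 and Theorem~4.5 of \cite{Pratt-Hartmann18}, observing that it manipulates only the $\Theta$-relations and never alters the $\Delta$-reduct, so it applies verbatim to an arbitrary class $\calK$. Fix $\varphi$ in spread normal form over $\Delta\uplus\Theta$. The structural features of spread normal form that I would exploit are: the mutually exclusive unary predicates $\mu_0,\dots,\mu_{m-1}$ and the cyclic, mutually exclusive $\lambda_0,\lambda_1,\lambda_2$ already force, in every model, that the witnesses chosen for a fixed element $x$ by the various existential conjuncts are pairwise distinct, and that no element is a witness of its own witness; hence every unordered pair of distinct elements is a ``witness pair'' for at most one conjunct, and never in both directions. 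Moreover, since the only binary relations to be eliminated are the arbitrary relations of $\Theta$ (the special relations remain in $\Delta$), there is no inter-pair constraint among the $\Theta$-relations: the $\Theta$-part of the $2$-type of a pair is constrained only by the $\forall\forall$-conjunct $\nu$ on that pair and, for witness pairs, by the relevant $\theta_h$, while the $\Delta$-part of every $2$-type is fixed and $\fotwo$-visible. This independence is what makes a purely local, unary encoding possible.

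Concretely I would build $\varphi'$ over $\Delta\uplus\Theta'$ where $\Theta'$ consists of (a) one unary predicate per $\Theta$-$1$-type, recording the $\Theta$-$1$-type of each element (this also captures reflexive $\Theta$-facts), and (b) for each existential conjunct, indexed by a pair $(k,h)$, and each $2$-type $\beta$ over $\Delta\uplus\Theta$, a unary predicate $R^{k,h}_\beta$ with intended meaning ``my $(k,h)$-witness $y$ satisfies $\type[x,y]=\beta$''. The sentence $\varphi'$ then asserts: (1) a $\forall\forall$-conjunct $\forall x\forall y\,(x=y\vee\mathrm{Compat}(x,y))$, where $\mathrm{Compat}$ is the quantifier-free formula over $\Delta\uplus\Theta'$ expressing that the $\Delta$-atoms holding of $(x,y)$ together with the recorded $\Theta$-$1$-types of $x$ and $y$ admit a completion to some $2$-type satisfying $\nu$; (2) purely local consistency of the $R^{k,h}_\beta$ with the recorded $1$-types and with $\nu$; (3) for each $(k,h)$, a $\forall\exists$-conjunct requiring every $\lambda_k$-element $x$ to have a $\lambda_{\lfloor k+1\rfloor}\wedge\mu_h$-element $y$ whose recorded $\Theta$-$1$-type, together with the $\Delta$-atoms holding of $(x,y)$, is consistent with the unique $\beta$ for which $R^{k,h}_\beta(x)$ holds (and with $\theta_h$); and (4) the $\exists x\,\theta(x)$ conjuncts, which are already essentially unary. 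All of these are $\fotwo$-expressible over the new signature.

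For correctness I would argue both directions. If $\calA\models\varphi$ with a witness function realising the spreading, define $\calA'$ with the same domain and the same $\Delta$-reduct, recording on each element its $\Theta$-$1$-type and, for each $(k,h)$, the $2$-type it forms with its chosen witness; every conjunct of $\varphi'$ then holds because the corresponding $2$-types in $\calA$ are $\nu$-compatible. Conversely, given $\calB'\models\varphi'$ whose $\Delta$-reduct lies in $\calK$, build $\calB$ with the same domain and the \emph{same} $\Delta$-reduct — so $\calB$'s $\Delta$-reduct is again in $\calK$, which is the only place $\calK$ enters, and it enters trivially — and define the $\Theta$-relations pairwise: on a witness pair use the $2$-type recorded by the witnessee; on every other pair pick any $2$-type that agrees with the fixed $\Delta$-atoms and the recorded $\Theta$-$1$-types and satisfies $\nu$, which exists by $\mathrm{Compat}$; reflexive $\Theta$-facts are read off from the $1$-type predicates. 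Because the cyclic $\lambda$-structure forbids an element from being a witness of its own witness and the $\mu$'s make an element's witnesses pairwise distinct, no pair receives two conflicting assignments, and since there is no inter-pair $\Theta$-constraint the choices on the remaining pairs are free; one then checks directly that $\calB\models\varphi$.

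The main obstacle I anticipate is bookkeeping rather than conceptual: carefully setting up the finitely many $1$-type- and $2$-type-indexed predicates and the Boolean ``compatibility'' conditions so that (i) they are genuinely expressible as $\fotwo$ constraints (Boolean, $\forall\forall$, $\forall\exists$) over $\Delta\uplus\Theta'$, and (ii) the $\Theta$-relations reconstructed in the converse direction are globally well-defined and make every conjunct of $\varphi$ true. Both points reduce to the two structural observations above, so the proof is, as the statement asserts, essentially that of \cite{Pratt-Hartmann18} with the class-specific steps stripped out.
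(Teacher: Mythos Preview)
Your proposal is correct and, at the high level, matches the paper: both observe that the argument from \cite{Pratt-Hartmann18} touches only the $\Theta$-relations and leaves the $\Delta$-reduct untouched, so it goes through for any class $\calK$. The paper's first proof is literally this remark.

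Where you diverge is in the concrete encoding. You introduce a large family of fresh unary predicates $R^{k,h}_\beta$, one per existential conjunct and per $2$-type, to \emph{record} the intended $2$-type with each witness, plus a ``Compat'' disjunction over $\nu$-admissible $2$-types for the $\forall\forall$-part. The paper's direct proof takes a lighter route: it simply replaces every strictly binary $\Theta$-literal (i.e.\ $R(x,y)$, $R(y,x)$ and their negations, for $R\in\Theta$) by \emph{true} inside $\nu$ and each $\theta_h$, obtaining a formula $\varphi^*$ of the same shape. Equisatisfiability is then shown by exactly the reconstruction you describe (witness pairs get a $\theta_h$-disjunct, remaining pairs a $\nu$-disjunct, with the $\lambda$/$\mu$ exclusivity preventing conflicts), and finally the reflexive atoms $R(x,x)$, $R(y,y)$ are replaced by fresh unary $\widehat R$. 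So the paper avoids your $R^{k,h}_\beta$ predicates entirely: the ``recorded $2$-type'' never needs to be named, because the relaxed $\theta_h^*$ already expresses all that can be said about the witness pair in $\Delta\uplus\Theta'$. Your version works, but carries more bookkeeping than necessary; the paper's erase-then-refill construction yields a formula of essentially the same size as $\varphi$ with only $|\Theta|$ new unary symbols, rather than one per conjunct and $2$-type.
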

\begin{proof}We provide two ways of proving the statement. 

  \paragraph{\emph{Adaption of the proof of Pratt-Hartmann \cite{Pratt-Hartmann18}.}} The proofs of Lemma 4.3 and Lemma 4.4 in \cite{Pratt-Hartmann18} are independent of the class $\calK$ of structures. Thus, in particular, Lemma 4.4 can be restated for arbitrary classes $\calK$ with a spread normal form.
  
  Suppose that $\varphi$ is an $\fotwo(\calK)$ formula in spread normal form with signature $\Delta \uplus \Theta$. By Lemma 4.4 in  \cite{Pratt-Hartmann18}, $\varphi$ can be translated into $\varphi^\circ$ such that (1) $\varphi^\circ$ does not use atoms of the form $r(x, y)$ or $r(y, x)$ for $r \in \Theta$, (2) the formula $\varphi$ implies $\varphi^\circ$, and (3) if $\varphi^\circ$ has a model over domain $A$, then so has $\varphi$. 
As the only binary atoms from $\Theta$ occurring in $\varphi^\circ$ are of the form $r(x,x)$ or $r(y,y)$, they can be replaced by new unary atoms, see Theorem~4.5 in \cite{Pratt-Hartmann18}. This yields the intended equisatisfiable formula.
  
  \paragraph{\emph{Direct proof. }}
   Let $\varphi$ be a $\fotwo(\calK)$ formula over signature $\Delta \uplus \Theta$  in spread normal form
    \begin{align*}
      & \quad\bigwedge_{\theta \in Z} \exists x \theta(x) \wedge \forall x \forall y (x = y \vee \nu(x,y)) \\ 
      \wedge &\quad \bigwedge_{k=0}^2 \bigwedge_{h = 0}^{m-1} \forall x \exists y (\lambda_k(x) \rightarrow (\lambda_{\lfloor k+1 \rfloor}(y) \wedge \mu_h(y)\wedge \theta_h(x,y)))
    \end{align*}
    where (i) $Z$ is a set of unary pure Boolean formulas; (ii) $\nu$, $\theta_0, \ldots, \theta_{m-1}$ are quantifier- and equality-free formulas; and (iii) $\lambda_0$, $\lambda_1$, and $\lambda_2$ are mutually exclusive unary pure Boolean formulas; and (iv) $\mu_0, \ldots, \mu_{m-1}$ are mutually exclusive unary pure Boolean formulas.
   
    Suppose, without loss of generality, that the formulas $\nu$ and $\theta_h$ are of the form
      \[\bigvee_i \alpha_i(x,y)\]
    where all $\alpha_i$ are binary types. 
    
    For a binary type $\alpha$, denote by $\alpha^*$ the set of formulas obtained from $\alpha$ by replacing all literals of the form $R(x, y)$, $R(y, x)$,  $\neg R(x, y)$, and $\neg R(y, x)$ for $R \in \Theta$ by \emph{true}. Further denote by $\nu^*$ and $\theta^*_h$ the formulas obtained by replacing all binary types $\alpha$ by $\alpha^*$ in $\nu$ and $\theta_h$.
    
    Consider the formula $\varphi^*$ defined as 
    \begin{align*}
      & \quad\bigwedge_{\theta \in Z} \exists x \theta(x) \wedge \forall x \forall y (x = y \vee \nu^*(x,y)) \\ 
      \wedge &\quad \bigwedge_{k=0}^2 \bigwedge_{h = 0}^{m-1} \forall x \exists y (\lambda_k(x) \rightarrow (\lambda_{\lfloor k+1 \rfloor}(y) \wedge \mu_h(y)\wedge \theta^*_h(x,y)))
    \end{align*}
    This formula only uses literals from $\Delta$, unary literals over $\Theta$ and binary literals of the form $R(x,x), R(y, y), \neg R(x, x), \neg R(y, y)$ over $\Theta$.
    
    We claim that $\varphi^*$ and $\varphi$ are equisatisfiable. Certainly a model $\calA$ of $\varphi$ is also a model of~$\varphi^*$ (as can be seen by inspecting the conjuncts of $\varphi^*$).
    
    For the converse suppose that $\calA^*$ is a model of $\varphi^*$. Then a model $\calA$ of $\varphi$ can be obtained as follows (cf. proof of Lemma 4.4 in \cite{Pratt-Hartmann18}). The structure $\calA$ has the same elements as~$\calA^*$. Further $\calA$ inherits the unary types from $\calA^*$. It remains to fix the binary types for $\calA$.
    
    Fix an element $a$. If $a$ satisfies one of the $\lambda_i$ (and since these are disjoint, no other~$\lambda_j$), then for all $h$ there exists $b$ such that $\calA \models \lambda_{i+1}(b) \wedge \mu_h(b)$. Let $\alpha^*$ be a disjunct of $\theta^*_h$ that is satisfied by $(a, b)$. Suppose $\alpha^*$ was obtained from a type $\beta$ of $\theta_h$ in the construction of~$\varphi^*$, then chose $\beta$ as the binary type of $(a, b)$. Note that $\beta$ is consistent with the (already assigned) unary types of $a$ and $b$, as $\alpha^*$ was obtained from $\beta$ by replacing strictly binary literals by true. In this way, all witnesses for $a$ are fixed. Since all $\mu_h$ are disjoint (since $\varphi$ is in spread normal form), all such witnesses are distinct. 
    
    The above procedure is repeated for all $a$. Note that for a tuple $(a,b)$ the binary types assigned for determining the witnesses of $a$ and $b$ do not conflict since witnesses satisfy $\lambda_{\lfloor i+1 \rfloor}$ which is disjoint from $\lambda_i$. Furthermore, the types assigned in this fashion do not conflict with $\nu$, because all types $\alpha$ used in the formulas $\theta_h$ must be consistent with $\nu$.
    
    It remains to assign the binary types in $\calA$ for all remaining tuples $(a, b)$. This has to be done in accordance with the condition~$\nu$. As $\calA^*$ satisfies $\varphi^*$, the tuple $(a, b)$ satisfies $\nu^*$ and hence some formula $\alpha^*$ that was obtained from some binary type $\beta$ in $\nu$. The binary type of $(a, b)$ is chosen as $\beta$. This, again, does not conflict with previously chosen unary types of $a$ and $b$.
    
    Hence $\varphi$ and $\varphi^*$ are equisatisfiable. Now, an equisatisfiable formula $\psi$ for $\varphi$  over signature  $\Delta \uplus \Theta'$ where $\Theta'$ only contains unary relation symbols can be obtained from $\varphi^*$ as follows (cf. proof of Theorem 4.5 in \cite{Pratt-Hartmann18}). All binary atoms in $\varphi^*$ with relation symbols from $\Theta$ are of the form $R(x,x)$ or $R(y,y)$. Replacing these atoms by atoms $\widehat{R}(x)$ and $\widehat{R}(y)$ where $\widehat{R}$ is a fresh unary relation symbol yields the equisatisfiable formula $\psi$ over signature $\Delta \uplus \Theta'$ where \[\Theta' \df \{ S \in \Theta \mid S \text{ is unary}\} \uplus \{ \widehat{R} \in \Theta \mid R \in \Theta \text{ is binary}\}.
      \tag*{\qedhere}
    \]
\end{proof}

\begin{proof}[Proof of Proposition \ref{proposition:binaryToUnary}]
By Lemma \ref{lemma:cloningOfTwoLinearOrders} structures with two linear order allow cloning, and therefore each $\fotwo(<_1, <_2)$ formula can be transformed into an equisatisfiable disjunction $\bigvee_i \varphi_i$ of formulas $\varphi_i$ over signature $\{<_1, <_2\} \uplus \Theta_i$ in spread normal form by  Lemma \ref{lemma:cloningImpliesSpreading}. Now, each $\varphi_i$ can be transformed into an equisatisfiable formula $\varphi'_i$ over $\{<_1, <_2\} \uplus \Theta'_i$ where each $\Theta'_i$ is unary, by  Lemma \ref{lemma:spreadImpliesReduction}. Now, Proposition \ref{proposition:binaryToUnary} follows from the fact that $\bigvee_i \varphi'_i$ is equisatisfiable to~$\bigvee_i \varphi_i$. 
\end{proof}
Recall that Theorem~\ref{theorem:maintheorem} 
implies the (general) decidability of $\emsotwo(<_1, <_2)$, see Corollary~\ref{corollary:emsotwo:twolinearorders}.
Therefore, Proposition~\ref{proposition:binaryToUnary} yields Theorem~\ref{thm:esotwo:twolinearorders}.

\section*{Acknowledgment}
	We are grateful to Nathan Lhote for insightful discussions, in particular on the inclusion of MSO-definable atoms in Theorem \ref{theorem:maintheorem}. We also thank the anonymous reviewers for helpful suggestions on improving the readability of the article.

\bibliographystyle{alphaurl}
\bibliography{bibliography}

\end{document}